\tikzset{font={\fontsize{9pt}{12}\selectfont}}
\tikzset{>=latex}
\DeclareMathOperator{\wt}{wt}
\newtheorem{theorem}{Theorem$\!$}
\newtheorem{lemma}{Lemma$\!$}
\newtheorem{claim}{Claim$\!$}
\newtheorem{corollary}{Corollary$\!$}
\newtheorem{proposition}{Proposition$\!$}
\theoremstyle{definition}
\newtheorem{construction}{Construction$\!$}
\newtheorem{definition}{Definition$\!$}
\newtheorem{example}{Example$\!$}
\newcommand{\cB}{\mathcal{B}}
\newcommand{\cC}{\mathcal{C}}
\newcommand{\cM}{\mathcal{M}}
\newcommand{\cR}{\mathcal{R}}
\newcommand{\bS}{\mathbf{S}}
\newcommand{\mybold}[1]{\bm{#1}}
\newcommand{\ba}{{\mybold{a}}}
\newcommand{\bb}{{\mybold{b}}}
\newcommand{\bc}{{\mybold{c}}}
\newcommand{\bd}{{\mybold{d}}}
\newcommand{\be}{{\mybold{e}}}
\newcommand{\bs}{{\mybold{s}}}
\newcommand{\bu}{{\mybold{u}}}
\newcommand{\bv}{{\mybold{v}}}
\newcommand{\bx}{{\mybold{x}}}
\newcommand{\by}{{\mybold{y}}}
\newcommand{\bz}{{\mybold{z}}}
\newcommand{\bepsilon}	{\mybold{\epsilon}}
\newcommand{\bphi}		{\mybold{\phi}}
\newcommand{\bpsi}		{\mybold{\psi}}
\DeclareMathOperator{\VT}{VT}
\newcommand{\defeq}{\mathrel{\stackrel{\makebox[0pt]{\mbox{\normalfont\tiny def}}}{=}}}
\title{Codes for Correcting Asymmetric Adjacent Transpositions and Deletions}
\author{\textbf{Shuche Wang}\IEEEauthorrefmark{1},
        \textbf{Van Khu Vu}\IEEEauthorrefmark{4},
and        \textbf{Vincent Y.~F.~Tan}\IEEEauthorrefmark{2}\IEEEauthorrefmark{3}\IEEEauthorrefmark{1}
        \\[0.5mm]
\IEEEauthorblockA{
\IEEEauthorrefmark{1} \small Institute of Operations Research and Analytics, National University of Singapore, Singapore \\[0.5mm]
\IEEEauthorrefmark{2} \small Department of Mathematics, National University of Singapore, Singapore\\[0.5mm]
\IEEEauthorrefmark{3} \small Department of Electrical and Computer Engineering, National University of Singapore, Singapore\\[0.5mm]
\IEEEauthorrefmark{4} \small Department of Industrial Systems Engineering
and Management, National University of Singapore, Singapore\\[0.5mm]
}
{Emails:\, shuche.wang@u.nus.edu,   isevvk@nus.edu.sg,  vtan@nus.edu.sg }

\thanks{%This work was supported in part by. 
This paper was presented in part at the 2022 IEEE Information Theory Workshop (ITW) ~\cite{wang2022codes} and 2023 IEEE International Symposium on Information Theory (ISIT) ~\cite{wang2023codes}.}
} 
\begin{document}
\maketitle

\begin{abstract}
%Owing to the vast applications in DNA-based data storage, Gabrys, Yaakobi, and Milenkovic recently proposed to study codes in the Damerau--Levenshtein metric, where both deletion and adjacent transposition errors occur. 
Codes in the Damerau--Levenshtein metric have been extensively studied recently owing to their applications in DNA-based data storage. 
In particular, Gabrys, Yaakobi, and Milenkovic (2017) designed a length-$n$ code correcting a single deletion and $s$ adjacent transpositions with at most $(1+2s)\log n$ bits of redundancy. In this work, we consider a new setting where both asymmetric adjacent transpositions (also known as right-shifts or left-shifts) and deletions may occur. We present several constructions of the codes correcting these errors in various cases. 
In particular, we design a code correcting a single deletion, $s^+$ right-shift, and $s^-$ left-shift errors with at most $(1+s)\log (n+s+1)+1$ bits of redundancy where $s=s^{+}+s^{-}$. 
In addition, we investigate codes correcting $t$ $0$-deletions, $s^+$ right-shift, and $s^-$ left-shift errors with both uniquely-decoding and list-decoding algorithms. 
Our main contribution here is the construction of
a list-decodable code with list size $O(n^{\min\{s+1,t\}})$ and with at most $(\max \{t,s+1\}) \log n+O(1)$ bits of redundancy, where $s=s^{+}+s^{-}$.
Finally, we construct both non-systematic and systematic codes for correcting blocks of $0$-deletions with $\ell$-limited-magnitude and $s$ adjacent transpositions.
\end{abstract}

\section{Introduction}
%Given a string, %$\bfx,$ 
There are four well-known string operations: a deletion (delete a symbol from the string), an insertion (insert a symbol into the string), a substitution (substitute a symbol in the string by a different symbol), and an adjacent transposition (transpose or swap two adjacent symbols).
The Levenshtein (edit) distance between two strings is the smallest number of deletions, insertions, and substitutions differing required to transform one string into another. This distance has a long history and has attracted a lot of research in computer science~\cite{levenshtein1966binary,wagner1974string,brakensiek2020constant,helberg2002multiple,mitzenmacher2009survey,cheng2018deterministic}. Codes in the Levenshtein distance have been investigated extensively recently due to theoretical interests and their applications, including racetrack memories~\cite{chee2018codes,chee2018coding,archer2020foosball} and DNA-based data storage~\cite{sima2020optimalbinary,yazdi2017portable,yazdi2015dna,cai2021correcting}.

The Damerau--Levenshtein distance between two strings is the smallest number of deletions, insertions, substitutions, and adjacent transpositions required to transform one string into another. The distance is a generalization of the well-known Levenshtein distance taking into account adjacent transpositions. Similar to the Levenshtein distance, it is possible to compute the exact Damerau--Levenshtein distance of two strings in {\em polynomial} time \cite{zhao2019string}, but it is not known if we can compute the distance in  {\em linear} time.
The Damerau--Levenshtein distance has been used in a variety of applications in the literature, such as spelling mistake correction~\cite{bard2006spelling}, comparing packet traces~\cite{cai2012touching}, determining genes and predicting gene activity~\cite{majorek2014rnase}.
%In some channels, such as DNA-based data storage ones, we observe that, besides deletions, insertions, and substitutions, there are also adjacent transpositions. Hence, there exists some recent work concerning the Damerau--Levenshtein distance which is motivated by applications to DNA-based data storage. 
%The distance is a generalization of the well-known Levenshtein distance taking into account adjacent transpositions. More precisely, the Damerau--Levenshtein metric is the smallest number of \bcomment{deletions, insertions, substitutions, and adjacent transpositions} required to transform one string into another. We note that it is possible to compute the exact Damerau--Levenshtein distance of two strings in {\em polynomial} time \cite{zhao2019string} but it is not known if we can compute the distance in  {\em linear} time. 
Recently, Gabrys, Yaaboki, and Milenkovic \cite{gabrys2017codes} proposed to study codes in the Damerau--Levenshtein distance owing to their applications in DNA-based data storage. They provided several constructions of codes correcting both deletions and adjacent transpositions. %However, these codes are not optimal in general. For example, to correct a single deletion and at most $s$ adjacent transpositions, the \bcomment{codes have} $(1+2s)\log n$ bits of redundancy~\cite{gabrys2017codes}. %In this work, we are interested in designing an optimal code correcting both deletions and multiple adjacent transpositions. 
%In \cite{guruswami2021explicit}, the authors constructed a code correcting a single deletion and a single adjacent transposition with only $2 \log n + o(\log n)$ bits of redundancy. 
%Designing an optimal code correcting both deletions and multiple adjacent transpositions has turned out to be a challenge for coding theorists in recent times. 

\subsection{Related Works}
The problem of constructing codes for correcting synchronization errors, including deletions and insertions, was first investigated by Levenshtein~\cite{levenshtein1965binary} and Ullman \cite{ullman1966near,ullman1967capabilities}. Sticky deletions/insertions are equivalent to asymmetric deletions/insertions via the Gray mapping~\cite{tallini2010efficient}. 
Due to their extensive applications, such as in flash memories~\cite{dolecek2010repetition,mahdavifar2017asymptotically}, racetrack memories~\cite{chee2018coding}, and DNA data storage systems~\cite{jain2017duplication,kovavcevic2018asymptotically}, codes correcting asymmetric deletions/insertions have garnered significant attention recently. Tallini et al.~\cite{tallini2008new,tallini2010efficient,tallini20111,tallini20131,tallini2022deletions} provided a series of results and code designs for correcting these kinds of errors. In addition, Mahdavifar and Vardy \cite{mahdavifar2017asymptotically} provided some efficient encoding/decoding algorithms for an asymptotically optimal code correcting $t$ sticky-insertions with redundancy $t\log n+o(\log n)$ and thus also for an asymptotically optimal code for correcting $t$ $0$-insertions. %since sticky insertions are equivalent to $0$-insertions via the Gray mapping~\cite{tallini2010efficient}. 
%In this work, we first consider codes correcting a combination of 0-deletion and adjacent transposition errors. Our goal is to construct codes for these errors in various cases with some efficient encoding/decoding algorithms.

List decoding was first introduced by Elias~\cite{elias1957list} and well studied in the Hamming metric~\cite{elias1991error,guruswami1998improved,guruswami2007algorithmic}. Recently, several works investigate list-decodable codes in the Levenshtein metric. Wachter-Zeh~\cite{wachter2017list} derived a Johnson-like upper bound on the maximum list size when performing list decoding in the Levenshtein metric. Haeupler, Shahrasbi, and Sudan~\cite{haeupler2017synchronization} presented list-decodable codes for insertions and deletions where the number of insertions and deletions grows with the length of the code. Guruswami and H{\aa}stad~\cite{guruswami2021explicit} proposed explicit list-decodable codes for correcting two deletions with low redundancy. Song, Cai, and Nguyen~\cite{song2022list} provided list-decodable codes for single-deletion and single-substitution with list-size two.

Codes correcting adjacent transposition errors have been extensively investigated as codes for shift errors\cite{nunnelley1990analysis,kuznetsov1993application,shamai1993bounds}. Codes correcting asymmetric shift errors have also been studied recently\cite{kovavcevic2019runlength}. %In this work, we are interested in codes correcting a combination of both asymmetric adjacent transposition errors and deletion errors. We aim to construct some codes with simple \bcomment{and} efficient encoding/decoding algorithms. 
We note that codes correcting substitutions, deletions, and their combinations have attracted a lot of interest recently~\cite{smagloy2020single,song2022list}. However, there are only a few  constructions for correcting a {\em combination} of adjacent transposition and other kinds of errors. Kl{\o}ve~\cite{klove1995codes} proposed a class of perfect constant-weight codes capable of correcting a single deletion/insertion or an adjacent transposition.  Gabrys, Yaakobi, and Milenkovic~\cite{gabrys2017codes} presented several codes correcting a combination of deletions and adjacent transpositions. If there is a single adjacent transposition or a single deletion, there exist codes correcting the error with at most $\log n +  O(\log \log n)$ bits of redundancy~\cite{gabrys2021beyond}. The best-known codes correcting a single deletion and at most $s$ adjacent transpositions require $(1+2s)\log n$ bits of redundancy~\cite{gabrys2017codes}. However, these codes are not optimal in general.  %Our main contributions are highlighted as follows.
%However,  to the best of the authors' knowledge, there are no results on codes correcting {\em  multiple} deletions and {\em multiple} adjacent transpositions.

%\subsection{Main Contribution}

{}

\subsection{Main Contribution}

In this work, we design several new families of codes for correcting different combinations of $t$ $0$-deletions and $s$ adjacent transpositions where $t$ and $s$ are constants. $0$-deletion refers to the deletion of the symbol $0$ and \emph{adjacent transposition} is divided into two types \emph{right-shift of $0$:} $01\rightarrow 10$ and \emph{left-shift of $0$:} $10\rightarrow 01$. Besides, we also investigate codes for correcting $t_{\mathrm{b}}$ blocks of $0$-deletions with $\ell$-limited-magnitude and $s$ adjacent transpositions, where at most $t_{\mathrm{b}}$ blocks of $0$s are deleted with the length of each block is at most $\ell$. All notations indicating the number of errors represents the maximum allowable number of errors unless otherwise specified throughout this paper. For brevity, we omit many \emph{at most} before the number of errors.

Different from codes constructed by weight sequences in Abelian groups~\cite{klove1995codes}, we present a code by modifying the well-known Varshamov--Tenengol'ts code, which is capable of correcting a single adjacent transposition or a single $0$-deletion with $\log n+3$ bits of redundancy. See Theorem~\ref{thm:unique1} for details.

Unlike only considering the single deletion case as in \cite{gabrys2017codes}, we construct a code correcting $t$ $0$-deletions and $s$ adjacent transpositions with at most $(t+2s) \log n+o(\log n)$ bits of redundancy. The constructed code is the first that corrects multiple $0$-deletions and multiple adjacent transpositions. See Theorem~\ref{thm:unique_ts} for details. 

We propose an asymptotically optimal code for correcting a single deletion, $s^+$ right-shifts of $0$ and $s^-$ left-shifts of $0$ with redundancy $(1+s)\log (n+s+1)+1$ bits, where $s=s^{+}+s^{-}$. Compared with the code for correcting a single deletion and $s$ adjacent transpositions with $(1+2s)\log (n+2s+1)$ bits of redundancy in \cite{gabrys2017codes}, if we know these $s$ adjacent transpositions contain $s^{+}$ right-shifts of $0$ and $s^{-}$ left-shifts of $0$, the redundancy of our code can be further reduced to at most $(1+s)\log (n+s+1)+1$ bits, where $s=s^{+}+s^{-}$. See Theorem~\ref{thm:del_rightshift} and Corollary~\ref{cor:general_shift} for details.
    
We also investigate list-decodable codes for correcting at most $t$ $0$-deletions, $s^+$ right-shifts of $0$ and $s^-$ left-shifts of $0$. To the best of our knowledge, our results are the first known list-decodable codes for the asymmetric Damerau--Levenshtein distance. See the proof of Theorem~\ref{thm:list} for the construction.

Further, we construct both non-systematic and systematic codes for correcting $t_{\mathrm{b}}$ blocks of $0$-deletions with $\ell$-limited-magnitude and $s$ adjacent transpositions. See the proof of  Theorem~\ref{thm:bchblocks} for the construction.

\subsection{Organization}
The rest of this paper is organized as follows. Section~\ref{sec:notation} provides notations and preliminaries. Section~\ref{sec:unique_t&s} presents three uniquely-decodable codes for correcting asymmetric deletions and adjacent transpositions.
Section~\ref{sec:list} proposes list-decodable codes for correcting asymmetric deletions and adjacent transpositions with low redundancy.
In Section~\ref{sec:tblocks}, we construct both non-systematic and systematic codes that are capable of correcting $t_{\mathrm{b}}$ blocks of $0$-deletions with $\ell$-limited-magnitude and $s$ adjacent transpositions. Finally, Section~\ref{sec:conclusion} concludes this paper.

%\section{Codes for correcting $t$ asymmetric deletions and $s$ adjacent transpositions}

\section{Notation and Preliminaries}\label{sec:notation}
We now describe the notations used throughout this paper. Let $\mathbb{N}$ denote the set of natural numbers and $\mathbb{Z}$ denote the set of all integers. Let $\Sigma_q$ be the finite alphabet of size $q$ and the set of all strings of finite length over $\Sigma_q$ is denoted by $\Sigma_q^{*}$, while $\Sigma_q^n$ represents the set of all sequences of length $n$ over $\Sigma_q$. Without loss of generality, we assume $\Sigma_q=\{0,1,\dotsc,q-1\}$. For two integers $i<j$, let $[i:j]$ denote the set $\{i,i+1,i+2, \ldots, j\}$. The size of a binary code $\cC\subseteq \Sigma_2^n$ is denoted as $|\cC|$ and its \emph{redundancy} is defined as $n- \log |\cC|$, where all logarithms without a specified base are assumed to be to the base 2. The code is asymptotically optimal if the ratio of the redundancy to the lower bound specified in Corollary~\ref{cor:lower_redundancy} approaches one as the length $n$ grows large. We adopt the following asymptotic notation: for any two non-negative real sequences $(x_n)_{n\ge 1}$ and $(y_n)_{n\ge 1}$, $x_n \sim y_n$ if $\lim_{n\rightarrow \infty} \frac{x_n}{y_n}=1$ and  $x_n \lesssim y_n$ if $\lim\inf_{n\rightarrow \infty} \frac{y_n}{x_n}\ge 1$.  

We write sequences with bold letters, such as $\bx$ and their elements with plain letters, e.g., $\bx=x_1\dotsm x_n$ for $\bx\in \Sigma_q^n$. The length of the sequence $\bx$ is denoted as $|\bx|$ and $\bx_{[i:j]}$ denotes the substring beginning  at index $i$ and ending at index $j$, inclusive of $i$ and $j$. The weight $\wt(\bx)$ of a sequence $\bx$ represents the number of non-zero symbols in it. A \emph{run} is a maximal substring consisting of identical symbols and $n_{\mathrm{r}}(\bx)$ denotes the number of runs of the sequence $\bx$. For a function that maps a vector to another vector, we also indicate it in bold font, e.g., $\bphi(\cdot)$. The $i$th coordinate of $\bphi(\bx)$ is denoted as $\phi(\bx)_i$. In addition, for a sequence $\bu\in\Sigma_q^n$ and $a<q$, denote $(\bu \bmod a)=(u_1 \bmod a, u_2 \bmod a, \dotsc, u_n \bmod a)$.

For a binary sequence $\bx\in\Sigma_2^n$, we can  write it uniquely as $\bx=0^{u_1}10^{u_2}10^{u_3}\dotsc 10^{u_{w+1}}$, where $w=\wt(\bx)$.
\begin{definition}
Define the function $\bphi:\Sigma_2^n\rightarrow \mathbb{N}^{w+1}$ with $\bphi(\bx)\defeq(u_1,u_2,u_3,\dotsc,u_{w+1})\in\mathbb{N}^{w+1}$, where $\bx=0^{u_1}10^{u_2}10^{u_3}\dotsc 10^{u_{w+1}}$ with $w=\wt(\bx)$. 
\end{definition}

\begin{example}
Let $\bx = 0111010100\in\Sigma_2^{10}$ with $w=5$. Then, $\bphi(\bx) =(1,0,0,1,1,2)$.
\end{example}

\begin{definition}
Define the function $\bpsi:\Sigma_2^n\rightarrow\Sigma_2^n$ such that $\bpsi(\bx)=(x_1,x_1+x_2,\dotsc,x_1+x_2+\dotsm+x_n)$.
\end{definition}

%\begin{comment}

\begin{definition}

The \emph{Lee weight} of an element $x_i\in \Sigma_q$ is defined by
\begin{equation*}
    w_{\mathrm{L}}(x_i)=\begin{cases}
        x_i, &{\text{if}}\; 0\leq x_i\leq q/2 \\
        q-x_i, &{\text{otherwise}}
    \end{cases}
\end{equation*}
For a sequence $\bx\in\Sigma_q^n$, the \textit{Lee weight} of $\bx$ is  %defined as
\begin{equation*}
    w_{\mathrm{L}}(\bx)=\sum_{i=1}^n w_{\mathrm{L}}(x_i).
\end{equation*}
Define the \textit{Lee distance} of two sequences $\bx,\bx'\in\Sigma_q^n$ as
\begin{equation*}
    d_{\mathrm{L}}(\bx,\bx')=w_{\mathrm{L}}(\bx-\bx').
\end{equation*}
\end{definition}

\begin{example}
Let $\bx=1405234\in\Sigma_6^7$. Then, $w_{\mathrm{L}}(\bx)=1+2+0+1+2+3+2=11$.
\end{example}
\begin{example}
Let $\bx=1405234\in\Sigma_6^7$ and $\bx'=0305333\in\Sigma_6^7$. Then, $\bx-\bx'=1100501$ and $d_{\mathrm{L}}(\bx,\bx')=w_{\mathrm{L}}(\bx-\bx')=4$.
\end{example}

%For a given $\bx\in\Sigma_2^n$, the retrieved sequence through exactly $t$ $0$-deletions and exactly $s$ adjacent transpositions is defined as $D_{t,s}(\bx)\in\Sigma_2^{n-t}$.

\begin{definition}
%For any $\bx\in\Sigma_2^n$, denote $\cB_{t,s}(\bx)=\{\by|\by=D_{t',s'}(\bx),\forall t'\in[1:t], \forall s'\in[1:s]\}$ as the error ball centered at $\bx$ under \bcomment{at most} $t$ $0$-deletions and \bcomment{at most} $s$ adjacent transpositions.
Given a sequence $\bx \in \Sigma_2^n$ and two positive integers $t$ and $s,$ let $\cB_{t,s}(\bx)$ be the set of all binary sequences that can be obtained from $\bx$ with at most $t$ $0$-deletions and at most $s$ adjacent transpositions. The set $\cB_{t,s}(\bx)$ is called the {\em error ball} centered at $\bx$ with   radii $t$ and $s$.
\end{definition}

The code $\cC_{t,s}(n)$ is a uniquely-decodable code for correcting at most $t$ $0$-deletions and at most $s$ adjacent transpositions, if $\cB_{t,s}(\bc_1)\cap \cB_{t,s}(\bc_2)=\emptyset$ for all  $\bc_1,\bc_2\in\cC_{t,s}(n)$, $\bc_1\neq \bc_2$. The code $\cC_{t,s}(n,L)$ is a list-decodable code for correcting at most $t$ $0$-deletions and at most $s$ adjacent transpositions with list size $L$ if for any received sequence $\bx'$, there exist at most $L$ codewords  $\{\bc_1,\dotsc,\bc_{L}\}$ in $\cC_{t,s}(n,L)$ such that $\bx' \in \cB_{t,s}(\bc_i)$, where $i\in[1:L]$.
%obtained by \bcomment{at most} $t$ $0$-deletions and \bcomment{at most} $s$ adjacent transpositions.

\begin{example}
Let $\bx=0111010100$, the first and last $0$ bits are deleted and two pairs of ((4th, 5th) and (7th, 8th)) adjacent bits are transposed in $\bx=\bcancel{0}11\underline{1}\underline{0}1\underline{0}\underline{1}0\bcancel{0}$. Then, $\bx'=11011100\in \cB_{2,2}(\bx)$.
\end{example}

\begin{claim}
If a $0$-deletion occurs in $\bx$ and we receive $\bx'$, there is an index $i$ such that $\bphi(\bx)_i-1=\bphi(\bx')_i$.
\end{claim}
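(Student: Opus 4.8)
The plan is to track what a single $0$-deletion does to the run-length representation $\bphi(\bx)=(u_1,u_2,\dotsc,u_{w+1})$. Write $\bx=0^{u_1}10^{u_2}1\dotsm 10^{u_{w+1}}$ with $w=\wt(\bx)$. A $0$-deletion removes one symbol equal to $0$, hence it does not change the weight: $\wt(\bx')=\wt(\bx)=w$, so $\bphi(\bx')\in\mathbb{N}^{w+1}$ as well. Thus both $\bphi(\bx)$ and $\bphi(\bx')$ are vectors of the same length $w+1$, and it makes sense to compare them coordinatewise.

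First I would observe that the deleted $0$ lies inside exactly one of the $w+1$ maximal blocks of $0$s, say the $i$th block $0^{u_i}$ (in particular $u_i\ge 1$, since this block is nonempty). Deleting one $0$ from this block turns $0^{u_i}$ into $0^{u_i-1}$ while leaving every other block $0^{u_j}$ ($j\ne i$) and every intervening $1$ untouched. Concatenating the unchanged pieces, the received word is $\bx'=0^{u_1}1\dotsm 10^{u_i-1}1\dotsm 10^{u_{w+1}}$, which is exactly the canonical form with parameter vector $(u_1,\dotsc,u_{i-1},u_i-1,u_{i+1},\dotsc,u_{w+1})$. By uniqueness of the representation $\bx=0^{v_1}10^{v_2}1\dotsm10^{v_{w+1}}$ for a binary string of weight $w$ (stated just before the definition of $\bphi$), this is precisely $\bphi(\bx')$. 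Hence $\bphi(\bx')_i=\bphi(\bx)_i-1$ and $\bphi(\bx')_j=\bphi(\bx)_j$ for all $j\ne i$, which in particular gives the claimed index $i$ with $\bphi(\bx)_i-1=\bphi(\bx')_i$.

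The only subtlety — and the one place to be slightly careful — is the bookkeeping at the boundary cases: when the deleted $0$ is the very first symbol (it lies in the block $0^{u_1}$, possibly with other $1$s or $0$s following) or the very last symbol (block $0^{u_{w+1}}$). In both cases the argument is identical: the affected block still just loses one $0$ and the surrounding structure is preserved, so no coordinate other than the $i$th is disturbed. One should also note that if $u_i=1$ the block $0^{u_i}$ collapses to the empty string $0^0$, which is still a legitimate value of the $i$th coordinate (namely $0$), so no runs are "merged" in a way that would change $w$ — deleting a $0$ can never merge two $1$-runs. I do not anticipate a genuine obstacle here; the result is essentially immediate from the definition of $\bphi$ and the fact that $0$-deletions preserve weight, and the entire content is making the "which block contains the deleted symbol" argument precise.
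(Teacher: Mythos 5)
Your proof is correct and is essentially the argument the paper implicitly relies on: the paper states this claim without proof, treating it as immediate from the definition of $\bphi$, and your write-up simply makes explicit the observation that the deleted $0$ lies in exactly one of the $w+1$ zero-blocks, which shrinks by one while the weight and all other coordinates are preserved. No gaps.
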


\begin{claim}
\label{cla:change_trans}
Assume that an adjacent transposition in $\bx$ involves the $i$th and $(i+1)$st bits. Then, the corresponding changes in $\bphi(\bx)$ are as follows:
\begin{enumerate}
    \item $10\rightarrow 01$: $(\phi(\bx)'_i,\phi(\bx)'_{i+1})=(\phi(\bx)_i+1,\phi(\bx)_{i+1}-1)$.
    \item $01\rightarrow 10$: $(\phi(\bx)'_i,\phi(\bx)'_{i+1})=(\phi(\bx)_i-1,\phi(\bx)_{i+1}+1)$.
\end{enumerate}
\end{claim}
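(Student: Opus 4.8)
The plan is to work directly with the unique run-length representation $\bx = 0^{u_1}10^{u_2}1\cdots 10^{u_{w+1}}$ with $w=\wt(\bx)$, under which $\phi(\bx)_j = u_j$ is the length of the $j$th maximal block of consecutive $0$s (counting the possibly-empty leading block $u_1$ and trailing block $u_{w+1}$). First I would note that an adjacent transposition at positions $(i,i+1)$ changes $\bx$ only when $x_i\neq x_{i+1}$, leaving precisely the two cases $x_i x_{i+1}=10$ (becoming $01$) and $x_i x_{i+1}=01$ (becoming $10$); since neither alters the number of $1$s, we have $\wt(\bx')=w$, so $\bphi(\bx')$ again lies in $\mathbb{N}^{w+1}$ and the two vectors can be compared coordinate by coordinate.

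For $10\to 01$: the bit $x_i=1$ is the $k$th one of $\bx$ for some $k\in[1:w]$, and $x_{i+1}=0$ must be the first symbol of the block $0^{u_{k+1}}$, so in particular $u_{k+1}\geq 1$. Swapping positions $i$ and $i+1$ slides this $1$ one place to the right, i.e., moves one $0$ from the block just after the $k$th $1$ into the block just before it: the local factor $0^{u_k}10^{u_{k+1}}$ becomes $0^{u_k+1}10^{u_{k+1}-1}$ while every other $0$-block is left intact. Hence $\phi(\bx')_k=\phi(\bx)_k+1$ and $\phi(\bx')_{k+1}=\phi(\bx)_{k+1}-1$, with all other coordinates unchanged. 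The case $01\to 10$ is the mirror image: $x_{i+1}=1$ is the $k$th one, $x_i=0$ is the last symbol of $0^{u_k}$ so $u_k\geq 1$, the swap slides the $1$ one place left, $0^{u_k}10^{u_{k+1}}$ becomes $0^{u_k-1}10^{u_{k+1}+1}$, and we obtain $\phi(\bx')_k=\phi(\bx)_k-1$, $\phi(\bx')_{k+1}=\phi(\bx)_{k+1}+1$. Renaming the run index $k$ as $i$ yields the two stated identities.

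I do not expect a real obstacle: the claim is essentially the definition of $\bphi$ unwound. The only points that warrant a sentence of care are that the run-length representation treats the leading and trailing $0$-blocks ($u_1$ and $u_{w+1}$, each possibly $0$) on the same footing as the interior ones, so there is no genuine boundary case, and that a $0$-block may legitimately shrink to length $0$ — namely when $u_{k+1}=1$ in the first case or $u_k=1$ in the second, which simply makes two $1$s become adjacent and is still consistent with the formulas. I would write out one such degenerate sub-case explicitly as a sanity check and invoke symmetry for its counterpart.
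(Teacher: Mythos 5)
Your proof is correct. The paper itself states this claim without proof (it only follows it with a worked example), so there is nothing to compare against beyond the definition of $\bphi$; your argument --- locating the swapped $1$ as the $k$th one of $\bx$ and observing that the swap transfers a single $0$ between the two $0$-blocks adjacent to that $1$ --- is the natural and essentially only way to verify it. One point you handled well and that deserves to be made explicit: the claim's statement reuses the letter $i$ both for the bit position in $\bx$ and for the coordinate of $\bphi(\bx)$, which are different indices (in the paper's own example the transposition is at bit positions $6,7$ but affects coordinates $4,5$ of $\bphi$); your renaming of the run index $k$ to $i$ is the correct reading. Your remarks that the leading and trailing blocks $u_1,u_{w+1}$ need no special treatment and that a block may legitimately shrink to length $0$ close the only potential loose ends.
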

\begin{example}
Assume $\bx=01110{\bf{1}}{\bf{0}}100$, $\bphi(\bx)=(1,0,0,{\bf{1}},{\bf{1}},2)$ and the adjacent transpositions occur at indices of $6$ and $7$ in $\bx$. Then, $\bx'=01110{\bf{0}}{\bf{1}}100$ and $\bphi(\bx')=(1,0,0,{\bf{2}},{\bf{0}},2)$, where $(\phi(\bx')_4,\phi(\bx')_5)=(\phi(\bx)_4+1,\phi(\bx)_5-1)$.
\end{example}

The well-known Varshamov--Tenengol'ts (VT) code will be used in this paper, and we will recall the following lemma. For $\bx\in\Sigma_2^n$, we define the VT syndrome of a sequence $\bx$ as $\VT(\bx)=\sum_{i=1}^n ix_i$.

\begin{lemma}[Varshamov--Tenengol'ts (VT) code \cite{levenshtein1966binary}]
For integers $n$ and $a \in [0:n]$, 
\begin{equation*}
    \VT_a(n)=\left\{ \bx\in\Sigma_2^n:\VT(\bx)\equiv a\bmod (n+1) \right\}
\end{equation*}
is capable of correcting a single deletion.
\end{lemma}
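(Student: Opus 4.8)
The plan is to reduce the lemma to the existence of an explicit single-deletion decoder: I would show that given any $\by\in\Sigma_2^{n-1}$ obtained from some (unknown) codeword $\bx\in\VT_a(n)$ by deleting one bit, the pair $(\by,a)$ determines $\bx$ uniquely. Since a decoder with this property returns a single string, two distinct codewords of $\VT_a(n)$ cannot both yield the same $\by$ under a single deletion, which is exactly the single-deletion-correcting property. (If the received word already has length $n$, no deletion occurred and it is output unchanged, so it suffices to treat the length-$(n-1)$ case.)

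The first step is to record the effect of a deletion on the VT syndrome. If the bit in position $p$ of $\bx$ is deleted, so that $y_j=x_j$ for $j<p$ and $y_j=x_{j+1}$ for $j\ge p$, a direct computation gives
\[
\VT(\bx)-\VT(\by)=p\,x_p+\sum_{i>p}x_i=p\,x_p+R_1,
\]
where $R_1$ is the number of $1$'s of $\bx$ lying strictly to the right of position $p$. The crucial elementary observation is that this difference always lies in $[0:n]$ (it is $R_1\le n$ if $x_p=0$, and $p+R_1\le n$ if $x_p=1$), so the decoder can recover its exact value, namely $D\defeq\big(a-\VT(\by)\big)\bmod(n+1)\in[0:n]$, with no wrap-around ambiguity.

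The second step is to show that $D$ and the weight $w=\wt(\by)$ determine both the value and (up to a harmless ambiguity) the location of the deleted bit. If $x_p=0$ then $D=R_1\le w$; if $x_p=1$ then, writing $L_0$ for the number of $0$'s of $\by$ to the left of the deletion site, a short bookkeeping identity yields $D=w+1+L_0\ge w+1$. Hence the deleted symbol is $0$ exactly when $D\le w$, and is $1$ exactly when $D>w$. In the first case one reinserts a $0$ into $\by$ so that exactly $D$ ones lie to its right; in the second case one reinserts a $1$ so that exactly $D-w-1$ zeros lie to its left. One checks that the prescribed counts fall in the admissible ranges (so that a valid insertion point always exists), and that the reconstructed string equals the original $\bx$.

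I do not expect a genuine difficulty here — this is the classical Levenshtein argument — but the point requiring the most care is the well-definedness of the reinsertion: I would verify that ``insert a $0$ with $D$ ones to its right'' (respectively ``insert a $1$ with $D-w-1$ zeros to its left'') picks out a single string. This holds because sliding the inserted symbol across a run boundary strictly changes the relevant count ($R_1$ or $L_0$), whereas sliding it within a run of equal symbols does not change the resulting word at all; combined with the case analysis above, this produces a unique output $\bx$, completing the proof.
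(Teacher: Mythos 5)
Your argument is correct and is precisely the classical Levenshtein decoding argument for the VT code; the syndrome-difference computation, the threshold test $D\le w$ versus $D>w$, and the run-invariance observation that makes the reinsertion well defined are all sound. The paper itself states this lemma as a known result with a citation and offers no proof, so there is nothing further to compare against.
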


 %We say that a set of errors $\cS_1$ is equivalent to another set of errors $\cS_2$ if the following holds: If there exists a code $\cC$ that is capable of correcting the errors in $\cS_1$, $\cC$ is also capable of correcting the errors in $\cS_2$ and vice versa.

\iffalse
\begin{proposition}\label{pro:equiv_delsub}
    Given a sequence $\bx\in\Sigma_2^n$, if $t$ $0$-deletions and $s$ adjacent transpositions occur in $\bx$, then $t$ deletions and $s$ substitutions occur in $\bpsi(\bx)$. In the sense that if there exists a code for correcting the latter, there also exists a code of the same size  for correcting the former.  
\end{proposition}

\begin{proof}
Notice that $\bpsi(\bx)_i\neq \bpsi(\bx)_{i-1}$ if $x_i = 1$. Therefore, if a $0$ at the index $j$ is deleted from $\bx$, the corresponding $\bpsi(\bx)_j$ will also be deleted, without affecting any other elements in $\bpsi(\bx)$. Furthermore, we notice that if an adjacent transposition occurs in $x_{i}x_{i+1}$, then $\bpsi(\bx)_{i}\bpsi(\bx)_{i+1}\rightarrow\overline{\bpsi(\bx)}_{i}\bpsi(\bx)_{i+1}$, where $\overline{\bpsi(\bx)}_{i}=1-\bpsi(\bx)_{i}$. Therefore, $t$ $0$-deletions and $s$ adjacent transpositions occur in $\bx$ if and only if $t$ deletions and $s$ substitutions occur in $\bpsi(\bx)$.
\end{proof}
\fi

Define $\cM_{t,s}(n)$ as the maximal size of binary codes for correcting $t$ $0$-deletions and $s$ adjacent transpositions. Lemma~\ref{pro:upperbound} provides an asymptotic upper bound on $\cM_{t,s}(n)$. A detailed proof is included in Appendix~\ref{app:proof_uppersize}.
\begin{lemma}\label{pro:upperbound}%[Levenstein~\cite{levenshtein1966binary}]
For $n$ large enough, $\cM_{t,s}(n) \lesssim \frac{2^n}{n^{s+t}}\cdot s!\cdot t!\cdot2^{s+2t}$.
\end{lemma}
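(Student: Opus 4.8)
The plan is a sphere‑packing (volume) argument in the spirit of classical non‑existence bounds for deletion codes, combined with a typicality reduction. First, let $\cC$ be an optimal code, so $|\cC|=\cM_{t,s}(n)$; by definition such a code is uniquely‑decodable, hence the error balls $\cB_{t,s}(\bc)$, $\bc\in\cC$, are pairwise disjoint. For $\bc\in\cC$ set $D(\bc)\defeq\cB_{t,s}(\bc)\cap\Sigma_2^{n-t}$; since an adjacent transposition preserves length and a $0$‑deletion decreases it by one, $D(\bc)$ is exactly the set of words reachable from $\bc$ by \emph{exactly} $t$ $0$‑deletions and \emph{at most} $s$ adjacent transpositions. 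The $D(\bc)$ are pairwise disjoint subsets of $\Sigma_2^{n-t}$, so $\sum_{\bc\in\cC}|D(\bc)|\le 2^{n-t}$; the bound will then follow once $|D(\bc)|$ is shown to be large for all but a negligible set of $\bc$.

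Next I would carry out a typicality reduction. Call $\bc$ \emph{typical} if $n_{\mathrm r}(\bc)\ge n/2-n^{2/3}$. The number of positions $i$ with $x_i\ne x_{i+1}$ is a sum of Bernoulli$(1/2)$ variables that splits into two families of independent ones (odd and even $i$), so a Hoeffding bound gives at most $2^n e^{-\Omega(n^{1/3})}=o(2^n/n^{s+t})$ atypical words. A typical $\bc$ has at least $n/4-o(n)$ runs of $0$s and at least $n/2-o(n)$ run boundaries. Splitting $\cC=\cC_{\mathrm{typ}}\sqcup\cC_{\mathrm{atyp}}$, we have $|\cC_{\mathrm{atyp}}|=o(2^n/n^{s+t})$.

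The heart of the proof is a lower bound $|D(\bc)|\ge(1-o(1))\,n^{s+t}/(2^{s+2t}\,t!\,s!)$, uniform over typical $\bc$. I would work in the run‑length representation $\bphi(\bc)=(u_1,\dots,u_{w+1})$, $w=\wt(\bc)$, which is governed by the two claims in the excerpt: by the claim preceding Claim~\ref{cla:change_trans} a $0$‑deletion subtracts $1$ from some coordinate $u_i\ge 1$, and by Claim~\ref{cla:change_trans} a transposition at the $j$‑th $1$ changes $(u_j,u_{j+1})$ by $(+1,-1)$ or $(-1,+1)$; moreover $\wt$ is preserved and $\bphi$ is injective. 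One then chooses $t$ of the $0$‑runs of $\bc$ and $s$ of its run boundaries subject to the $O(st)=O(1)$ constraints that all coordinates of $\bphi$ touched by these operations are pairwise at distance $\ge 3$; as $t,s$ are constant this excludes only an $o(1)$ fraction, so the number of admissible choices is at least
\begin{align*}
(1-o(1))\binom{n/4-o(n)}{t}\binom{n/2-o(n)}{s}
&\ge (1-o(1))\,\frac{n^{s+t}}{2^{s+2t}\,t!\,s!}.
\end{align*}
For each admissible choice the associated operations change every coordinate of $\bphi(\bc)$ by at most $1$, and the separation guarantees that, compared with $\bphi(\bc)$, the output reveals unambiguously which coordinates were decremented (an isolated $-1$) and at which boundary a transposition of which type occurred (an adjacent $+1,-1$ or $-1,+1$). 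Hence distinct admissible choices give distinct words in $D(\bc)$, establishing the claim.

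Combining the three steps,
\begin{align*}
\cM_{t,s}(n)=|\cC_{\mathrm{typ}}|+|\cC_{\mathrm{atyp}}|
&\le \frac{2^{n-t}}{(1-o(1))\,n^{s+t}/(2^{s+2t}\,t!\,s!)}+o\!\left(\frac{2^n}{n^{s+t}}\right)\\
&=(1+o(1))\,\frac{2^n}{n^{s+t}}\,s!\,t!\,2^{s+t},
\end{align*}
which is at most $\frac{2^n}{n^{s+t}}\,s!\,t!\,2^{s+2t}$ since $t\ge 0$, so $\cM_{t,s}(n)\lesssim\frac{2^n}{n^{s+t}}\,s!\,t!\,2^{s+2t}$ (in fact this argument yields the stronger bound with $2^{s+t}$ in place of $2^{s+2t}$). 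The hard part will be the third step: arranging the separation conditions so that the chosen operations act on disjoint coordinate blocks and the modified $\bphi$ encodes them uniquely, while losing only a $1-o(1)$ factor in the count — the run‑count concentration in the second step is a comparatively routine point.
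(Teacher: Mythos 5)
Your proposal is correct and follows essentially the same sphere-packing argument as the paper's proof in Appendix~A: both lower-bound the error ball of a (run-count) typical codeword by counting choices of $t$ of the $\sim n/4$ zero-runs for the deletions and $s$ of the $\sim n/2$ run boundaries for the transpositions (the paper's $\sum_{s^{-}}\binom{n/4}{s^{-}}\binom{n/4}{s-s^{-}}$ equals your $\binom{n/2}{s}$ by Vandermonde), then divide the ambient space by this volume. You are somewhat more explicit about the distinctness of the perturbed words and the concentration of the run count, and you pack into $\Sigma_2^{n-t}$ rather than $\Sigma_2^{n}$, which is why you obtain the slightly sharper constant $2^{s+t}$ in place of $2^{s+2t}$; this only strengthens the stated bound.
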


\iffalse
\begin{proof}
From Proposition~\ref{pro:equiv_delsub}, $t$ $0$-deletions and $s$ adjacent transpositions in $\bx$ are equivalent to $t$ deletions and $s$ substitutions in $\bpsi(\bx)$. An asymptotic bound for the size of any codes that are capable of correcting up to $r$ deletions, insertions and substitutions has been shown in \cite{levenshtein1966binary}, which is $(r!\cdot2^n)/n^{r}$. Since the function $\bpsi$ is a one-to-one mapping, an upper bound of binary codes for correcting $t$ $0$-deletions and $s$ adjacent transpositions is derived when substituting $r=t+s$.
\end{proof}
\fi

From Lemma~\ref{pro:upperbound}, we can obtain a lower bound on the minimal redundancy of any of the code for correcting $t$ $0$-deletions and $s$ adjacent transpositions.

\begin{corollary}\label{cor:lower_redundancy}
The minimal redundancy of any binary code for correcting $t$ $0$-deletions and $s$ adjacent transpositions is lower bounded by $(t+s)\log n-O(1)$.%\footnote{The difference between the lower bound of the redundancy for correcting general $t$ deletions and $t$ $0$-deletions is only $O(1)$.~\cite{dolecek2010repetition}}
\end{corollary}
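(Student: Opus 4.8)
The plan is to deduce Corollary~\ref{cor:lower_redundancy} directly from the upper bound on the code size given in Lemma~\ref{pro:upperbound}, since redundancy and code size are related by $r = n - \log|\cC|$. The key observation is that for \emph{any} binary code $\cC$ correcting $t$ $0$-deletions and $s$ adjacent transpositions, we have $|\cC| \le \cM_{t,s}(n)$ by definition of $\cM_{t,s}(n)$ as the maximal size of such codes.

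First I would take logarithms of the bound in Lemma~\ref{pro:upperbound}: for $n$ large enough,
\begin{equation*}
\log |\cC| \le \log \cM_{t,s}(n) \le n - (s+t)\log n + \log\bigl(s!\cdot t!\cdot 2^{s+2t}\bigr).
\end{equation*}
Since $t$ and $s$ are fixed constants, the term $\log(s!\cdot t!\cdot 2^{s+2t})$ is a constant, which I will absorb into the $O(1)$ term. Rearranging gives the redundancy bound
\begin{equation*}
n - \log|\cC| \ge (t+s)\log n - O(1),
\end{equation*}
which is exactly the claimed statement. Strictly speaking, Lemma~\ref{pro:upperbound} is an asymptotic ($\lesssim$) statement, so I would phrase the argument so that the inequality holds for all sufficiently large $n$ (equivalently, up to adjusting the implied constant in $O(1)$), which is all that is needed for a lower bound of the form $(t+s)\log n - O(1)$.

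There is essentially no obstacle here: the corollary is a routine restatement of Lemma~\ref{pro:upperbound} in terms of redundancy rather than size. The only point requiring a modicum of care is making the passage from the asymptotic multiplicative bound $\cM_{t,s}(n) \lesssim \frac{2^n}{n^{s+t}}\cdot s!\cdot t!\cdot 2^{s+2t}$ to an additive bound on $\log \cM_{t,s}(n)$ precise; this is handled by noting that $x_n \lesssim y_n$ implies $\log x_n \le \log y_n + o(1)$, and hence $\log x_n \le \log y_n + O(1)$, so the constant factor and the lower-order slack both disappear into the $O(1)$. No new ideas beyond Lemma~\ref{pro:upperbound} are required.
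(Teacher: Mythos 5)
Your proposal is correct and matches the paper's (implicit) argument: the paper derives Corollary~\ref{cor:lower_redundancy} directly from Lemma~\ref{pro:upperbound} by exactly this routine passage from the size bound to the redundancy bound, absorbing the constant factor $s!\cdot t!\cdot 2^{s+2t}$ and the asymptotic slack into the $O(1)$ term. Your extra care in converting the multiplicative $\lesssim$ statement into an additive bound on $\log\cM_{t,s}(n)$ is appropriate and introduces no gap.
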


%Note that a single $0$-deletion or a single adjacent transposition in $\bx$ is equivalent to a single deletion or a single substitution in $\bpsi(\bx)$. The redundancy of the code for correcting the latter error is at least $\log n$~\cite{levenshtein1966binary}. 

\section{Uniquely-Decodable Codes for asymmetric deletions and adjacent transpositions}\label{sec:unique_t&s}
In this section, we present three uniquely-decodable codes for correcting asymmetric deletions and adjacent transpositions.

\subsection{Codes for correcting a single $0$-deletion or a single adjacent transposition}
In this subsection, we present the first construction of a code correcting a single $0$-deletion or a single adjacent transposition.

\begin{construction}\label{const.1or1}
Let $p$ be a prime such that $p>4n$ and $a\in[0:p-1]$. The code $\cC_{1}(n,a;p)$ is defined as the set of all $\bx\in\Sigma_2^n$ such that the syndrome
\begin{equation*}
    \bS(\bx)=\sum_{i=1}^{w+1}i^2\phi(\bx)_i\equiv a\bmod p
\end{equation*}
where $w=\wt(\bx)$.
\end{construction}
\begin{theorem}\label{thm:unique1}
The code $\cC_{1}(n,a;p)$ in Construction \ref{const.1or1} can correct a single $0$-deletion or a single adjacent transposition with the redundancy of at most $\log n+3$ bits.% with at most $\log n+2$ redundancy.
\end{theorem}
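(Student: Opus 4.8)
The plan is to show that the syndrome $\bS(\bx)=\sum_{i=1}^{w+1}i^2\phi(\bx)_i \bmod p$, together with the weight $\wt(\bx)$ which is itself recoverable, pins down $\bx$ uniquely after a single $0$-deletion or single adjacent transposition. The first observation is a decoder can always tell which type of error occurred: a $0$-deletion decreases the length by $1$, whereas an adjacent transposition preserves the length. So I would split into two cases.

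\textbf{Case 1: a single adjacent transposition.} By Claim~\ref{cla:change_trans}, a transposition at positions $i,i+1$ either sends $(\phi(\bx)_i,\phi(\bx)_{i+1})$ to $(\phi(\bx)_i+1,\phi(\bx)_{i+1}-1)$ (a $10\to01$ move) or to $(\phi(\bx)_i-1,\phi(\bx)_{i+1}+1)$ (a $01\to10$ move); here the index $i$ in $\bphi$ runs over $[1:w+1]$ and $w=\wt(\bx)$ is unchanged. The induced change in $\bS$ is $\pm(i^2-(i+1)^2)=\mp(2i+1)$. Since $1\le i\le w\le n$, the magnitude $2i+1$ lies in $[3,2n+1]$, hence the nonzero value $\bS(\bx)-\bS(\bx')\equiv \pm(2i+1)\pmod p$ has absolute value at most $2n+1<p$, so it is determined as an integer in $(-p,p)$; its sign tells us the direction of the shift and its magnitude tells us $i$. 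Thus the exact position and type of the transposition are recovered, and we invert it. I should check the two directions give distinct residues: $2i+1$ and $-(2j+1)$ can only coincide mod $p$ if $2(i+j)+2\equiv 0$, impossible for $p>4n\ge 2(i+j)+2$. So Case 1 is clean.

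\textbf{Case 2: a single $0$-deletion.} By the earlier claim, deleting a $0$ decreases exactly one coordinate $\phi(\bx)_i$ by $1$ and leaves $w$ unchanged, so $\bphi(\bx')$ has the same length $w+1$. The change in the syndrome is $\bS(\bx)-\bS(\bx')\equiv i^2 \pmod p$ for the unknown $i\in[1:w+1]$. Since $i^2\le (n+1)^2$, to recover $i$ uniquely from $i^2\bmod p$ I need $i^2$ to be determined as an integer, i.e.\ I need $p>(n+1)^2$ — but the hypothesis is only $p>4n$, which is too weak. This is the main obstacle, and it signals that the decoder must use more than just $\bS\bmod p$ alone. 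The resolution I anticipate: the decoder also knows the received word $\bx'$, which already constrains where a deleted $0$ could have been — a $0$ deleted from a run of $0$s of length $\ell$ in $\bx$ corresponds to observing a run of length $\ell-1$ in $\bx'$, and all positions within one run of $\bx'$ are indistinguishable as deletion sites. Standard VT-type reasoning then shows the ambiguity is only over which \emph{run} the deletion came from, and between two candidate runs the difference of the two possible $i^2$ values — or more precisely the difference of syndromes — stays bounded by something like $O(n)$ once one accounts for the fixed structure of $\bx'$; so the condition $p>4n$ (rather than $p>n^2$) suffices. Concretely I would: list the candidate reconstructions $\bx^{(1)},\dots$ obtained by inserting a $0$ into each distinct run-boundary of $\bx'$, write $\bS(\bx^{(k)})-\bS(\bx^{(j)})$ as a telescoping expression in the $i$-values, bound it below in absolute value by comparing consecutive candidates, and conclude that at most one candidate matches the target residue $a$. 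The delicate point — and where I expect to spend the most effort — is verifying that this difference is nonzero mod $p$ for \emph{every} pair of distinct candidates, using $p>4n$; this likely parallels the classical proof that the VT syndrome corrects one deletion, adapted to the squared weights $i^2$ and the $\bphi$ representation. Once both cases are handled, the redundancy bound is immediate: the code is one of $p$ syndrome classes, so some class has size at least $2^n/p$, giving redundancy at most $\log p \le \log(4n)+o(1)$; choosing $p$ as the least prime exceeding $4n$ (which by Bertrand is below $8n$) yields redundancy at most $\log n + 3$ for $n$ large, and I would note the small-$n$ cases separately if needed.
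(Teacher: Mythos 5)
Your handling of the transposition case and the redundancy count are correct and essentially identical to the paper's proof. The genuine gap is in your Case 2 ($0$-deletion). You assert that to recover $i$ from $a-a'\equiv i^2\pmod p$ one would need "$i^2$ to be determined as an integer, i.e.\ $p>(n+1)^2$," and on that basis you abandon the direct argument in favour of an unfinished VT-style sketch (enumerating candidate insertion runs, a telescoping bound you only describe as "something like $O(n)$," and a pairwise-distinctness claim you explicitly defer). But the premise is false: you do not need to recover the integer $i^2$, only to know that the map $i\mapsto i^2\bmod p$ is injective on the range of possible deletion coordinates. That injectivity is immediate from the factorization $i^2-j^2=(i-j)(i+j)$: for distinct $i,j\in[1:w+1]\subseteq[1:n+1]$ one has $0<|i-j|<p$ and $0<i+j\le 2(n+1)<p$ (using $p>4n$), so $p$ divides neither factor and $i^2\not\equiv j^2\pmod p$. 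This is exactly the paper's one-line argument, and it closes the deletion case directly: the decoder computes $a-a'\bmod p$, finds the unique $i$ with $i^2\equiv a-a'$, and reinserts a $0$ into the $i$th coordinate of $\bphi(\bx')$. The detour you propose would, if carried out, reduce to proving precisely this pairwise-distinctness statement, so as written your proposal is missing the central step of Case 2 rather than offering a genuinely different route.
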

\begin{proof}
Let $\bx =x_1\dotsm x_n \in \Sigma_2^n$ be the original vector and $\bx'$ be the received vector with a single $0$-deletion or a single adjacent transposition. 

If $\bx' \in \Sigma_2^{n-1}$, then there is a single $0$-deletion. In this case, we can compute the vector $\bphi(\bx')$ and $a'<p$ such that $a'= \bS(\bx') \bmod p$. We note that $d_{\mathrm{L}}(\bphi(\bx),\bphi(\bx'))=1$ and there is an index $i$ such that  $\phi(\bx)_i-1=\phi(\bx')_i$. Hence, $\bS(\bx)-\bS(\bx')=i^2$. That is, $a-a'=i^2 \bmod p$ with $a=\bS(\bx)\bmod p$. Since $i^2 -j^2 \neq 0 \bmod p$ for all $i\neq j$, $i,j<n<p/4$, there exists a unique index $i$ such that $a-a'=i^2 \bmod p.$ Thus, the error can be located and corrected. 

If $\bx' \in \Sigma_2^{n}$, then there is no $0$-deletion and at most a single adjacent transposition. Similar to the previous case, we also compute the vector $\bphi(\bx')$ and $a'<p$ such that $a'= \bS(\bx') \bmod p$. If an adjacent transposition occurs, it is one of the following two types of errors: $10\rightarrow 01$ and $01\rightarrow 10$.

\begin{itemize}
    \item $10\rightarrow 01$: There exists $0\leq i \leq n-1$ such that $a-a'\equiv 2i+1\bmod p$, where $a-a'\in[1:2n-1]$.
    \item $01\rightarrow 10$: There exists $0\leq i \leq n-1$ such that $a-a'\equiv-2i-1\bmod p$, where $a-a'\in[p-2n+1:p-1]$.
\end{itemize}

%a symbol $0$ moves to the left and a symbol $0$ moves to the right. If a symbol $0$ moves to the left, there exists $0\leq j \leq n-1$ such that $a-a'=2j+1\bmod p$. Otherwise, if a symbol $0$ moves to the right, there is $0\leq j \leq n-1$ such that  $a-a'=-2j-1\bmod p$.

Since $p>4n$, we can observe that $[1:2n-1]\cap [p-2n+1:p-1]=\emptyset$. Consequently, the two cases mentioned above can be distinguished by comparing $a-a'$ with $2n-1$. Additionally, let another true value of $j\neq i$ satisfy the equation $a-a'\equiv 2j+1 \bmod p$. In this case, we would have $2(i-j)\equiv 0\bmod p$. However, this is impossible as $i,j<n<p/4$. Then, we can accurately determine the type of error and the unique $i$ for which $a-a'=2i+1\bmod p$ or $a-a'=-2i-1\bmod p$. Therefore, whenever either a $0$-deletion occurs or an adjacent transposition occurs, we can always correct the error and recover the original vector.

It is known that for large enough integer $n$, there exists a prime $p$ such that $4n< p< 4n+o(n)$~\cite{baker2001difference}. Hence, by the pigeonhole principle, there exists a code $\cC_{1}(n,a;p)$ of size at least $2^n/(4n+o(n))$. That is, it is possible to construct the code $\cC_{1}(n,a;p)$ with the redundancy of at most $\log n+3$ bits.  Therefore, we conclude that we can correct a single $0$-deletion or a single adjacent transposition with the redundancy of at most $\log n+3$ bits. %Compared with the lower bound in Corollary~\ref{cor:lower_redundancy}, the redundancy of our construction is only off from the minimum
redundancy by at most $3$ bits, independent of $n$. %Theorem \ref{thm:unique.1.contribution} is proven. 
\end{proof}

\subsection{Codes for correcting $t$ $0$-deletions and $s$ adjacent transpositions}
In this subsection, we explore the general case where the asymmetric Damerau--Levenshtein distance is adopted. We investigate a code correcting at most $t$ $0$-deletions and $s$ adjacent transpositions, where $t$ and $s$ are constants.

It can be seen that the asymmetric Damerau--Levenshtein distance between two vectors $\bx$ and $\by$ is closely related to the Lee distance between $\bphi(\bx)$ and $\bphi(\by)$. %Indeed, if an adjacent transposition occurs in $\bx$, the Lee weight of \bcomment{$\bphi(\bx)$} changes by two based on Claim~\ref{cla:change_trans} and if a $0$-deletion occurs in $\bx$, the Lee weight of \bcomment{$\bphi(\bx)$}  changes by one. Hence, if there are at most $s$ adjacent transpositions and $t$ $0$-deletions, the Lee weight of \bcomment{$\bphi(\bx)$} changes by at most $t+2s$.
%Now, we present the well-known BCH code in the Lee distance.

%In this section, we begin with constructing codes for uniquely correcting $t$ $0$-deletions and $s$ adjacent transpositions  with redundancy $(t+2s)\log n$ bits.  
\iffalse
\begin{lemma}[Roth and Siegel\cite{roth1994lee}]\label{lem:lee_bch}
Let $p$ be a prime. The systematic BCH code $\cC_{\mathrm{BCH}}(n,t+1;p)$ %\bx\in \Sigma_2^m\rightarrow \cE(\bx)\in \Sigma_p^n$
has the following lower bound on the minimum Lee distance
\begin{equation*}
    d_{\mathrm{L}}(\cC_{\mathrm{BCH}}(n,t+1;p))\ge\begin{cases}
        2(t+1), &{\mathrm{if}}\; t\leq (p-3)/2 \\
        p, &{\mathrm{if}}\; (p-1)/2\leq t\leq p
    \end{cases}
\end{equation*}
%\bcomment{can correct substitutions up to $t$ Lee weight}
and has $t\log n+o(t\log n)$ bits of redundancy.
\end{lemma}
\fi
%Now, we are ready to construct a code correcting $t$ $0$-deletions and $s$ adjacent transpositions. 
%\begin{construction}\label{cons.tands}
%\begin{equation*}
%    \begin{aligned}
%        \cC_{t,s}(n)=\{\bx\in\Sigma_2^m: 
%        \cE(\bphi(\bx))\in \cC_{\mathrm{BCH}}(n,t+2s+1;p)\}.
%    \end{aligned}
%\end{equation*}
%where the length of $\bphi(\bx)$ should actually be $m_w+1$ and $m_w$ is the weight of $\bx\in\Sigma_2^m$. Since $m_w$ is not a fixed parameter, hence we insert zeros to fill empty positions if $m_w<m$ to make the length of $\bphi(\bx)$ be $m$. 
%\end{construction}
\begin{lemma}[Mahdavifar and Vardy \cite{mahdavifar2017asymptotically}]\label{lem:lee_bch}
Let $p$ be a prime such that $p>2(r+1)$. The code $\cC(n,r,p)$ satisfies that for any two codewords $\bc_1,\bc_2 \in \cC(n,r,p)$, $d_{\mathrm{L}}(\bphi(\bc_1),\bphi(\bc_2))\geq 2(r+1)$, which is capable of correcting $r$ $0$-insertions with at most $r\log n +o(\log n)$ bits of redundancy.
\end{lemma}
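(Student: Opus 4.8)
The plan is to reduce the statement to a minimum-Lee-distance bound on a Bose--Chaudhuri--Hocquenghem (BCH)-type code over $\bbF_p$, applied to the run-length profile $\bphi$, and then read off both the correction capability and the redundancy.

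\textbf{Step 1: Translate $0$-insertions into increments of $\bphi$.} The effect of a $0$-insertion is the reverse of the $0$-deletion claim recorded above: inserting a $0$ into $\bx$ increases exactly one coordinate of $\bphi(\bx)$ by $1$, and since inserting zeros does not change $\wt(\bx)$, the length $\wt(\bx)+1$ of the profile is unaffected. Hence after at most $r$ $0$-insertions the received word $\bx'$ satisfies $\bphi(\bx') = \bphi(\bx) + \be$ for a nonnegative integer vector $\be$ of the same length with $\sum_i e_i \le r$. Since $p > 2(r+1)$ forces $e_i \le r < p/2$, we get $w_{\mathrm{L}}(\be \bmod p) = \sum_i e_i \le r$.

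\textbf{Step 2: Define $\cC(n,r,p)$ by $r$ BCH-type syndromes on $\bphi$.} Fix $r$ functions $g_1,\dots,g_r$ on the index set (for instance $g_j(i)=i^{\,j}$) chosen so that the $\bbF_p$-linear code $D=\{\bu:\ \sum_i u_i g_j(i)\equiv 0 \bmod p,\ j\in[1:r]\}$ is a Lee-metric BCH code of designed Lee distance $2(r+1)$ in the sense of Roth--Siegel, and let $\cC(n,r,p)$ consist of all $\bx$ (taken in a fixed weight class, equivalently suitably padded, so that profiles within one coset share a common length) whose profile lies in a fixed coset of $D$. Choosing $p$ to be the least prime exceeding $\max\{2n,2(r+1)\}$, which by known bounds on prime gaps is $2n+o(n)$, each syndrome costs $\log p$ bits, so by the pigeonhole principle some coset gives redundancy at most $r\log p + O(1) = r\log n + o(\log n)$.

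\textbf{Step 3: The Lee-distance property and decoding.} For $\bc_1\neq\bc_2$ in $\cC(n,r,p)$ of equal weight, $\bphi(\bc_1)-\bphi(\bc_2)$ is a nonzero codeword of $D$; invoking the Roth--Siegel / Mahdavifar--Vardy estimate for Lee-metric BCH codes — whose hypothesis $p>2(r+1)$ is exactly what rules out wrap-around cancellations in the underlying Vandermonde-type system over $\bbF_p$ — gives $d_{\mathrm{L}}(\bphi(\bc_1),\bphi(\bc_2))\ge 2(r+1)$. (Codewords of different weight have profiles of different length, hence disjoint $0$-insertion balls, and need no separate treatment.) For decoding, from $\bphi(\bx')$ one computes the syndrome, subtracts the fixed coset value, and obtains the syndrome of $\be\bmod p$, a Lee error of weight $\le r$; since a code of minimum Lee distance $\ge 2(r+1)$ corrects up to $r$ Lee errors, $\be$ is recovered uniquely, whence $\bphi(\bx)=\bphi(\bx')-\be$ and $\bx$ is recovered. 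The main obstacle is Step 3: establishing the minimum-Lee-distance bound $2(r+1)$ for the BCH-type code over $\bbF_p$ (the Roth--Siegel argument, which is precisely where $p>2(r+1)$ is used), together with the bookkeeping in Step 2 that makes $\bphi$ a well-defined fixed-length object over each coset so that the linear algebra over $\bbF_p$ is legitimate; Step 1 and the redundancy count are routine.
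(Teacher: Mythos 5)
The paper does not prove this lemma; it is imported verbatim from Mahdavifar--Vardy, whose argument is exactly the route you reconstruct: map $0$-insertions to nonnegative unit increments of the run-length profile $\bphi$, impose $r$ power-sum (BCH-type) syndromes on $\bphi$ modulo a prime, invoke the Roth--Siegel minimum-Lee-distance bound $2(r+1)$ (which is where $p>2(r+1)$ enters), and average over cosets to get redundancy $r\log n+o(\log n)$. Your proposal is correct and matches the cited source's approach, with the Roth--Siegel Lee-distance bound legitimately taken as a black box, just as the source does.
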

%Furthermore,  used the above code to construct a code $C(n,r)$ of length $n$ correcting $r$ $0$-insertions with at most $r\log n +o(\log n)$ bits of redundancy. %In their construction, each codeword $\bx \in C(n,r)$ is equivalent to a word $\bphi(\bx) \in \cC_{\mathrm{BCH}}(n,t+1;p).$ 
%Hence, for any two words $\bc_1,\bc_2 \in \cC(n,r,p)$, we have $d_{\mathrm{L}}(\bc_1,\bc_2))\geq 2(r+1)$. 
Then, we use the code $\cC(n,r,p)$ to correct $t$ $0$-deletions and $s$ adjacent transpositions. %with $r=t+2s$.

\begin{theorem}\label{thm:unique_ts}
If $r=t+2s$, then the code $\cC(n,r,p)$ can correct at most $t$ $0$-deletions and $s$ adjacent transpositions with redundancy $(t+2s)\log n+o(\log n)$ bits.
\end{theorem}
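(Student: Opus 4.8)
The strategy is to reduce the combined-error channel to a bounded-Lee-distance question in the $\bphi$-domain and then invoke Lemma~\ref{lem:lee_bch}. First I would observe that a $0$-deletion and an adjacent transposition each perturb $\bphi(\bx)$ in a controlled way: by the Claim preceding Claim~\ref{cla:change_trans}, a single $0$-deletion decreases exactly one coordinate of $\bphi(\bx)$ by $1$ (so it changes the Lee weight of the "error vector" $\bphi(\bx)-\bphi(\bx')$ by $1$), and by Claim~\ref{cla:change_trans} a single adjacent transposition changes two \emph{adjacent} coordinates, one by $+1$ and one by $-1$ (contributing at most $2$ to the Lee weight). Hence $t$ $0$-deletions and $s$ adjacent transpositions applied to $\bx$ produce a received word $\bx'$ with $d_{\mathrm L}(\bphi(\bx),\bphi(\bx'))\le t+2s=r$, at least after one aligns the two $\bphi$-vectors on a common index set. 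So the plan is: show that if $\bc_1,\bc_2\in\cC(n,r,p)$ both could produce the same received $\bx'$ under these errors, then $d_{\mathrm L}(\bphi(\bc_1),\bphi(\bc_2))\le 2r=2(t+2s)$, contradicting the minimum-Lee-distance guarantee $2(r+1)$ of Lemma~\ref{lem:lee_bch}; therefore the error balls are disjoint and the code is uniquely decodable.

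The subtle point, and the step I expect to be the main obstacle, is that a $0$-deletion changes the \emph{length} of the sequence, and more importantly a deletion can change the weight $\wt(\bx)$ only if it deletes a $0$ — which it does not — so $\wt$ is preserved under $0$-deletions, and an adjacent transposition clearly preserves $\wt$ as well. Thus $\bphi(\bx)$ and $\bphi(\bx')$ live in $\mathbb N^{w+1}$ for the same $w$, which is what makes the Lee-distance comparison meaningful; I would state this preservation of weight explicitly as a preliminary remark. The remaining care is bookkeeping: a $0$-deletion in the block $0^{u_i}$ sends $u_i\mapsto u_i-1$; one must check that the \emph{order} in which the $t$ deletions and $s$ transpositions occur does not matter for the final bound, only the net effect on each coordinate, and that transpositions acting near a coordinate that a deletion also touches still contribute at most the claimed amount per error. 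Summing the per-error contributions gives the $t+2s$ bound on $d_{\mathrm L}(\bphi(\bx),\bphi(\bx'))$.

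Concretely, the steps in order are: (1) recall/establish that $0$-deletions and adjacent transpositions preserve $\wt$, so all the $\bphi$-images lie in a common $\mathbb N^{w+1}$; (2) using the Claim and Claim~\ref{cla:change_trans}, bound $d_{\mathrm L}(\bphi(\bx),\bphi(\bx'))\le t+2s$ for any $\bx'\in\cB_{t,s}(\bx)$; (3) suppose $\bx'\in\cB_{t,s}(\bc_1)\cap\cB_{t,s}(\bc_2)$ for distinct codewords of $\cC(n,t+2s,p)$, and apply the triangle inequality for Lee distance to get $d_{\mathrm L}(\bphi(\bc_1),\bphi(\bc_2))\le 2(t+2s)<2(t+2s+1)$, contradicting Lemma~\ref{lem:lee_bch} with $r=t+2s$ (which requires the prime $p>2(t+2s+1)$, a constant-sized constraint); (4) conclude the error balls are pairwise disjoint, hence $\cC(n,t+2s,p)$ corrects $t$ $0$-deletions and $s$ adjacent transpositions; (5) the redundancy claim is immediate from the redundancy $r\log n+o(\log n)=(t+2s)\log n+o(\log n)$ stated in Lemma~\ref{lem:lee_bch}. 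One should also note that a decoder can actually be implemented: it computes $\bphi(\bx')$, pads it back to length $w+1$ if a deletion occurred (the weight reveals how many deletions happened), and runs the bounded-Lee-distance decoder of $\cC(n,r,p)$ — but for the theorem as stated only the distance/disjointness argument is strictly needed.
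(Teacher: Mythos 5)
Your proposal is correct and follows essentially the same route as the paper's own proof: bound the perturbation of $\bphi(\bx)$ by $1$ per $0$-deletion and $2$ per adjacent transposition to get $d_{\mathrm L}(\bphi(\bx),\bphi(\bx'))\le t+2s$, then invoke the minimum-Lee-distance guarantee of Lemma~\ref{lem:lee_bch} with $r=t+2s$. Your version is in fact slightly more careful than the paper's (you make explicit the weight preservation that keeps both $\bphi$-vectors in the same $\mathbb N^{w+1}$, and the triangle-inequality/disjointness step), but the underlying argument is identical.
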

\begin{proof}
Let $\bx =x_1\dotsc x_n \in \Sigma_2^n$ be the original vector and $\bx'\in \Sigma_2^{n-t}$ be the received vector after $t$ $0$-deletions and $s$ adjacent transpositions. %Hence, we obtain the vector $\by'=\bphi(\bx')$.
Now, we consider two vectors $\bphi(\bx)$ and $\bphi(\bx')$. If an adjacent transposition occurs in $\bx$ and we obtain $\by$, the Lee distance between $\bphi(\bx)$ and $\bphi(\by)$ is $d_{\mathrm{L}}(\bphi(\bx),\bphi(\by))=2$ based on Claim~\ref{cla:change_trans}. Also, if a $0$-deletion occurs in $\bx$ and we obtain $\bz$, the Lee distance between $\bphi(\bx)$ and $\bphi(\bz)$ is $d_{\mathrm{L}}(\bphi(\bx),\bphi(\bz))=1$. %Hence, , the Lee weight of \bcomment{$\bphi(\bx)$} changes by at most $t+2s$. %, we obtain that 
%From  Claim~\ref{cla:change_trans} and note that , we obtain that 
Hence, if there are at most $s$ adjacent transpositions and $t$ $0$-deletions, the Lee distance between two vectors $\bphi(\bx)$ and $\bphi(\bx')$ is $d_{\mathrm{L}}(\bphi(\bx),\bphi(\bx'))\leq t+2s$. %Since 
%Since each element in $\by$ and $\by'$ denotes the length of each 0-run, for each 0-deletion in $\bx$, the value of the corresponding symbols in $\by'$ will be decreased by 1. Combining with Claim~\ref{cla:change_trans}, we can have the Lee distance $d_{\mathrm{L}}(\by,\by')=t+2s$.
Therefore, if we set $r=t+2s$, then the code $\cC(n,r,p)$ with $r=t+2s$ can correct at most $t$ $0$-deletions and $s$ adjacent transpositions with redundancy $(t+2s)\log n+o(\log n)$.
%by applying the Lee-metric BCH code in Lemma~\ref{lem:lee_bch} to $\bphi(\bx)$, we can correct errors up to $t+2s$ Lee weight by choosing a prime $p$ such that $p\ge 2(t+2s+1)$ with redundancy $(t+2s)\log n$.  
\end{proof}

\subsection{Codes for correcting a single deletion and multiple right-shifts}
In the previous two subsections, we focused on $0$-deletions and arbitrary adjacent transpositions (both $01\rightarrow 10$ and $10\rightarrow 01$ may occur) in the asymmetric Damerau-Levenshtein distance. In this subsection, we propose a code for correcting a single deletion and $s$ right-shifts of $0$ and we show that the redundancy is at most $(1+s)\log(n+s+1)+1$ bits. Notice that the redundancy of our construction differs from the minimum redundancy as stated in Corollary~\ref{cor:lower_redundancy} by a constant. We denote the \emph{adjacent transposition} as $01\rightarrow 10$ or $10\rightarrow 01$, \emph{right-shift of $0$} as $01\rightarrow 10$ and \emph{left-shift of $0$} as $10\rightarrow 01$ throughout this subsection.

\begin{construction}\label{con:del_rightshift}
  The code $\cC_{s}(n,a,b)$ is defined as follows.
  \begin{equation*}
      \cC_{s}(n,a,b)=\{ \bx\in\Sigma_2^n: \VT(\bx)\equiv a \bmod (n+s+1),\; 
     \sum_{i=1}^n x_i \equiv b \bmod 2,\; 
 \bpsi(\bx)\in \cC_{H}(n,2s+1)\},
  \end{equation*}
where $\cC_{H}(n,2s+1)$ is a binary narrow-sense BCH code with minimum Hamming distance $2s+1$~\cite{roth2006introduction}. 
\end{construction}

\begin{proposition}[Gabrys,~Yaakobi,~and Milenkovic \cite{gabrys2017codes}]\label{pro:shift_sub}
If a single adjacent transposition ($01\rightarrow 10$ or $10\rightarrow 01$) occurs in $\bx$, then a single substitution occurs in $\bpsi(\bx)$.
\end{proposition}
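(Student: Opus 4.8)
The plan is to track the prefix-sum vector $\bpsi(\bx)$ coordinate by coordinate under the transposition. Suppose the adjacent transposition involves the $i$th and $(i+1)$st symbols, so that $\bx'$ is obtained from $\bx$ by replacing $x_i x_{i+1}$ with $x_{i+1} x_i$. Because the error is of type $01\to 10$ or $10\to 01$, we have $x_i\neq x_{i+1}$, i.e.\ $\{x_i,x_{i+1}\}=\{0,1\}$ (if $x_i=x_{i+1}$ the string is unchanged and there is nothing to prove). Recall that $\psi(\bx)_j=\sum_{k=1}^{j}x_k \bmod 2$.

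First I would treat the coordinates $j\le i-1$: the prefix $x_1\cdots x_j$ is untouched by the swap, so $\psi(\bx')_j=\psi(\bx)_j$. Next, for $j\ge i+1$: the first $j$ symbols of $\bx'$ are a permutation of the first $j$ symbols of $\bx$ (the transposition stays within this prefix), hence the partial sums agree and again $\psi(\bx')_j=\psi(\bx)_j$. Finally, for $j=i$: we have $\psi(\bx')_i=\psi(\bx)_{i-1}+x_{i+1}\bmod 2$ while $\psi(\bx)_i=\psi(\bx)_{i-1}+x_i\bmod 2$ (with the convention $\psi(\bx)_0=0$), and since $x_i\neq x_{i+1}$ these two values differ, i.e.\ $\psi(\bx')_i=1-\psi(\bx)_i$. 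Therefore $\bpsi(\bx')$ and $\bpsi(\bx)$ differ in exactly the single coordinate $i$, which is precisely a single substitution in $\bpsi(\bx)$.

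There is essentially no obstacle here: the argument is a direct computation, and the only points that need a word of care are the boundary case $i=1$ (handled by the convention $\psi(\bx)_0=0$) and the degenerate case $x_i=x_{i+1}$ (in which the transposition induces no change at all, consistent with the ``at most a single substitution'' reading). One could alternatively derive the same conclusion via Claim~\ref{cla:change_trans}: a change of $\pm 1$ in two adjacent entries of $\bphi(\bx)$ amounts to moving a single $1$ across one run boundary of $\bx$, which flips exactly one prefix sum. But the coordinate-wise verification above is the cleanest route, so that is the one I would write out in full.
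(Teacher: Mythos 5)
Your proof is correct, and it is essentially the argument the paper relies on: the paper states this proposition as a cited result without proof, but the intended justification (visible in a commented-out remark in the source) is exactly your observation that an adjacent transposition at positions $i,i+1$ flips only the $i$th prefix sum, i.e.\ $\psi(\bx)_i \mapsto 1-\psi(\bx)_i$ while all other coordinates of $\bpsi(\bx)$ are unchanged. Your coordinate-by-coordinate verification, including the boundary convention $\psi(\bx)_0=0$, is the standard and complete way to establish this.
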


\begin{proposition}\label{pro:shift_vtchange}
Assume that $\bx'$ is obtained by applying $s$ right-shifts of $0$ on $\bx$. Then, we have $\VT(\bx)-\VT(\bx')=s$.
\end{proposition}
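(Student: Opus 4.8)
The plan is to reduce the claim to a one-line computation for a single right-shift and then telescope. Recall that a right-shift of $0$ is the operation $01\rightarrow 10$: if it acts on coordinates $i$ and $i+1$ of a string $\bx\in\Sigma_2^n$, then necessarily $x_i=0$ and $x_{i+1}=1$, and the resulting string $\by$ has $y_i=1$, $y_{i+1}=0$, and $y_j=x_j$ for every $j\neq i,i+1$. In particular $|\by|=|\bx|=n$, so the VT syndromes $\VT(\bx)=\sum_{j=1}^n j x_j$ and $\VT(\by)$ are sums over the same index set $[1:n]$ and are directly comparable.

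First I would compute the effect of one right-shift on the VT syndrome. Since $\bx$ and $\by$ agree outside coordinates $i$ and $i+1$,
\[
\VT(\bx)-\VT(\by)=\big(i\cdot x_i+(i+1)x_{i+1}\big)-\big(i\cdot y_i+(i+1)y_{i+1}\big)=(i+1)-i=1 .
\]
Intuitively, a right-shift moves a single $1$ one position to the left, decreasing $\sum_j j x_j$ by exactly $1$; note that the value $1$ is independent of the location $i$ at which the shift occurs.

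Next, to handle $s$ right-shifts, I would write the error process as a chain $\bx=\bx^{(0)},\bx^{(1)},\dotsc,\bx^{(s)}=\bx'$, where $\bx^{(k)}$ is obtained from $\bx^{(k-1)}$ by the $k$th right-shift of $0$. Applying the single-step identity to each link gives $\VT(\bx^{(k-1)})-\VT(\bx^{(k)})=1$ for all $k\in[1:s]$, and summing these $s$ equations telescopes to $\VT(\bx)-\VT(\bx')=s$, as claimed. Equivalently, this is an induction on $s$ in which the single-step computation serves as both the base case and the inductive step.

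There is essentially no hard part here. The only points requiring a moment's care are that right-shifts preserve the length of the string, so that $\VT(\bx)$ and $\VT(\bx')$ are sums over the same indices, and that each elementary right-shift is applied to the current intermediate string $\bx^{(k-1)}$ rather than to $\bx$ itself, which the telescoping argument accommodates automatically. One could alternatively track the change at the level of $\bphi$ via Claim~\ref{cla:change_trans}, but the direct computation with $\VT$ is shorter and more transparent.
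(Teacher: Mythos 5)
Your proof is correct and follows essentially the same route as the paper's: a single right-shift of $0$ moves one $1$ one position to the left, decreasing $\VT$ by exactly $1$ independently of where it occurs, and the $s$ shifts telescope to give a total decrease of $s$. Your version is slightly more explicit about the two-coordinate computation and about applying each shift to the current intermediate string, but the substance is identical.
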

\begin{proof}
Assume a right-shift of $0$ ($01\rightarrow 10$) occurs at a bit $1$ with index $i$ in $\bx$. The index of this $1$ in $\bx'$ will be $i-1$. Thus, for a single right-shift of $0$, the change of the VT syndrome will be 1. If there are $s$ right-shifts of $0$ occurring in $\bx$, we have $\VT(\bx)-\VT(\bx')=s$.
\end{proof}

\begin{lemma}\label{lem:misinsert}
    The following statements hold:
    \begin{itemize}
        \item Assume a $0$ is deleted before the $p$-th $1$ in $\bx$, and a $0$ is inserted before the $(p+v)$-th $1$ to obtain $\hat{\bx}$. Then, $\bx$ is obtained from $\hat{\bx}$ by at most $v$ adjacent transpositions.
        \item Assume a $1$ is deleted after the $p$-th $0$ in $\bx$, and a $1$ is inserted after the $(p-v)$-th $0$ to obtain $\hat{\bx}$. Then, $\bx$ is obtained from $\hat{\bx}$ by at most $v$ adjacent transpositions.
    \end{itemize}
\end{lemma}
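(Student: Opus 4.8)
The plan is to argue entirely in the run-length representation $\bphi$, tracking how the deletion, the (misplaced) re-insertion, and a short chain of left-shifts act coordinate-wise. Since $0$-deletions and $0$-insertions are ``sticky'', deleting a $0$ that lies before the $p$-th $1$ of $\bx$ amounts to decreasing $\phi(\bx)_p$ by one, and inserting a $0$ before the $(p+v)$-th $1$ amounts to increasing the $(p+v)$-th coordinate by one. Writing $\bphi(\bx)=(u_1,\dots,u_{w+1})$ with $w=\wt(\bx)$ (note $u_p\ge 1$ since a $0$ was deleted from that run), we get
\begin{equation*}
\bphi(\hat\bx)=(u_1,\dots,u_{p-1},\,u_p-1,\,u_{p+1},\dots,u_{p+v-1},\,u_{p+v}+1,\,u_{p+v+1},\dots,u_{w+1}),
\end{equation*}
so $\bphi(\bx)$ and $\bphi(\hat\bx)$ agree in every coordinate except $\bphi(\bx)_p=\bphi(\hat\bx)_p+1$ and $\bphi(\bx)_{p+v}=\bphi(\hat\bx)_{p+v}-1$.

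Next I would exhibit an explicit sequence of $v$ left-shifts of $0$ (errors of type $10\rightarrow 01$) transforming $\hat\bx$ into $\bx$. Apply a left-shift at the boundary between the $(p{+}v{-}1)$-st and $(p{+}v)$-th $0$-runs; by Claim~\ref{cla:change_trans} this replaces $(\phi_{p+v-1},\phi_{p+v})$ by $(\phi_{p+v-1}+1,\phi_{p+v}-1)$, carrying the surplus $0$ one run to the left. Repeating this at the boundaries $(p{+}v{-}2,p{+}v{-}1),\dots,(p,p{+}1)$ — a total of $v$ transpositions — restores coordinate $p+v$ to $u_{p+v}$, leaves every coordinate strictly between $p$ and $p+v$ unchanged (each is raised by one and then lowered by one), and raises coordinate $p$ to $u_p$, so the resulting run-length vector is exactly $\bphi(\bx)$. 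I would then verify each of these $v$ operations is a legitimate adjacent transposition, i.e.\ the relevant $0$-run is non-empty when it is used: the $(p{+}v)$-th run equals $u_{p+v}+1\ge 1$ initially, and after the $k$-th step the $(p{+}v{-}k)$-th run equals $u_{p+v-k}+1\ge 1$, which is precisely what the $(k{+}1)$-st step requires. Hence $\bx$ is obtained from $\hat\bx$ by exactly $v$, in particular at most $v$, adjacent transpositions (the case $v=0$ being trivial, as then $\hat\bx=\bx$).

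For the second statement I would invoke symmetry: applying bit-complementation together with reversal, $\bx\mapsto\overline{\bx^{R}}$ (where $\bx^{R}$ is the reversal of $\bx$), interchanges the roles of $0$ and $1$ and of ``left'' and ``right'', and leaves the number of adjacent transpositions needed unchanged; under this map ``a $1$ deleted after the $p$-th $0$ and re-inserted after the $(p-v)$-th $0$'' becomes an instance of the situation already handled in the first part (a $0$ deleted before some $1$ and re-inserted $v$ ones further along). Alternatively, one can repeat the argument verbatim using the dual decomposition $\bx=1^{v_1}01^{v_2}0\cdots$ and a carrying chain of right-shifts $01\rightarrow 10$.

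The only mildly delicate point is bookkeeping, not depth: one must get the orientation of the misplacement right — in part~(i) the re-inserted $0$ sits too far to the \emph{right}, hence the use of left-shifts, whereas in part~(ii) the re-inserted $1$ sits too far to the \emph{left} — and must confirm that each intermediate transposition is applied to a genuine ``$10$'' (equivalently, a non-empty $0$-run). Edge cases in which $p=1$, or $p+v$ exceeds $\wt(\bx)$ so that ``before the $(p+v)$-th $1$'' refers to the trailing $0$-run, are covered by the identical computation once the corresponding coordinate of $\bphi$ is read as the leading or trailing run.
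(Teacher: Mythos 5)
Your argument is correct and is essentially the paper's own proof transported into the run-length domain: the paper propagates the displaced symbol by shifting the $(p+v-1)$-th, $\dots$, $p$-th $1$s one position each directly in the string, which is exactly your chain of $v$ left-shifts acting on $\bphi$ via Claim~\ref{cla:change_trans}. Your write-up is somewhat more careful than the paper's in verifying that each intermediate transposition acts on a genuine ``$10$'' (a non-empty $0$-run) and in justifying the second statement via an explicit complement-and-reversal symmetry, where the paper merely asserts that the deleted-$1$ case is analogous.
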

\begin{proof}
Denote the indices of $p$-th $1$, $(p+1)$-th $1$, $\dotsc$, $(p+v-1)$-th $1$ in $\bx$ as $i_p,i_{p+1},\dotsc,i_{p+v-1}$. Then, we notice that the indices of these $1$s in $\hat{\bx}$ are $i_p-1,i_{p+1}-1,\dotsc,i_{p+v-1}-1$. Since $0$ is inserted before $(p+v)$-th $1$, we can swap the $(i_{p+v-1}-1)$-th and $i_{p+v-1}$-th bits and hence $\hat{\bx}_{[i_{p+v-1}:i_{p+v}]}=\bx_{[i_{p+v-1}:i_{p+v}]}$. Continuing this process, $\bx$ can be recovered from $\hat{\bx}$ by at most $v$ adjacent transpositions. The case of deleting $1$ is the same as deleting $0$, hence we have the above two statements.
\end{proof}

\begin{theorem}\label{thm:del_rightshift}
For integers $a\in[0:n+s]$ and $b\in[0:1]$, the code $\cC_{s}(n,a,b)$ can correct a single deletion and $s$ right-shifts of $0$ with at most $(1+s)\log (n+s+1)+1$ bits of redundancy.
\end{theorem}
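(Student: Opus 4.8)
The plan is to decode in two stages: first use the VT syndrome together with the parity check $\sum_i x_i \equiv b \bmod 2$ to recover a sequence $\hat{\bx}$ that differs from the true codeword $\bx$ by only adjacent transpositions, and then invoke Proposition~\ref{pro:shift_sub} to translate those transpositions into substitutions in $\bpsi(\cdot)$, which the BCH code $\cC_H(n,2s+1)$ corrects. The first thing to do is to split by the length of the received word $\bx'$. If $|\bx'|=n$, no deletion occurred, so only $s$ right-shifts of $0$ happened; by Proposition~\ref{pro:shift_sub} these become $s$ substitutions in $\bpsi(\bx)$, and since $\cC_H(n,2s+1)$ corrects $s$ substitutions we recover $\bpsi(\bx)$ and hence $\bx$. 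So assume $|\bx'|=n-1$, meaning exactly one symbol was deleted in addition to the $s$ right-shifts.

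Next I would analyze the VT syndrome difference $\Delta = \VT(\bx)-\VT(\bx') \bmod (n+s+1)$. By Proposition~\ref{pro:shift_vtchange}, the $s$ right-shifts contribute exactly $s$ to this difference (each right-shift moves a $1$ one position left). The deletion contributes the standard VT amount: if a $0$ is deleted it contributes (number of $1$s to the right of the deleted position), and if a $1$ is deleted at position with $p-1$ ones before it and $R_1$ ones after, it contributes $p + R_1$ (or, rewriting, $w + L_0 + 1$ where $w=\wt(\bx')$ and $L_0$ is the count of $0$s before the deleted $1$). The parity bit $b$ is what lets us decide whether the deleted symbol was a $0$ or a $1$: comparing $\sum_i x'_i \bmod 2$ with $b$ tells us the weight parity of the deletion. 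Once we know the type of the deleted bit and the value $\Delta - s$, we know the VT-decoding target; but because the right-shifts have perturbed the positions of the $1$s, the naive VT insertion rule may place the inserted symbol $v$ positions away from the true location, where $v \le s$. This is exactly the situation handled by Lemma~\ref{lem:misinsert}: inserting a $0$ before the $(p+v)$-th $1$ instead of the $p$-th (or a $1$ after the $(p-v)$-th $0$) yields a sequence $\hat{\bx}$ from which $\bx$ is recoverable by at most $v \le s$ adjacent transpositions.

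Therefore, regardless of the exact misplacement, the reconstructed $\hat{\bx}$ of length $n$ differs from $\bx$ by at most $s$ adjacent transpositions (the $s$ right-shifts' worth of displacement, absorbed into $\le s$ transpositions via Lemma~\ref{lem:misinsert}). Applying Proposition~\ref{pro:shift_sub}, $\bpsi(\hat{\bx})$ differs from $\bpsi(\bx)$ in at most $s$ positions; since $\bpsi(\bx) \in \cC_H(n,2s+1)$ has minimum Hamming distance $2s+1$, we correctly recover $\bpsi(\bx)$ and invert $\bpsi$ to get $\bx$. One must also check that the three sub-cases (length $n$; length $n-1$ with $0$ deleted; length $n-1$ with $1$ deleted) are mutually distinguishable from the decoder's data — the length separates the first from the rest, and the parity $b$ separates the last two — and that the modulus $n+s+1$ is large enough that $\Delta$ is unambiguous, i.e. all the relevant VT contributions lie in a window of size at most $n+s$; this is why the modulus was inflated from $n+1$ to $n+s+1$.

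For the redundancy bound: the VT constraint costs $\log(n+s+1)$ bits, the parity constraint costs $1$ bit, and a narrow-sense BCH code of designed distance $2s+1$ and length $n$ costs at most $s\lceil \log(n+1)\rceil \le s\log(n+s+1)$ bits of redundancy (choosing $n$ of the form $2^m-1$ or bounding the cyclotomic-coset sizes), so the total is at most $(1+s)\log(n+s+1)+1$. I expect the main obstacle to be the careful bookkeeping in the length-$(n-1)$ case: verifying that after subtracting $s$ from $\Delta$ the remaining value is a legitimate VT single-deletion syndrome, that the parity check correctly identifies the deleted symbol's value even in the presence of the right-shifts (right-shifts preserve weight, so this is fine, but it needs to be stated), and that the displacement introduced by the right-shifts is genuinely bounded by $s$ so that Lemma~\ref{lem:misinsert} applies with $v \le s$ rather than something larger. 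Edge cases — the deleted $0$ being in the first run, the deleted $1$ being the last $1$, or a right-shift interacting with the deletion position — should be enumerated to confirm the bound $v\le s$ holds uniformly.
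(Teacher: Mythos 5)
Your proposal follows essentially the same route as the paper's proof: length of $\bx'$ detects the deletion, the parity bit identifies whether a $0$ or a $1$ was deleted, the VT syndrome offset by $s$ locates the insertion point approximately, Lemma~\ref{lem:misinsert} converts the misplacement into adjacent transpositions, and $\cC_H(n,2s+1)$ applied to $\bpsi(\hat{\bx})$ finishes the job, with the same redundancy accounting. The one bookkeeping point you flag as the "main obstacle" is resolved in the paper exactly as you anticipate: with $k$ actual right-shifts, the misinsertion costs at most $s-k$ transpositions and the shifts themselves cost $k$, so the total is at most $s$.
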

\begin{proof}
    Given a sequence $\bx\in\Sigma_2^n$, denote the received sequence as $\bx'$ with a single deletion and at most $s$ right-shifts of $0$. We first use the VT syndrome to correct the deletion and then use the decoder of $\cC_{H}(n,2s+1)$ on $\bpsi(\bx)$ to correct the right-shifts of $0$.
    
    Let $\Delta=\VT(\bx)-\VT(\bx')$, $w$ be the weight of $\bx'$, and $p$ be the index of deletion. Then, let $L_0$ be the number of $0$s on the left of the deleted bits in $\bx'$ and $R_0$ on its right. Similarly, denote $L_1$ and $R_1$ as the number of $1$s on the left of the deleted bits in $\bx'$ and $R_1$ on its right, respectively. We have the following cases when recovering $\bx$ from $\bx'$:
    \begin{itemize}
        \item If $\bx'\in\Sigma_2^n$, it means no deletion occurs in $\bx$ and there are at most $s$ right-shifts of $0$. Based on Proposition~\ref{pro:shift_sub}, there are at most $s$ substitutions in $\bpsi(\bx)$. Hence we can recover $\bpsi(\bx)$ by $\bpsi(\bx')$ since $\bpsi(\bx)\in\cC_{H}(n,2s+1)$, and then we can recover $\bx$.
        
        \item If $\bx'\in\Sigma_2^{n-1}$ and assume a $0$ is deleted. From Proposition~\ref{pro:shift_vtchange}, we have $\Delta = R_1 + k$, where $k$ represents the actual number of right-shifts of $0$s. To begin, we can recover $\hat{\bx}$ by inserting $0$ at the rightmost index of $\max(\Delta - s, 0)$ $1$s, denoting the insertion index as $p'$. Considering the knowledge that there are at most $s$ right-shifts of $0$s and $\Delta = R_1 + k$, we can deduce that $\max(R_1 + k - s, 0) \leq R_1$. In the case where $R_1 + k - s > 0$, there are $(s - k)$ 1s between index $p$ and $p'$.
        %the distance between index $p$ and $p'$ is $(s - k)$ 1s.
        On the other hand, when $R_1 + k - s \leq 0$, there are $R_1$ 1s between index $p$ and $p'$, where $R_1 \leq s - k$.
        %the distance between index $p$ and $p'$ is $(R_1 \leq s - k)$ 1s. 
        Utilizing Case 1 of Lemma~\ref{lem:misinsert}, we find that there are at most $(s - k)$ adjacent transpositions between $\hat{\bx}$ and $\bx$. Additionally, $\bx$ experiences $k$ right-shifts of $0$s. Therefore, $\bx$ is derived from $\hat{\bx}$ through a total of $s$ adjacent transpositions. Consequently, we can recover $\bpsi(\bx)$ from $\bpsi(\hat{\bx})$ and subsequently obtain $\bx$.

        \item If $\bx'\in\Sigma_2^{n-1}$ and we assume that a $1$ is deleted, we can apply Proposition~\ref{pro:shift_vtchange}. As a result, $\Delta = p + R_1 + k = w + L_0 + k + 1$. To recover $\hat{\bx}$, we insert a $1$ at the leftmost index of the $\max(\Delta - w - s - 1, 0)$ $0$s. Similar to Case 2, since $\Delta = w + L_0 + k + 1$ and we insert a $1$ at the leftmost index of the $\max(L_0 + k - s, 0)$ $0$s. Based on Case 2 of Lemma~\ref{lem:misinsert}, we can deduce that there are at most $(s - k)$ adjacent transpositions between $\hat{\bx}$ and $\bx$. Likewise, $\bx$ is derived from $\hat{\bx}$ through a total of $s$ adjacent transpositions. Consequently, we can recover $\bpsi(\bx)$ from $\bpsi(\hat{\bx})$ and subsequently obtain $\bx$.
    \end{itemize}
    
    It is worth noticing that the length of the retrieved sequence $\bx'$ can help distinguish whether a deletion has occurred. Case 2 and Case 3 can be distinguished based on the constraint of $\sum_{i=1}^{n} x_i\equiv b\bmod 2$, from which we know whether the deleted bit is $0$ or $1$. Hence, the code $\cC_{s}(n,a,b)$ is capable of correcting a single deletion and $s$ right-shifts of $0$.
    
    There are three constraints on the sequence $\bx\in\cC_{s}(n,a,b)$ including the set of constraints that define the VT code, a parity check bit, and a linear binary $(n,2s+1)$-code. It can be easily verified that the redundancy of the code $\cC_{s}(n,a,b)$ is $\log(n+s+1)+s\log n+1$. Thus, the redundancy of the code $\cC_{s}(n,a,b)$ is at most $(1+s)\log(n+s+1)+1$.
\end{proof}

The decoding algorithm of the code $\cC_{s}(n,a,b)$ for correcting a single deletion and $s$ right-shifts of $0$ is summarized in Algorithm~\ref{alg:delshift}.

\begin{algorithm}
\label{alg:delshift}
\SetAlgoLined
\KwInput{Corrupted Sequence $\bx'$}
\KwOutput{Original Sequence $\bx\in \cC_{s}(n,a,b)$} $\Delta=\VT(\bx)-\VT(\bx')$, $c=\sum_{i=1}^n x_i-\sum_{i=1}^{|\bx'|} x'_i$ and $w=\wt(\bx')$.

\eIf{$|\bx'|=n$}
    {Recover $\bpsi(\bx)$ by $\bpsi(\bx')$ and then $\bx$.}
    {\eIf{$c=0$}{Insert a $0$ in the rightmost index of $\max(\Delta-s,0)$ $1$s to get $\hat{\bx}$. Recover $\bpsi(\bx)$ by $\bpsi(\hat{\bx})$ and then $\bx$.}
    {Insert a $1$ in the leftmost index of $\max(\Delta-w-s-1,0)$ $0$s to get $\hat{\bx}$. Recover $\bpsi(\bx)$ by $\bpsi(\hat{\bx})$ and then $\bx$.}}

\caption{Decoding procedure of $\cC_{s}(n,a,b)$}
\end{algorithm}

Further, Construction~\ref{con:del_rightshift} and Theorem~\ref{thm:del_rightshift} can be naturally extended to construct codes for correcting a single deletion, $s^{+}$ right-shifts of $0$ and $s^{-}$ left-shifts of $0$ with $s=s^{+}+s^{-}$.
\begin{corollary}\label{cor:general_shift}
    For all $a\in[0:n+s]$ and $b\in[0:1]$, the code
    \begin{equation*}
     \cC_{s^{+},s^{-}}(n,a,b)=\{ \bx\in\Sigma_2^n: \VT(\bx)\equiv a \bmod (n+s+1),\; \sum_{i=1}^n x_i \equiv b \bmod 2,\; \bpsi(\bx)\in \cC_{H}(n,2s+1)\}. 
     \end{equation*}
    can correct a single deletion, $s^{+}$ right-shifts of $0$ and $s^{-}$ left-shifts of $0$ with redundancy at most $(1+s)\log (n+s+1)+1$, where $s=s^{+}+s^{-}$.
\end{corollary}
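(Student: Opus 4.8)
The plan is to show that Corollary~\ref{cor:general_shift} follows from essentially the same argument as Theorem~\ref{thm:del_rightshift}, the only new ingredient being that both right-shifts ($01\rightarrow 10$) and left-shifts ($10\rightarrow 01$) can now occur. First I would observe that the structural lemmas underlying Theorem~\ref{thm:del_rightshift} are already sign-agnostic: Proposition~\ref{pro:shift_sub} says any single adjacent transposition (of either type) induces a single substitution in $\bpsi(\bx)$, so $s=s^{+}+s^{-}$ transpositions still produce at most $s$ substitutions in $\bpsi(\bx)$, which the BCH code $\cC_H(n,2s+1)$ corrects. Likewise, Lemma~\ref{lem:misinsert} is stated in terms of a misplaced insertion and bounds the number of adjacent transpositions needed to repair it, regardless of shift direction. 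So the BCH component of the decoder works verbatim once the deletion is repaired to within $s$ transpositions of the true codeword.

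The one genuinely different point is Proposition~\ref{pro:shift_vtchange}: with left-shifts present, a single left-shift changes the VT syndrome by $-1$ rather than $+1$, so the net displacement $\VT(\bx)-\VT(\bx')$ is now $k^{+}-k^{-}$ where $k^{+}\le s^{+}$, $k^{-}\le s^{-}$ are the actual numbers of each type. Hence $\Delta$ (as defined in the proof of Theorem~\ref{thm:del_rightshift}) becomes $R_1 + k^{+}-k^{-}$ in the deleted-$0$ case and $w+L_0+1+k^{+}-k^{-}$ in the deleted-$1$ case, and in both cases the "shift contribution" $k^{+}-k^{-}$ lies in the interval $[-s^{-},s^{+}]\subseteq[-s,s]$. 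The key observation is that this is the only quantity entering the decoder that is affected, and its range is still contained in a window of total width $2s$. I would therefore restate the insertion step of the decoder: insert a $0$ at the rightmost index among the first $\max(\Delta-s^{+},0)$ ones (and symmetrically for the deleted-$1$ case, inserting after $\max(\Delta-w-1-s^{+},0)$ zeros), so that the insertion index $p'$ satisfies $|p'-p|$ counted in ones is at most $s$; then Lemma~\ref{lem:misinsert} gives at most $s-(k^{+}-k^{-})\le s$ repair transpositions, which together with the $k^{+}+k^{-}=s^{+}+s^{-}-$something... — more carefully, I would argue that $\hat\bx$ differs from $\bx$ by at most $s$ adjacent transpositions by combining the misinsertion bound with the actual shifts, exactly as in Cases 2 and 3 of Theorem~\ref{thm:del_rightshift}, and conclude the BCH code recovers $\bpsi(\bx)$.

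I expect the main obstacle — and the only place requiring care — to be the bookkeeping in the deleted-symbol cases: verifying that with the signed shift total $k^{+}-k^{-}\in[-s,s]$ the chosen insertion index is always "within $s$ ones" of the true deletion index on the correct side, so that Lemma~\ref{lem:misinsert} applies with parameter at most $s$, and then checking that the total transposition budget between $\hat\bx$ and $\bx$ (repair transpositions plus genuine shifts) does not exceed $s$. This is the same casework as in Theorem~\ref{thm:del_rightshift} with $\max(\Delta-s,0)$ replaced by $\max(\Delta-s^{+},0)$ and the bounds shifted accordingly; the distinguishing of the three cases is unchanged (length of $\bx'$ separates the no-deletion case; the parity constraint $\sum x_i\equiv b\bmod 2$ tells us whether a $0$ or a $1$ was deleted). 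Finally, the redundancy count is identical: one VT constraint modulo $n+s+1$ costs $\log(n+s+1)$ bits, the parity bit costs $1$, and the narrow-sense BCH code with designed distance $2s+1$ costs at most $s\log n\le s\log(n+s+1)$ bits, for a total of at most $(1+s)\log(n+s+1)+1$ bits, proving the claim.
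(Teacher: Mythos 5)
Your high-level plan matches the paper's: reuse the decoder of Theorem~\ref{thm:del_rightshift}, note that Proposition~\ref{pro:shift_sub} and Lemma~\ref{lem:misinsert} are direction-agnostic, observe that the VT displacement becomes the signed quantity $k^{+}-k^{-}$, and keep the same redundancy count. However, the one step you flag as needing care is exactly where your proposal goes wrong. You propose inserting the $0$ so that $\max(\Delta-s^{+},0)$ ones lie to its right. Since the true position has $R_1=\Delta-k^{+}+k^{-}$ ones to its right, your insertion point is displaced from the true one by $(s^{+}-k^{+})+k^{-}$ ones, so Lemma~\ref{lem:misinsert} charges up to $(s^{+}-k^{+})+k^{-}$ repair transpositions; adding the $k^{+}+k^{-}$ genuine shifts gives a total of $s^{+}+2k^{-}$ transpositions between $\hat{\bx}$ and $\bx$, which exceeds $s=s^{+}+s^{-}$ whenever $2k^{-}>s^{-}$. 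Concretely, with $s^{+}=0$, $s^{-}=1$, one actual left-shift and one $0$-deletion, $\Delta=R_1-1$ and your rule inserts one ``one'' too far to the right, leaving $2$ transpositions between $\hat{\bx}$ and $\bx$ while $\cC_{H}(n,3)$ can only correct $1$ substitution in $\bpsi(\bx)$. The decoder fails.

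The fix, which is what the paper does, is to insert at $\max(\Delta-s^{+}+s^{-},0)$, i.e.\ to offset by the \emph{net} extremal displacement $s^{+}-s^{-}$ rather than by $s^{+}$. Then the misplacement is $\left|(s^{+}-k^{+})-(s^{-}-k^{-})\right|\leq (s^{+}-k^{+})+(s^{-}-k^{-})$, and adding the $k^{+}+k^{-}$ genuine shifts gives a total of at most $s$, which the BCH code absorbs. (Your own intermediate expression ``$s-(k^{+}-k^{-})$ repair transpositions'' already signals the problem: combined with $k^{+}+k^{-}$ actual shifts it yields $s+2k^{-}$, not $s$.) The rest of your argument --- the case separation by $|\bx'|$ and the parity bit, the application of Lemma~\ref{lem:misinsert}, and the redundancy accounting --- is correct and is the same as the paper's.
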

\begin{proof}
    Similar to Proposition~\ref{pro:shift_vtchange}, assume that there are at most $s^{-}$ left-shifts of $0$s, the change of VT syndrome is $\VT(\bx)-\VT(\bx')=-s^{-}$. Assume that a $0$ is deleted using the same strategy as the proof of Theorem~\ref{thm:del_rightshift}, we also have $\Delta=R_1+k^{+}-k^{-}$, where $k^{+}$ and $k^{-}$ are actual number of right-shifts and left-shifts of $0$ occur. Also, we insert a $0$ in the index of rightmost of  $\max(\Delta-s^{+}+s^{-},0)$ $1$s to obtain $\hat{\bx}$, and denote the inserting index as $p'$. We also have $\max(R_1+k^{+}-k^{-}-s^{+}+s^{-},0)\leq R_1$. When $R_1+k^{+}-k^{-}-s^{+}+s^{-}>0$, there are $((s^{+}-s^{-})-(k^{+}-k^{-}))$ 1s between index $p$ and $p'$.
    %the distance between index $p$ and $p'$ is $((s^{+}-s^{-})-(k^{+}-k^{-}))$ 1s.
    While $R_1+k^{+}-k^{-}-s^{+}+s^{-}\leq 0$, there are $R_1$ 1s between index $p$ and $p'$, where $R_1\leq(s^{+}-s^{-})-(k^{+}-k^{-})$.
    %the distance between index $p$ and $p'$ is $(R_1\leq(s^{+}-s^{-})-(k^{+}-k^{-}))$ 1s.
    Based on Case 1 of Lemma~\ref{lem:misinsert}, we have that there are at least $((s^{+}-s^{-})-(k^{+}-k^{-}))$ adjacent transpositions between $\hat{\bx}$ and $\bx$ and there are $k^{+}+k^{-}$ adjacent transpositions occur in $\bx$. Therefore, the total number of adjacent 
    transpositions such that $\bx$ can be obtained from $\hat{\bx}$ is at most
    \begin{equation*}
        (s^{+}-s^{-})-(k^{+}-k^{-})+(k^{+}+k^{-})=s^{+}-s^{-}+2k^{-}\leq s^{+}+s^{-}=s
    \end{equation*}
    Hence, we can recover $\bpsi(\bx)$ by $\bpsi(\hat{\bx})$ since there are at most $s$ substitutions,  and then we can recover $\bx$. Also, the analysis of redundancy is the same as that in the proof of Theorem~\ref{thm:del_rightshift}.
\end{proof}

The code in \cite{gabrys2017codes} for correcting a single deletion and $s$ adjacent transpositions requires at most $(1+2s)\log (n+2s+1)$ redundancy. Comparing our result to this, if we know that these $s$ adjacent transpositions contain $s^{+}$ right-shifts of $0$ and $s^{-}$ left-shifts of $0$, the redundancy of our code $\cC_{s^{+},s^{-}}(n,a,b)$ can be further reduced to at most $(1+s)\log (n+s+1)+1$ where $s=s^{+}+s^{-}$.

\section{List-Decodable Codes for correcting asymmetric deletions and adjacent transpositions}\label{sec:list}
In this section, we aim to construct {\em List-Decodable} codes with low redundancy. For correcting only at most $t$ $0$-deletions, Dolecek and Anatharam~\cite{dolecek2010repetition} proposed a well-known construction with optimal redundancy $t\log n+O(1)$. Inspired by this, we have the following construction:
\begin{construction}\label{con:listcode}
Let $p$ be a prime such that $p>n$. Construction $\cC_{t,s^{+},s^{-}}(n,K,\ba;p)$ is defined as the set of all $\bx \in \Sigma_2^n$ such that 
% \begin{equation*}
%     \begin{aligned}
%         \cC_{t,s}^{L}(n,\ba;p)=\{\bx\in\Sigma_2^n:&\sum_{i=1}^{n_w+1} iy_{i}\equiv a_1 \bmod p,\\
%         &\sum_{i=1}^{n_w+1} i^2y_{i}\equiv a_2 \bmod p,\\
%         &\sum_{i=1}^{n_w+1} i^3y_{i}\equiv a_3 \bmod p,\\
%         &\quad\vdots\\
%         &\sum_{i=1}^{n_w+1} i^{s+t}y_{i}\equiv a_{s+t} \bmod p.\}
%     \end{aligned}
% \end{equation*}
\begin{equation*}
 \sum_{i=1}^{w+1} i^{m}\phi(\bx)_{i} \equiv a_{m} \bmod p,\; \forall m\in \{1,\dotsc,K\}.
\end{equation*}
where $w=\wt(\bx)$ and $\ba=(a_1,a_2,\dotsc,a_{K})$.
\end{construction}

Let $\bx =x_1\dotsc x_n \in \Sigma_2^n$ be the original vector and $\bx'$ be the received vector after at most $t$ $0$-deletions, at most $s^{+}$ right-shifts of $0$ and at most $s^{-}$ left-shifts of $0$. Hence, we obtain the vector $\bphi(\bx')$ and the corresponding $\ba'$ at the receiver. Let $a'_m=\sum_{i=1}^{w+1} i^m \phi(\bx')_i$ and $a''_m=a_m-a'_m, \forall m\in\{1,\dotsc,K\}$.
\begin{proposition}\label{cla:trans_delta}
Assuming that there is only a single adjacent transposition occurs at the bit $0$ with index $j$ in $\bx$, the change of syndrome $a''_m$ is shown as follows:
\begin{enumerate}
    
    \item Right-shift of $0$ ($01\rightarrow 10$): \begin{equation*}
        a''_m = (j+1)^{m}-j^{m} \bmod p\\=\sum_{i=0}^{m-1}{m\choose i}j^i \bmod p\end{equation*}
    \item Left-shift of $0$ ($10\rightarrow 01$): 
    \begin{equation*}
        a''_m = j^{m}-(j+1)^{m} \bmod p= -\sum_{i=0}^{m-1}{m\choose i}j^i \bmod p\end{equation*}
\end{enumerate}
\end{proposition}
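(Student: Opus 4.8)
The plan is to track the effect of a single adjacent transposition on the representation $\bphi(\bx)$ and then push this through the syndrome map $\bx \mapsto \sum_i i^m \phi(\bx)_i$. First I would recall Claim~\ref{cla:change_trans}, which records exactly how $\bphi$ changes under an adjacent transposition involving the $i$th and $(i+1)$st bits: in the case $01 \rightarrow 10$ we get $(\phi(\bx)'_i, \phi(\bx)'_{i+1}) = (\phi(\bx)_i - 1, \phi(\bx)_{i+1} + 1)$, and in the case $10 \rightarrow 01$ we get $(\phi(\bx)'_i, \phi(\bx)'_{i+1}) = (\phi(\bx)_i + 1, \phi(\bx)_{i+1} - 1)$, with all other coordinates unchanged. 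The only subtlety is bookkeeping between the two indexings: in Claim~\ref{cla:change_trans} the index $i$ refers to the position in the run-length vector, which counts the block of $0$s; if the transposition acts on the bit $0$ sitting with index $j$ in $\bx$, then a right-shift $01\rightarrow 10$ moves that $0$ into (the end of) the block $\phi(\bx)_{i}$ while removing it from the block $\phi(\bx)_{i+1}$, so the block that loses the $0$ is the one at position $i+1$. I would state this identification once at the start so that the later formulas read cleanly.

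Next, for $m \in \{1,\dots,K\}$, I compute $a''_m = a_m - a'_m = \sum_{i} i^m \phi(\bx)_i - \sum_i i^m \phi(\bx)'_i$. Since only two coordinates of $\bphi$ change and the number of runs is unaffected (no $1$ is created or destroyed by a transposition), this difference telescopes. For the right-shift of $0$ at the $0$-bit with index $j$: the coordinate at run-position $i$ gains $1$ and the coordinate at run-position $i+1$ loses $1$, so $a''_m = i^m \cdot 1 + (i+1)^m \cdot(-1)$ evaluated in the appropriate labelling. Re-expressing the run-position label in terms of the bit index $j$ — the relevant boundary between the two blocks sits at position $j$ versus $j+1$ in the natural indexing the proposition uses — yields $a''_m = (j+1)^m - j^m \bmod p$. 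The left-shift case is entirely symmetric, with the roles of gain and loss swapped, giving $a''_m = j^m - (j+1)^m \bmod p$.

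Finally, I would apply the binomial theorem to rewrite $(j+1)^m - j^m = \sum_{i=0}^{m} \binom{m}{i} j^i - j^m = \sum_{i=0}^{m-1} \binom{m}{i} j^i$, and negate this for the left-shift case, which matches the two displayed formulas. I do not expect any real obstacle here: the content is purely the combination of Claim~\ref{cla:change_trans} with a one-line binomial expansion. The one place to be careful — and the only thing worth spelling out — is the index translation between the run-length labelling used in Claim~\ref{cla:change_trans} (where the index counts which block of $0$s is affected) and the bit-index labelling ``the $0$ with index $j$'' used in the statement of the proposition, so that the $j$ appearing in $(j+1)^m - j^m$ is unambiguously the position of the relevant run boundary; I would make this correspondence explicit before concluding.
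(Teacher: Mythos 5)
The paper states Proposition~\ref{cla:trans_delta} without proof, so the only question is whether your argument stands on its own, and it does not: the sign bookkeeping is broken in two places whose errors happen to cancel. First, immediately after correctly quoting Claim~\ref{cla:change_trans}, you assert that a right-shift $01\rightarrow 10$ moves the $0$ \emph{into} the block $\phi(\bx)_i$ and \emph{out of} the block $\phi(\bx)_{i+1}$. This is backwards: in $01\rightarrow 10$ the $0$ moves rightward across the $1$, so it leaves the run preceding that $1$ (run $i$, if this is the $i$-th one) and joins the run following it (run $i+1$) --- exactly what Claim~\ref{cla:change_trans} records as $(\phi(\bx)'_i,\phi(\bx)'_{i+1})=(\phi(\bx)_i-1,\phi(\bx)_{i+1}+1)$. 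Second, your expression $a''_m = i^m\cdot 1+(i+1)^m\cdot(-1)$ does not follow from your own gain/loss description once you use the paper's definition $a''_m=a_m-a'_m$, and the closing ``index translation'' that turns $i^m-(i+1)^m$ into $(j+1)^m-j^m$ is not a relabelling of which run is called $j$ --- it is a sign flip, and no renaming of indices can produce it. The one step you defer as ``to be made explicit'' is precisely the step that cannot be carried out as described.

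It is worth flagging what a careful computation actually gives: applying Claim~\ref{cla:change_trans} with $a''_m=a_m-a'_m$, a right-shift at the boundary of the $j$-th and $(j+1)$-st runs yields $a''_m=j^m-(j+1)^m$ and a left-shift yields $a''_m=(j+1)^m-j^m$, i.e.\ the two cases of Proposition~\ref{cla:trans_delta} with the signs interchanged. This matches the paper's own proof of Theorem~\ref{thm:unique1}, where $10\rightarrow 01$ gives $a-a'\equiv 2i+1=(i+1)^2-i^2$ and $01\rightarrow 10$ gives $a-a'\equiv-(2i+1)$. So the proposition as printed appears to have its two cases swapped relative to the paper's own conventions (harmless downstream, since it merely exchanges the roles of $s^{+}$ and $s^{-}$ in \eqref{eq:differ_deltrans}); a sound proof should either derive the opposite signs and note the discrepancy, or declare at the outset that it uses $a''_m=a'_m-a_m$. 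Your binomial expansion $(j+1)^m-j^m=\sum_{i=0}^{m-1}\binom{m}{i}j^i$ is the only part of the argument that is unambiguously correct.
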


Then, we assume that $t$ $0$-deletions occur in the $0$-run before the bit $1$ with indices $(d_1,d_2,\dotsc,d_t)$, respectively, where $d_1\leq d_2\leq \dotsm \leq d_t$. Also, $s^{+}$ right-shifts of $0$ ($01\rightarrow10$) occur in the bit $0$ with indices $(j_1,j_2,\dotsc,j_{s^{+}})$ and $s^{-}$ left-shifts of $0$ ($10\rightarrow01$) occur in the bit $0$ with indices  $(k_1,k_2,\dotsc,k_{s^{-}})$, where $j_1<j_2<\dotsm<j_{s^{+}}$,  $k_1<k_2<\dotsm<k_{s^{-}}$ and   $j_v\neq k_w, \forall{v\in\{1,\dotsc,s^{+}\},w\in\{1,\dotsc,s^{-}\}}$. For simplicity, let $s=s^{+}+s^{-}$ be the number of total adjacent transpositions. 

Based on Proposition~\ref{cla:trans_delta}, considering all $t$ $0$-deletions and $s$ adjacent transpositions and set $K=t+s$, we have a set of equations showing the change of syndromes for all $m\in\{1,\dotsc,t+s\}$ as follows:

\begin{equation}\label{eq:differ_deltrans}
    a''_m\equiv\sum_{u=1}^t d_u^{m}+\sum_{i=0}^{m-1}\left[{m\choose i}\Big(\sum_{v=1}^{s^{+}} j_v^{i}-\sum_{w=1}^{s^{-}} k_w^{i}\Big)\right] \bmod p.
\end{equation}

%If Construction~\ref{con:listcode} is uniquely-decodable, we can obtain a unique set of solutions $\{d_1,\dotsc,d_t, j_1,\dotsc,j_{\ell}, k_1,\dotsc,k_r\}$ from \eqref{eq:differ_deltrans}. To show this, we will first briefly recall the key idea of the code construction for correcting $t$ $0$-deletions in~\cite{dolecek2010repetition}.

If there are only $t$ $0$-deletions without $s$ adjacent transpositions, Dolecek and Anantharam\cite{dolecek2010repetition} showed that the following system of equations has a unique solution. 
\begin{lemma}[Dolecek and Anatharam~\cite{dolecek2010repetition}]\label{lem:lara_unique}

Let $p$ be a prime such that $p>n$. Assume the number of adjacent transpositions  $s=0$, the system of equations in \eqref{eq:differ_deltrans} can be rewritten as a set of constraints composed of $t$ distinct equations, which are presented as follows:

%Without $s$ adjacent transpositions, \eqref{eq:differ_deltrans} can be rewritten as the following set of constraints with $t$ equations such that
\begin{equation}\label{eq:tdels}
    \begin{cases}
        a_1''\equiv d_1+d_2+\dotsc+d_t \bmod p,\\
        a_2''\equiv d^2_1+d^2_2+\dotsc+d^2_t \bmod p,\\
        \vdots\\
        a_t''\equiv d^t_1+d^t_2+\dotsc+d^t_t \bmod p.\\
    \end{cases}
\end{equation} 
These equations admit a solution $d_1,d_2,\dotsc,d_t$ and $d_1\leq d_2\leq\dotsm\leq d_t$.
\end{lemma}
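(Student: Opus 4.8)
The plan is to first record that, under the hypothesis $s^{+}=s^{-}=0$, the general syndrome-difference identity \eqref{eq:differ_deltrans} collapses to the power-sum system \eqref{eq:tdels}, then prove existence (trivially, via the genuine deletion positions) and uniqueness of a sorted solution; the uniqueness is the real content and is obtained from Newton's identities over $\mathbb{F}_p$.

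\textbf{Step 1 (Reduction and existence).} Putting $s^{+}=s^{-}=0$ in \eqref{eq:differ_deltrans} deletes the binomial sums and leaves $a_m''\equiv\sum_{u=1}^{t}d_u^{m}\bmod p$ for every $m$; specializing to $m=1,2,\dotsc,t$ yields exactly \eqref{eq:tdels}. Moreover $a_m''=a_m-a_m'$ is computed from $\bx$ and $\bx'$, and a $0$-deletion occurring in the $i$-th $0$-run decreases the syndrome $\sum_{i}i^{m}\phi(\bx)_i$ by $i^{m}$; hence the true deletion positions $d_1\le d_2\le\dotsm\le d_t$, which lie in $[1:n]$, satisfy \eqref{eq:tdels}. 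So a non-decreasing solution exists.

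\textbf{Step 2 (Uniqueness).} Let $(d_1,\dotsc,d_t)$ and $(e_1,\dotsc,e_t)$ be two solutions of \eqref{eq:tdels}, each with entries in $[1:n]$ and sorted non-decreasingly. Write $p_m=\sum_{u=1}^{t}d_u^{m}$ and $p_m'=\sum_{u=1}^{t}e_u^{m}$, so $p_m\equiv p_m'\bmod p$ for $m=1,\dotsc,t$. Newton's identities give, for each $k\le t$, that the $k$-th elementary symmetric function $\sigma_k$ of $(d_1,\dotsc,d_t)$ obeys
\begin{equation*}
 k\,\sigma_k \;=\; \sum_{m=1}^{k}(-1)^{m-1}\,\sigma_{k-m}\,p_m ,
\end{equation*}
with $\sigma_0=1$, and likewise for the $e_u$. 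Since $p>n\ge t\ge k$, each $k$ is invertible modulo $p$, so these relations recursively determine $\sigma_1,\dotsc,\sigma_t$ modulo $p$ from $p_1,\dotsc,p_t$. Therefore $\sigma_k(d_1,\dotsc,d_t)\equiv\sigma_k(e_1,\dotsc,e_t)\bmod p$ for all $k$, i.e.
\begin{equation*}
 \prod_{u=1}^{t}(X-d_u)\;=\;\prod_{u=1}^{t}(X-e_u)\quad\text{in }\mathbb{F}_p[X].
\end{equation*}
As $\mathbb{F}_p[X]$ is a unique factorization domain, the two monic polynomials share the same multiset of roots in $\mathbb{F}_p$, so $\{d_u\bmod p\}=\{e_u\bmod p\}$ as multisets. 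Finally, the reduction map $[1:n]\to\mathbb{F}_p$ is injective because $p>n$, hence $\{d_u\}=\{e_u\}$ as multisets of integers, and the sorting condition forces $d_u=e_u$ for every $u$. Thus \eqref{eq:tdels} has a unique non-decreasing solution, namely the true deletion positions.

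\textbf{Main obstacle.} There is no deep difficulty, as this is essentially the result of Dolecek and Anantharam; the points needing care are (i) that every candidate solution lies in $[1:n]$, a range of size less than $p$, so reduction modulo $p$ is injective on the relevant integers, and (ii) the invertibility of $1,2,\dotsc,t$ modulo $p$ that makes Newton's identities solvable — both furnished by $p>n$. A minor subtlety is that the $d_u$ need not be distinct (several $0$s may be deleted from the same run), which is why the argument is phrased with multisets of roots rather than sets of roots.
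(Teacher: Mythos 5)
Your proof is correct and follows the same route the paper intends: the paper states this lemma as a citation to Dolecek and Anantharam without reproving it, but the technique it invokes (and spells out for the analogous Lemma~\ref{lem:uni_pos&neg}) is exactly your argument — recover the elementary symmetric functions from the power sums via Newton's identities over $\mathrm{GF}(p)$, using that $1,\dotsc,t$ are invertible modulo $p$, and then use $p>n$ to identify the multiset of roots with the integer deletion positions. Your explicit attention to the multiset (rather than set) of roots and to the injectivity of reduction modulo $p$ on $[1:n]$ is exactly the care the cited result requires.
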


% \begin{proof}
% Define the polynomial
% \begin{equation}\label{eq:poly}
%     \sigma(x)=\prod_{u=1}^t(1-d_{u}x)\bmod p=\sum_{u=1}^t \sigma_{u}x^u\bmod p.
% \end{equation}
% Define 
% \begin{equation*}
%     S(x)=\sum_{m=1}^{\infty} \Big(\sum_{u=1}^t d_{u}^m\Big)x^m.
% \end{equation*}
% where we denote $S_m=\sum_{u=1}^t d_{u}^m\bmod p$.

% Then, we have the Newton’s identities over $\mathrm{GF}(p)$ as follows
% %\begin{equation}
%     \begin{align}
%         \sigma(x)S(x)+x\sigma'(x)&\equiv0\bmod p\nonumber\\
%         \sum_{m=0}^{u-1} \sigma_{m}S_{u-m}+u\sigma_{u}&\equiv0\bmod p,\label{eq:newton}\; u\ge1.
%     \end{align}
% %\end{equation}
% where $\sigma'(x)$ is derivative of $\sigma(x)$.

% From~\eqref{eq:newton}, $\sigma_m$ can be recursively obtained by $\{S_1,\dotsc,S_m\}$ and $\{\sigma_1,\dotsc,\sigma_{m-1}\}$, where $\{S_1,\dotsc,S_m\}=\{a''_1,\dotsc,a''_m\}$. Therefore, all the coefficients of the polynomial $\sigma(x)=\sum_{u=1}^t \sigma_{u}x^u\bmod p$ are known. Since \eqref{eq:poly} have at most $t$ solutions by Lagrange Theorem and $d_u<p$, for all  $u\in\{1,\dotsc,t\}$, hence all incongruent solutions are distinguishable and the solution set $\{d_1,d_2,\dotsc,d_t\}$ is unique. 
% \end{proof}

Following the technique in \cite{dolecek2010repetition}, if we can uniquely determine the ordered solution set $\{d_1,\ldots,d_t,j_1,\ldots,j_{s^{+}},k_1,\ldots,k_{s^{-}}\}$ of \eqref{eq:differ_deltrans}, we also can correct $t$ $0$-deletions, $s^{+}$ right-shifts of $0$, and $s^{-}$ left-shifts of $0$ with at most $(t+s)\log n$ bits of redundancy, where $s=s^{+}+s^{-}$. However, to the best of our knowledge, the result is not known to us and is still open for future work. 
%However, comparing \eqref{eq:differ_deltrans} and \eqref{eq:tdels}, it can be seen that the degrees of $d$ and $j/k$ are not the same in \eqref{eq:differ_deltrans}, where the degree of $j/k$ is always 1 smaller than that of $d$. Therefore, we cannot obtain the unique solution set $\{d_1,\dotsc,d_t,j_1,\dotsc,j_{\ell},k_1,\dotsc,k_r\}$ from \eqref{eq:differ_deltrans}, which also means Construction~\ref{con:listcode} cannot work for decoding uniquely.
    
In this section, we focus on {\em List-Decodable} code $\cC_{t,s^{+},s^{-}}(n,\kappa,\ba;p)$ for correcting at most $t$ $0$-deletions, at most $s^{+}$ right-shifts of $0$ and $s^{-}$ left-shifts of $0$. Set $K=\kappa$ in Construction~\ref{con:listcode}, where $\kappa=\max(t,s+1)$, $s=s^{+}+s^{-}$ and $p$ is a prime such that $p>n$. For the following system of equations, we can determine the solution set uniquely.

\begin{lemma}\label{lem:uni_pos&neg}
Let $p$ is a prime such that $p>n$ and $s=s^{+}+s^{-}$. A set of constraints with $s$ equations such that
\begin{equation}\label{eq:uni_pos&neg}
    \begin{cases}
        b_1''\equiv \sum_{v=1}^{s^{+}} j_v^{1}-\sum_{w=1}^{s^{-}} k_w^{1} \bmod p,\\
        b_2''\equiv \sum_{v=1}^{s^{+}} j_v^{2}-\sum_{w=1}^{s^{-}} k_w^{2} \bmod p,\\
        \vdots\\
        b_s''\equiv \sum_{v=1}^{s^{+}} j_v^{s}-\sum_{w=1}^{s^{-}} k_w^{s} \bmod p.\\
    \end{cases}
\end{equation} 
uniquely determines the set of solutions of $\{j_1,\dotsc,j_{s^{+}},k_1,\dotsc,k_{s^{-}}\}$, where $j_1<j_2<\dotsm<j_{s^{+}}$, $k_1<k_2<\dotsm<k_{s^{-}}$ and $j_v\neq k_w, \forall{v\in\{1,\dotsc,s^{+}\},w\in\{1,\dotsc,s^{-}\}}$.
\end{lemma}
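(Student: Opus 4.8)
The plan is to establish uniqueness by colliding two hypothetical solutions and reducing the problem to a classical power-sum recovery. Suppose both $\{j_1,\dots,j_{s^{+}},k_1,\dots,k_{s^{-}}\}$ and $\{j'_1,\dots,j'_{s^{+}},k'_1,\dots,k'_{s^{-}}\}$ satisfy the system \eqref{eq:uni_pos&neg}, where all indices lie in $[1:n]$, the $j$'s and the $k$'s (and likewise the $j'$'s and the $k'$'s) are strictly increasing, and $\{j_v\}\cap\{k_w\}=\{j'_v\}\cap\{k'_w\}=\emptyset$. Subtracting the two copies of the system and moving the negative terms across, for every $m\in\{1,\dots,s\}$ we obtain
\begin{equation*}
    \sum_{v=1}^{s^{+}} j_v^{m}+\sum_{w=1}^{s^{-}} (k'_w)^{m}\equiv \sum_{v=1}^{s^{+}} (j'_v)^{m}+\sum_{w=1}^{s^{-}} k_w^{m} \bmod p .
\end{equation*}
Hence the two size-$s$ multisets $A\deq\{j_1,\dots,j_{s^{+}}\}\uplus\{k'_1,\dots,k'_{s^{-}}\}$ and $B\deq\{j'_1,\dots,j'_{s^{+}}\}\uplus\{k_1,\dots,k_{s^{-}}\}$ have the same power sums modulo $p$ of orders $1,\dots,s$, and trivially also of order $0$.

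Next I would invoke Newton's identities over the field $\bbF_p$. Because $s$ is a constant and $p>n$, we have $p>s$, so $1,2,\dots,s$ are units in $\bbF_p$, and the power sums of orders $0,1,\dots,s$ of a multiset determine its elementary symmetric functions $e_0,e_1,\dots,e_s$. Therefore the monic degree-$s$ polynomials $\prod_{a\in A}(x-a)$ and $\prod_{b\in B}(x-b)$ coincide in $\bbF_p[x]$, and by unique factorization $A=B$ as multisets over $\bbF_p$. Since every element of $A$ and of $B$ lies in $[1:n]$ and $p>n$, reduction modulo $p$ is injective on $[1:n]$, so $A=B$ as multisets of integers.

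It remains to upgrade the multiset identity $\{j_v\}\uplus\{k'_w\}=\{j'_v\}\uplus\{k_w\}$ to the two identities $\{j_v\}=\{j'_v\}$ and $\{k_w\}=\{k'_w\}$, and this is precisely where the separation hypotheses are used. Fix an integer $x$ and let $\alpha,\beta,\alpha',\beta'\in\{0,1\}$ record membership of $x$ in $\{j_v\},\{k_w\},\{j'_v\},\{k'_w\}$ respectively (these are indicator values, not counts, because each of the four sequences is strictly increasing). Comparing the multiplicity of $x$ on the two sides yields $\alpha+\beta'=\alpha'+\beta$, i.e. $\alpha-\alpha'=\beta-\beta'$; call this common difference $\delta\in\{-1,0,1\}$. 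If $\delta=1$ then $\alpha=\beta=1$, so $x\in\{j_v\}\cap\{k_w\}$, contradicting disjointness for the first solution; if $\delta=-1$ then $\alpha'=\beta'=1$, so $x\in\{j'_v\}\cap\{k'_w\}$, contradicting disjointness for the second. Hence $\delta=0$ for every $x$, i.e. $\{j_v\}=\{j'_v\}$ and $\{k_w\}=\{k'_w\}$ as sets; since each is listed in increasing order, the solution of \eqref{eq:uni_pos&neg} is unique.

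The argument is short, and the points I anticipate needing care with are essentially bookkeeping: tracking the signs so that $A$ and $B$ are genuine multisets of the same size $s$, and verifying that Newton's identities apply, i.e. that $p$ divides none of $1,\dots,s$, which is automatic when $p>n$ with $s$ constant. The genuinely new ingredient relative to the pure-deletion statement (Lemma~\ref{lem:lara_unique}) is the final paragraph: the disjointness constraints $j_v\neq k_w$ are what allow one to tell the ``$+$'' indices apart from the ``$-$'' indices once the combined multiset has been identified.
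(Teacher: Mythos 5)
Your proof is correct, but it takes a genuinely different route from the paper's. The paper works with the rational generating function $\sigma(x)=\sigma^{+}(x)/\sigma^{-}(x)\bmod x^{s}$, where $\sigma^{+}(x)=\prod_v(1-j_vx)$ and $\sigma^{-}(x)=\prod_w(1-k_wx)$, and uses the generalized Newton identity $\sigma^{*}(x)S^{*}(x)+x(\sigma^{*}(x))'=0$ to recover the coefficients of $\sigma^{*}$ from the syndromes $b''_m$, then appeals to the bound on the number of roots and to $p>n$ to conclude uniqueness. You instead collide two hypothetical solutions, move the negated power sums across the congruence to form two size-$s$ multisets $A=\{j_v\}\uplus\{k'_w\}$ and $B=\{j'_v\}\uplus\{k_w\}$ with matching power sums of orders $1,\dots,s$, apply the classical Newton identities over $\bbF_p$ (valid since $p>n\ge s$) to equate their elementary symmetric functions and hence the multisets, lift to the integers via $p>n$, and finally use the disjointness hypothesis $j_v\neq k_w$ to split the combined multiset back into its ``$+$'' and ``$-$'' parts. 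Your version has two advantages: it avoids the delicate step of extracting the roots and poles of a rational function from its truncation modulo $x^s$ (which in the paper's argument is really a P\'ad\'e/key-equation step that is passed over quickly), and it makes explicit exactly where the hypothesis $j_v\neq k_w$ enters --- the paper never isolates this. What the paper's route buys in exchange is that it is constructive in spirit and tracks the actual syndrome-decoding procedure used later in Theorem~\ref{thm:list}, whereas your argument as written is a pure uniqueness proof. Both are valid proofs of the lemma as stated.
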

We note that Lemma \ref{lem:uni_pos&neg} is similar to Lemma \ref{lem:lara_unique}. The only difference is that the coefficients of all terms in \eqref{eq:tdels} in Lemma \ref{lem:lara_unique} are positive while the coefficients of all terms in \eqref{eq:uni_pos&neg} in Lemma \ref{lem:uni_pos&neg} can be either positive or negative. We can use the same technique in Lemma \ref{lem:lara_unique} to prove Lemma \ref{lem:uni_pos&neg}.
%Since the coefficients of all terms are positive in Lemma~\ref{lem:lara_unique}, we need to extend it to a more general case in which the coefficients of all terms can be either positive or negative in this lemma.

% \begin{proof}
% We denote $\bj=\{j_1,j_2,\dotsc,j_{\ell}\}$, $\bk=\{k_1,k_2,\dotsc,k_r\}$, and let $\bar{\bk}=\{p-k_1,p-k_2,\dotsc,p-k_r\}$. Since $1\le k_w\le n,\forall w\{1,\dotsc,r\}$, then $n\le \bar{k}_w\le 2n,\forall w\{1,\dotsc,r\}$ due to $p>2n$. It can be shown that $\bj\cap\bar{\bk}$ still be $\emptyset$, hence \eqref{eq:uni_pos&neg} is equivalent to the following:
% \begin{equation}
%     \begin{cases}\label{eq:equiv_posneg}
%         b_1''\equiv \sum_{v=1}^{\ell} j_v^{1}+\sum_{w=1}^{r} \bar{k}_w^{1} \bmod p,\\
%         b_2''\equiv \sum_{v=1}^{\ell} j_v^{2}+\sum_{w=1}^{r} \bar{k}_w^{2} \bmod p,\\
%         \vdots\\
%         b_s''\equiv \sum_{v=1}^{\ell} j_v^{s}+\sum_{w=1}^{r} \bar{k}_w^{s} \bmod p.\\
%     \end{cases}
% \end{equation} 
% similar to the proof of Lemma~\ref{lem:lara_unique}, we can obtain the unique solution set $\{j_1,j_2,\dotsc,j_{\ell},\bar{k}_1,\bar{k}_2,\dotsc,\bar{k}_r\}$ from \eqref{eq:equiv_posneg}. Also, the solution $\{k_1,k_2,\dotsc,k_r\}$ uniquely follows from $\{\bar{k}_1,\bar{k}_2,\dotsc,\bar{k}_r\}$. Therefore, we can show that a set of constraints such that \eqref{eq:uni_pos&neg} is capable of uniquely determining the solution set $\{j_1,\dotsc,j_{\ell},k_1,\dotsc,k_r\}$.
% \end{proof}
\begin{proof}
Define the polynomials
\begin{equation*}
    \sigma^{+}(x)=\prod_{v=1}^{s^{+}}(1-j_{v}x)\quad \mbox{and}\quad \sigma^{-}(x)=\prod_{w=1}^{s^{-}}(1-k_{w}x).
\end{equation*}
Let $\sigma(x)=\sum_{m=0}^s \sigma_m x^m$ be defined by
\begin{equation*}
    \sigma(x)=\sigma^{+}(x)/\sigma^{-}(x) \bmod x^{s}
\end{equation*}
Then, we define $\sigma^{*}(x)=\sigma(x)\bmod p$ and $\sigma^{*}_m=\sigma_m\bmod p$.

We also define 
\begin{equation*}
    S^{*}(x)=\sum_{m=1}^{\infty} \Big(\sum_{v=1}^{s^{+}} j_{v}^m-\sum_{w=1}^{s^{-}} k_{w}^m\Big)x^m   \bmod p
\end{equation*}
and $S^{*}_m=\sum_{v=1}^{s^{+}} j_{v}^m-\sum_{w=1}^{s^{-}} k_{w}^m \bmod p$.

Then, we have the following generalized Newton’s identities~\cite[Lemma 10.3]{roth2006introduction} over $\mathrm{GF}(p)$ as follows
     \begin{align}
         &\sigma^{*}(x)S^{*}(x)+x(\sigma^{*}(x))'=0, \quad\text{or equivalently,}\nonumber\\
         &\sum_{m=0}^{u-1} \sigma^{*}_{m}S^{*}_{u-m}+u\sigma^{*}_{u}=0,\label{eq:newton_pos&neg}\; u\ge1.
     \end{align}
 %\end{equation}
where $(\sigma^{*}(x))'$ is derivative of $\sigma^{*}(x)$.

Using a similar technique as in the proof of Lemma~\ref{lem:lara_unique}, from~\eqref{eq:newton_pos&neg}, $\sigma^{*}_m$ can be recursively obtained by $\{S^{*}_1,\dotsc,S^{*}_m\}$ and $\{\sigma^{*}_1,\dotsc,\sigma^{*}_{m-1}\}$, where $\{S^{*}_1,\dotsc,S^{*}_m\}=\{b''_1,\dotsc,b''_m\}$, which follows that all the coefficients of the polynomial $\sigma^{*}(x)=\sum_{m=0}^s \sigma_m x^m\bmod p$ are known. Further, we know that the polynomial $\sigma^{*}(x)$ has at most $s$ solutions by Lagrange's Theorem. Denote $I_0 = \{j_1, \dotsc, j_{s^{+}}, k_1, \dotsc, k_{s^{-}}\}$, where each element in $I_0$ has a value less than $p$. Additionally, we define $I_r = \{j_1 + rp, \dotsc, j_{s^{+}} + rp, k_1 + rp, \dotsc, k_{s^{-}} + rp\}$ as one of the incongruent solution sets of $I_0$. Note that $p > n$, ensuring that $I_0 \cap I_r = \emptyset$ for any $r\neq 0$. Consequently, all the incongruent solutions are distinguishable. Hence, we can conclude that the solution set $\{j_1, \dotsc, j_{s^{+}}, k_1, \dotsc, k_{s^{-}}\}$ is unique.
\end{proof}

\begin{theorem}\label{thm:list}
Let $p$ be a prime such that $p>n$. The code  $\cC_{t,s^{+},s^{-}}(n,\kappa,\ba;p)$ as defined in Construction~\ref{con:listcode} has redundancy $\kappa\log n$, where $\kappa=\max(t,s+1)$ and $s=s^{+}+s^{-}$.
If there are at most $t$ $0$-deletions, $s^{+}$ right-shifts of $0$ and $s^{+}$ left-shifts of $0$, $\cC_{t,s^{+},s^{-}}(n,\kappa,\ba;p)$ is a list-decodable code with list size $O(n^{\min(t,s+1)})$.
\end{theorem}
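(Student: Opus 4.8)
The plan is to combine the redundancy bookkeeping (trivial from the construction) with a counting argument for the list size, leveraging the uniqueness results already proved. First I would observe that $\cC_{t,s^{+},s^{-}}(n,\kappa,\ba;p)$ is defined by $\kappa$ modular constraints, each over $\mathrm{GF}(p)$ with $p < 2n$ (choosing $p$ to be the least prime exceeding $n$, which by Bertrand's postulate satisfies $p < 2n$), so the redundancy is at most $\kappa \log p \le \kappa(\log n + 1) = \kappa\log n + O(1)$. This disposes of the redundancy claim, so the real content is the list-decoding bound.

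For the list-size bound, fix a received word $\bx'$ and let $a'_m = \sum_{i} i^m\phi(\bx')_i$ and $a''_m = a_m - a'_m$ for $m \in \{1,\dotsc,\kappa\}$. Any codeword $\bx$ that can produce $\bx'$ under at most $t$ $0$-deletions, $s^+$ right-shifts and $s^-$ left-shifts must satisfy the system \eqref{eq:differ_deltrans} for $m \in \{1,\dotsc,\kappa\}$, where the unknowns are the deletion positions $(d_1,\dotsc,d_t)$ and the transposition positions $(j_1,\dotsc,j_{s^+},k_1,\dotsc,k_{s^-})$. The strategy is a case split on whether $t \le s+1$ or $t > s+1$. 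In the case $t \le s+1$, so $\kappa = s+1$: the transposition part of the system, namely the top-order contributions giving \eqref{eq:uni_pos&neg}, is \emph{not} by itself enough because the deletion terms $\sum_u d_u^m$ contaminate the equations. Instead I would argue that, once we \emph{guess} the multiset $\{d_1,\dotsc,d_t\}$ of deletion positions (there are at most $O(n^t)$ such multisets, since each $d_u \in [1:n]$), the remaining $s = \kappa - 1$ high-degree syndromes after subtracting the known deletion contributions determine $\{j_v\},\{k_w\}$ uniquely by Lemma~\ref{lem:uni_pos&neg}, and then $\bx$ is determined from $\bx'$ together with these positions. Wait --- we only have $\kappa = s+1$ equations, and after using all of them we still need to pin down $\{d_u\}$; the cleaner accounting is: guess the $t$ deletion positions ($O(n^t)$ choices, but $t \le s+1 = \kappa$, and in fact $t \le \min(t,s+1)$), subtract their contribution from $a''_1,\dotsc,a''_s$, apply Lemma~\ref{lem:uni_pos&neg} to the resulting $s$ equations to recover the transposition positions uniquely, and reconstruct $\bx$. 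Each guess yields at most one candidate codeword, so the list has size $O(n^{\min(t,s+1)}) = O(n^t)$.

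In the complementary case $t > s+1$, so $\kappa = t$: here I would instead guess the $s$ transposition positions (there are $O(n^s)$ multisets; note $s \le s+1 \le t$ so $s = \min(t,s+1)$ only when $s \le t-1$, which holds), subtract the transposition contributions from $a''_1,\dotsc,a''_t$ to obtain a system of the form \eqref{eq:tdels} in the deletion unknowns $d_1,\dotsc,d_t$, and invoke Lemma~\ref{lem:lara_unique} (Dolecek--Anantharam) to recover $\{d_1,\dotsc,d_t\}$ uniquely. Again each of the $O(n^s)$ guesses produces at most one codeword, giving list size $O(n^{\min(t,s+1)})$. Combining the two cases yields the claimed $O(n^{\min(t,s+1)})$ list size with $\kappa = \max(t,s+1)$ syndromes and hence $\kappa\log n + O(1)$ redundancy.

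The main obstacle I anticipate is making the ``subtract the guessed contributions, then apply the uniqueness lemma'' step fully rigorous: one must check that after subtracting the contribution of the guessed positions, the residual system has \emph{exactly} the algebraic shape required by Lemma~\ref{lem:lara_unique} or Lemma~\ref{lem:uni_pos&neg} (in particular that the binomial cross-terms in \eqref{eq:differ_deltrans} collapse correctly and that the number of residual equations matches the number of residual unknowns), and that a wrong guess either yields no valid codeword or at worst one codeword that we can discard --- so the count $O(n^{\min(t,s+1)})$ is an upper bound on the list and not merely on the number of guesses. A secondary subtlety is that the deletion/transposition positions are not literally free in $[1:n]$ but constrained (deletions occur in $0$-runs, $j_v \ne k_w$, monotonicity), which only \emph{reduces} the number of guesses, so the $O(n^{\min(t,s+1)})$ bound is safe; I would state this explicitly to avoid any overcounting worry.
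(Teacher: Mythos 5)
Your proposal is correct and follows essentially the same strategy as the paper: enumerate one part of the error, then invoke Lemma~\ref{lem:uni_pos&neg} (to pin down the shift positions) and Lemma~\ref{lem:lara_unique} (to pin down the deletion positions) on the residual power-sum system obtained after the binomial change of variables, so the list size is bounded by the number of guesses and the redundancy by $\kappa\log p$. The only cosmetic difference is that you enumerate error \emph{positions} directly ($O(n^{\min(t,s+1)})$ choices), whereas the paper enumerates residual \emph{syndrome values} $\be\in\Sigma_p^{s+1}$ (respectively $\Sigma_p^{t}$); both routes pass through the same two lemmas and yield the claimed bound.
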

\begin{proof}
Let $\bx =x_1\dotsc x_n \in \Sigma_2^n$ be the original vector and $\bx'$ be the received vector after at most $t$ $0$-deletions, $s^{+}$ right-shifts of $0$ and $s^{+}$ left-shifts of $0$. Hence, we compute $\bphi(\bx')$ and $\ba'$ from $\bx'$. Also, we obtain $\ba''=\ba-\ba'$, where $\ba''=(a''_1,\dotsc,a''_{\kappa})$.

Assume $t\ge s+1$ with $s=s^{+}+s^{-}$ and expand \eqref{eq:differ_deltrans}. We have the following set of equations with $\kappa=t$:
\begin{equation}\label{eq:list_del&trans}
    \begin{cases}
        a_1''\equiv \sum_{u=1}^t d_u+(s^{+}-s^{-}) \bmod p,\\
        a_2''\equiv \sum_{u=1}^t d_u^2+(s^{+}-s^{-})+2(\sum_{v=1}^{s^{+}} j_v^{1}-\sum_{w=1}^{s^{-}} k_w^{1}) \bmod p,\\
        \vdots\\
        a_{t}''\equiv \sum_{u=1}^t d_u^{t}+(s^{+}-s^{-})+t(\sum_{v=1}^{s^{+}} j_v^{1}-\sum_{w=1}^{s^{-}} k_w^{1})\\\qquad+\dotsm+t(\sum_{v=1}^{s^{+}} j_v^{t-1}-\sum_{w=1}^{s^{-}} k_w^{t-1}) \bmod p.\\
    \end{cases}
\end{equation}

%Since we cannot obtain the unique solution set from \eqref{eq:list_del&trans}, we need to fix values of part of \eqref{eq:list_del&trans} for list-decodable codes. Thus, we fix a set of $s+1$ values $\be=\{e_1,\dotsc,e_{s+1}\}$ as follows.
Recall that we decode uniquely if we can determine the unique set of solutions of \eqref{eq:list_del&trans}. However, we are not aware of any method to obtain the unique solution of  \eqref{eq:list_del&trans}. We know that, given $\be=(e_1,\dotsc,e_{s+1})$, we can solve the following equations uniquely.
\begin{equation}\label{eq:list}
    \begin{cases}
        e_1\equiv s^{+}-s^{-} \bmod p,\\
        e_2\equiv (s^{+}-s^{-})+2(\sum_{v=1}^{s^{+}} j_v^{1}-\sum_{w=1}^{s^{-}} k_w^{1}) \bmod p,\\
        \vdots\\
        e_{s+1}\equiv (s^{+}-s^{-})+(s+1)(\sum_{v=1}^{s^{+}} j_v^{1}-\sum_{w=1}^{s^{-}} k_w^{1})\\\qquad+\dotsm+(s+1)(\sum_{v=1}^{s^{+}} j_v^{s}-\sum_{w=1}^{s^{-}} k_w^{s}) \bmod p.\\
    \end{cases}
\end{equation}

%It can be observed that each $e_m$ can be represented by $\{e_1,\dotsc,e_{m-1}\}$ in \eqref{eq:list}. 
Indeed, by denoting $\be'=(e'_1,\dotsc,e'_{s+1})$ with $me'_m=e_m-\sum_{i=1}^{m-1}\big[{m\choose {i-1}}e'_i\big]$ for all $m\in\{2,\dotsc,s+1\}$ and $e'_1=e_1$, we can rearrange \eqref{eq:list} to be similar to Lemma \ref{lem:uni_pos&neg} as follows.
\begin{equation}\label{eq:list_re}
    \begin{cases}
        e'_1\equiv s^{+}-s^{-} \bmod p,\\
        e'_2\equiv \sum_{v=1}^{s^{+}} j_v^{1}-\sum_{w=1}^{s^{-}} k_w^{1}\bmod p,\\
        \vdots\\
        e'_{s+1}\equiv \sum_{v=1}^{s^{+}} j_v^{s}-\sum_{w=1}^{s^{-}} k_w^{s} \bmod p.\\
    \end{cases}
\end{equation}
%Thus, we can notice that \eqref{eq:list_re} is the same as \eqref{lem:uni_pos&neg} in Lemma~\ref{lem:uni_pos&neg}. By Lemma~\ref{lem:uni_pos&neg}, we can obtain the unique solution set $\{j_1,\dotsc,j_{\ell},k_1,\dotsc,k_{r}\}$ from \eqref{eq:list_re}.
Therefore, based on Lemma~\ref{lem:uni_pos&neg}, we obtain the unique set of solutions of $\{j_1,\dotsc,j_{s^{+}},k_1,\dotsc,k_{s^{-}}\}$ from \eqref{eq:list_re}.

When the set of solution $\{j_1,\dotsc,j_{s^{+}},k_1,\dotsc,k_{s^{-}}\}$ is obtained, we compute $(e_{s+2},\dotsc,e_{t})$.
\begin{equation}\label{eq:givenvalue_2}
    e_{m}=\sum_{i=0}^{m-1}\left[{m\choose i}\Big(\sum_{v=1}^{s^{+}} j_v^{i}-\sum_{w=1}^{s^{-}} k_w^{i}\Big)\right] \bmod p.
\end{equation}
where $m\in\{s+2,\dotsc,t\}$.

%Further, we substitute \eqref{eq:list} and \eqref{eq:givenvalue_2} to \eqref{eq:list_del&trans} and 
Denote $\ba^{*}=(a_1^{*},\dotsc,a_t^{*})$ with $a^{*}_m=a''_m-e_m, \forall m\in\{1,\dotsc,t\}$. 
Substituting \eqref{eq:list} and \eqref{eq:givenvalue_2} into \eqref{eq:list_del&trans}, we obtain the following set of equations.
\begin{equation}\label{eq:list_del}
    \begin{cases}
        a_1^{*}\equiv \sum_{u=1}^t d_u\bmod p,\\
        a_2^{*}\equiv \sum_{u=1}^t d_u^2 \bmod p,\\
        \vdots\\
        a_{t}^{*}\equiv \sum_{u=1}^t d_u^{t} \bmod p.\\
    \end{cases}
\end{equation}
The set of equations~\eqref{eq:list_del} provides a unique set of solutions $\{d_1,\dotsc,d_t\}$ by Lemma~\ref{lem:lara_unique}. Therefore, the unique solution of all positions of $0$-deletions and adjacent transpositions $\{d_1,\dotsc,d_t,j_1,\dotsc,j_{s^{+}},k_1,\dotsc,k_{s^{-}}\}$ is obtained. 
So, for each $\be=(e_1,\ldots,e_{s+1})\in\Sigma_p^{s+1}$, we obtain the set $\{d_1,\dotsc,d_t,j_1,\dotsc,j_{s^{+}},k_1,\dotsc,k_{s^{-}}\}$. Thus, we do list-decoding with the list size $O(n^{s+1})$ since $p=O(n)$. Moreover, the size of the list-decodable code $\cC_{t,s^{+},s^{-}}(n,\kappa,\ba;p)$ with $\kappa=t$ is at least $2^n/(2n)^t$, that is, we need at most $\kappa\log n$ bits of redundancy to construct the code
$\cC_{t,s^{+},s^{-}}(n,\kappa,\ba;p)$.
%When $t<s+1$, the only difference is that when $\kappa=s+1$ in \eqref{eq:list_del&trans} we need to fix a set of $t$ values as
%\begin{equation*}
%    \begin{cases}
%        \bar{e}_1\equiv \sum_{u=1}^t d_u\bmod p,\\
%        \bar{e}_2\equiv \sum_{u=1}^t d_u^2 \bmod p,\\
%        \vdots\\
%        \bar{e}_{t}\equiv \sum_{u=1}^t d_u^{t} \bmod p.\\
%    \end{cases} 
%\end{equation*}

%Therefore, we need to fix the values of $(e_1,\dotsc,e_{s+1})$ with $t$ of $\sum_{m} i^m\bphi(\bx)_i$ constraints or fix $(\bar{e}_1,\dotsc,\bar{e}_{t})$ and with $s+1$ of $\sum_{m} i^m\bphi(\bx)_i$ constraints. For each given value and constraint, the value is between 0 and $p-1$. There always exists a prime $2n<p<4n$ such that $p>2n$. Thus, the size of the list-decodable code $\cC_{t,s}^{L}(n,\kappa,\ba;p)$ is at least $2^n/(4n)^{\max(t,s+1)}$.
When $t<s+1$, we proceed similarly to the case when $t\ge s+1$. In this case, for each given $\be\in\Sigma_p^t$, we obtain a solution set. Thus, we can do list-decoding with the list-size $O(n^t)$ with $p=O(n)$. The size of the code $\cC_{t,s^{+},s^{-}}(n,\kappa,\ba;p)$ is at least $2^n/(2n)^{s+1}$.

Then, we conclude that the list-decodable code $\cC_{t,s^{+},s^{-}}(n,\kappa,\ba;p)$ can correct at most $t$ $0$-deletions, $s^{+}$ right-shifts of $0$ and $s^{-}$ left-shifts of $0$ with list size at most $O(n^{\min(t,s+1)})$ and with redundancy $\kappa\log n+O(1)$ bits, where $s=s^{+}+s^{-}$, both $t,s$ are constant and $\kappa=\max(t,s+1)$.
\end{proof}

The decoding algorithm of the list-decodable code $\cC_{t,s^{+},s^{-}}(n,\kappa,\ba;p)$ is summarized in Algorithm~\ref{alg:listdecode}, where $s=s^{+}+s^{-}$ and $t>s+1$.

\begin{algorithm}
\label{alg:listdecode}
\SetAlgoLined
\KwInput{Corrupted Sequence $\bx'$}
\KwOutput{$O(n^{s+1})$ possible sequences, including the original codeword $\bx\in \cC_{t,s^{+},s^{-}}(n,\kappa,\ba;p)$} 
\textbf{Initialization:} Compute $\bphi(\bx')$ based on $\bx'$ and compute $\ba''$ to obtain~\eqref{eq:list_del&trans}.
%$\bx'=0^{y'_1}10^{y'_2}10^{y'_3}\dotsc 10^{y'_{n_w+1}}$,

%$\by'=(y'_1,y'_2,\dotsc,y'_{n_w+1})\in\Sigma^{n_w+1}$.

\For{$\be=(e_1,\dotsc,e_{s+1})$ such that $e_i\in\{0,1,\dotsc,p-1\}$, $
\forall i\in\{1,\dotsc,s+1\}$}{
Get the solution set $\{j_1,\dotsc,j_{s^{+}},k_1,\dotsc,k_{s^{-}}\}$ by \eqref{eq:list} and \eqref{eq:list_re}.

Compute $e_m$ from the solution set $\{j_1,\dotsc,j_{s^{+}},k_1,\dotsc,k_{s^{-}}\}$ using \eqref{eq:givenvalue_2} for each $ s+2\leq  m \leq t$. 
Compute $a^*_m=a_m''-e_m$.
Solve \eqref{eq:list_del} to obtain the unique solution set $\{d_1,\dotsc,d_t\}$.
}
For each given $\be$, we can recover $\bphi(\bx)$ from $\bphi(\bx')$ by a set of error positions $\{d_1,\dotsc,d_t, j_1,\dotsc,j_{s^{+}},k_1,\dotsc,k_{s^{-}}\}$ and then output $\bx$. 
 
 \caption{List decoding procedure}
\end{algorithm}

Next, we will present a result for a special case $t=1$. 

\begin{corollary}\label{cor:list_1}
The list-decodable code $\cC_{1,s^{+},s^{-}}(n,s+1,\ba;p)$ can correct a single $0$-deletion, at most $s^{+}$ right-shifts of $0$ and $s^{-}$ left-shifts of $0$ with list size at most $s$ and redundancy $(s+1)\log n+O(1)$.
\end{corollary}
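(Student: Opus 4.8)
The plan is to specialize the argument of Theorem~\ref{thm:list} to the case $t=1$ and track how the list size collapses. Since $t=1$, we have $\kappa = \max(1, s+1) = s+1$, so the code $\cC_{1,s^{+},s^{-}}(n,s+1,\ba;p)$ carries exactly $(s+1)\log n + O(1)$ bits of redundancy by the counting argument at the end of the proof of Theorem~\ref{thm:list} (the code has size at least $2^n/(2n)^{s+1}$ when a prime $p$ with $n < p = O(n)$ is chosen, which exists by Bertrand's postulate). So the redundancy claim is immediate; the content is entirely in the list-size bound.

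First I would recall that with $t=1$ the single $0$-deletion position is a scalar $d_1$, and equation \eqref{eq:differ_deltrans} with $K=\kappa=s+1$ reduces to the system \eqref{eq:list_del&trans} with a single deletion variable $d_1$ added to each power-sum of the shift positions. Following the proof of Theorem~\ref{thm:list}, for each candidate value $\be = (e_1,\dotsc,e_{s+1}) \in \Sigma_p^{s+1}$ one recovers (via \eqref{eq:list_re} and Lemma~\ref{lem:uni_pos&neg}) a unique set $\{j_1,\dotsc,j_{s^{+}},k_1,\dotsc,k_{s^{-}}\}$ of shift positions, and then a unique $d_1$ from the $t=1$ instance of \eqref{eq:list_del} (namely $a_1^{*} \equiv d_1 \bmod p$ with $p>n$, so $d_1$ is determined outright). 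Naively this gives a list of size $O(n^{s+1})$, but I would sharpen this: the value $e_1 = a_1'' - (\text{rest})$ is actually forced. Indeed $a_1'' = d_1 + (s^{+}-s^{-}) \bmod p$, and once we also know — or can enumerate — the finer data, only consistent choices survive. More carefully, the real reduction is that we do not need to enumerate all $s+1$ coordinates of $\be$: the first syndrome $a_1''$ already pins down $d_1 + (s^+ - s^-)$, and since $s^+ - s^-$ is a fixed known constant (the code length and the error profile $s^+, s^-$ are fixed), $d_1$ is determined. Then only the shift positions remain unknown, and there are at most... — here I would instead argue directly that the number of surviving solutions is at most $s$.

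The cleanest route to the bound "list size $\le s$" is: after using $a_1''$ to fix $d_1$, subtract the deletion contribution $d_1^m$ from each $a_m''$, obtaining exactly a system of the form \eqref{eq:uni_pos&neg} (with $s$ equations $b_1'',\dotsc,b_s''$) in the $s$ shift positions. By Lemma~\ref{lem:uni_pos&neg} this determines $\{j_1,\dotsc,j_{s^+},k_1,\dotsc,k_{s^-}\}$ uniquely. So in fact the decoding is \emph{unique} once $d_1$ is known — but the subtlety is that we may not a priori know $s^+ - s^-$ or even the exact split, only that there are at most $s$ transpositions in total; the ambiguity in $d_1$ comes from not knowing how much of $a_1''$ is "deletion" versus "net shift." Ranging over the possible values of the integer $s^+ - s^- \in \{-s^-, \dotsc, s^+\}$ — which takes at most $s+1$ values, but combined with the constraint that $d_1$ lands in a valid run position — and then invoking Lemma~\ref{lem:uni_pos&neg} for each, yields at most $s$ candidate codewords after discarding the inconsistent ones (e.g. those where the reconstructed $\bphi$ has a negative or out-of-range coordinate, or where fewer than the claimed errors are needed). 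I would finish by checking that distinct surviving candidates genuinely correspond to distinct shift-count splits, so the count is $O(s)$, hence a constant.

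The main obstacle I expect is pinning down the exact constant $s$ (rather than a loose $O(s)$ or $O(n^0)$ bound): one must argue carefully that among the at most $s+1$ arithmetically-consistent reconstructions, at least one is ruled out by a validity/range constraint, and that no two of them can coincide. This is a finite bookkeeping argument about which integer values of the "net shift" $s^+ - s^-$ are compatible with a fixed received word $\bx'$ of length $n-1$, together with the observation that Lemma~\ref{lem:uni_pos&neg} already forbids spurious duplicate shift-position sets. The rest — redundancy, existence of the prime $p$, and the uniqueness-given-$d_1$ step — is routine given Theorem~\ref{thm:list} and Lemma~\ref{lem:uni_pos&neg}.
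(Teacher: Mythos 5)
Your proposal is correct and follows essentially the same route as the paper: the first syndrome equation of \eqref{eq:list_del&trans} ties the deletion position $d$ to the unknown net shift, so one enumerates the at most $s+1$ consistent values of $d$ (equivalently, of the net shift in $\{-s^{-},\dotsc,s^{+}\}$) and, for each, recovers the shift positions uniquely via Lemma~\ref{lem:uni_pos&neg}. Your concern about whether the final count is $s$ rather than $s+1$ is well taken --- the paper's own proof simply asserts list size $s$ while its interval $[a_1''-s^{+}:a_1''+s^{-}]$ contains $s+1$ values --- but this off-by-one does not affect the approach or the constant-list-size conclusion.
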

\begin{proof}
When $t=1$, the first equation of \eqref{eq:list_del&trans} can be written as $a''_1=d+(s^{+}-s^{-})\bmod p$. Then, we have $d=a''_1-(s^{+}-s^{-})\bmod p$, and the deletion position $d\in [a''_1-s^{+}:a''_1+s^{-}]$. Substituting each given value of $d$ into \eqref{eq:list_del&trans}, we have the solution set of $\{j_1,\dotsc,j_{s^{+}},k_1,\dotsc,k_{s^{-}}\}$. Therefore, the list size in this case is $s=s^{+}+s^{-}$ since $d\in [a''_1-s^{+}:a''_1+s^{-}]$.
\end{proof}

The code $\cC_{1,s^{+},s^{-}}(n,s+1,\ba;p)$ defined in Corollary~\ref{cor:list_1} is capable of correcting a single 0-deletion, at most $s^{+}$ right-shifts of $0$ and $s^{-}$ left-shifts of $0$ with constant list size at most $s$ and redundancy $(s+1)\log n+O(1)$. The list size is a constant $s$, which is significantly less than the list size $O(n)$ when we directly substitute $t=1$ to Theorem~\ref{thm:list}.

\section{Codes for correcting limited-magnitude blocks of $0$-deletions and adjacent transpositions}\label{sec:tblocks}
%\bcomment{The title of this section need to be considered.}

In this section, we focus on studying errors of \emph{$t_{\mathrm{b}}$ blocks of $0$-deletions with $\ell$-limited-magnitude and $s$ adjacent transpositions.} $t_{\mathrm{b}}$ blocks of $0$-deletions with $\ell$-limited-magnitude refer to the fact that there are at most $t_{\mathrm{b}}$ blocks of $0$s deleted with the length of each deletion block being at most $\ell$. Consequently, if $t_{\mathrm{b}}$ blocks of $0$-deletions with $\ell$-limited magnitude occur, a maximum of $t_{\mathrm{b}}\ell$ $0$s are deleted, and these $t_{\mathrm{b}}$ blocks of $0$-deletions could be distributed among at most $t_{\mathrm{b}}$ separate 0-runs. The most extreme scenario is when all $t_{\mathrm{b}}$ blocks of $0$-deletions occur within a single $0$-run, where a possible long consecutive $0$-deletions is composed of multiple blocks $0$-deletions with a limited-magnitude.

\begin{claim}\label{clm:range_change}
Given a sequence $\bx\in\Sigma_2^n$, the length of each $0$-run increases by at most $2$ and decreases by at most $t_{\mathrm{b}}\ell+2$ through $t_{\mathrm{b}}$ blocks of $0$-deletions with $\ell$-limited-magnitude and $s$ adjacent transpositions.
\end{claim}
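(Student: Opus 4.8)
The plan is to track how a single $0$-run can be affected by the two types of operations in play, and to add up the worst-case contributions. Fix a $0$-run $R$ in $\bx$, say $\bx_{[i:j]}$, so all of $x_i,\dots,x_j$ are $0$ and $x_{i-1},x_{j+1}\in\{1,\$\,\text{end/start}\,\}$. The operations that can change $|R|$ are: (a) a $0$-deletion block that lands inside $R$, (b) an adjacent transposition of the form $01\to 10$ or $10\to 01$ whose pair straddles the left boundary $x_{i-1}x_i$, and (c) the analogous transposition at the right boundary $x_jx_{j+1}$. A deletion block inside $R$ can only remove symbols, so each of the (at most $t_{\mathrm{b}}$) deleted blocks contributes at most $\ell$ to the decrease of $|R|$; crucially, even in the extreme case where all $t_{\mathrm{b}}$ blocks fall inside the same run, the total decrease from deletions is at most $t_{\mathrm{b}}\ell$.

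Next I would analyze the boundary transpositions. By Claim~\ref{cla:change_trans} (the statement about how $\bphi$ changes under an adjacent transposition), a single transposition at the left boundary of $R$ changes $\phi(\bx)_k$ by exactly $\pm 1$ for the coordinate $k$ corresponding to $R$ (a $10\to01$ move increases it by $1$, a $01\to10$ move decreases it by $1$), and it simultaneously changes the neighboring coordinate by $\mp 1$. The key combinatorial observation is that once a boundary transposition has ``moved a $1$ past the run's edge'', a second transposition at that same edge would in fact be operating on a different $1$ (the next one over) or reversing the first, so for the purpose of the worst-case net change at one edge it suffices to note that each of the two edges of $R$ can change $|R|$ by at most $+1$ or at most $-1$ in aggregate; more carefully, across all $s$ transpositions the contribution felt by the coordinate of $R$ from its left edge is at most $1$ in absolute value, and likewise at most $1$ from its right edge. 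Hence transpositions change $|R|$ by at most $+2$ and by at most $-2$.

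Combining: the increase of $|R|$ can only come from transpositions, giving at most $+2$; the decrease comes from deletion blocks (at most $t_{\mathrm{b}}\ell$) plus transpositions (at most $2$), giving at most $t_{\mathrm{b}}\ell+2$. This is exactly the claimed bound. I would also remark on why the two edge-contributions cannot exceed $1$ each in the worst-case direction: a transposition at the left edge of $R$ of type $01\to10$ turns $\cdots 1\,0^{a}\cdots$ into $\cdots 0\,1\,0^{a-1}\cdots$ only if $a\ge 1$, and it also changes the run immediately to the left — iterating such moves would eat into $R$ only one symbol at a time but each such move ``uses up'' the adjacency, and we only have $s$ of them; however, the cleaner argument for the stated bound is simply that the first transposition at a given edge determines the sign and magnitude $1$ of that edge's net effect on $|R|$ up to cancellation, so at most $\pm1$ per edge, $\pm2$ total.

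The main obstacle I anticipate is making the boundary-transposition bookkeeping rigorous rather than hand-wavy: one must argue carefully that multiple transpositions accumulating near a single edge of $R$ cannot shrink or grow that run by more than $1$ net at that edge — equivalently, that the only way to repeatedly decrease a particular $0$-run via transpositions is to ``march a $1$'' into it, but each step also creates changes elsewhere, and the claim's bound of $+2/-(t_{\mathrm{b}}\ell+2)$ already generously accounts for one unit per edge. A safe route is to process the $s$ transpositions one at a time, maintaining the invariant on run lengths via Claim~\ref{cla:change_trans}, and observe that a transposition changes the lengths of exactly two adjacent runs by $\pm1$; since $R$ is adjacent to at most two other runs, at most the transpositions touching $R$'s two edges affect $|R|$, and the extreme accumulation is bounded by pairing each harmful transposition with the run it also modifies — but for the purposes of this claim it suffices to bound the transposition effect on any single run by $\pm2$, which follows since changing $|R|$ requires a transposition straddling one of $R$'s two edges and we only need the crude constant, not a count.
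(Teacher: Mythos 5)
Your proposal follows the same route as the paper's own proof: per-run bookkeeping in which $0$-deletion blocks contribute at most $t_{\mathrm{b}}\ell$ to the decrease of a single run, and adjacent transpositions are charged to the two edges of the run at $\pm 1$ each, giving the $+2$ and $-(t_{\mathrm{b}}\ell+2)$ bounds. The paper's proof likewise exhibits only the extremal single-transposition-per-edge patterns $01\mathbf{0}^{k}10\rightarrow 10\mathbf{0}^{k}01$ and its reverse, so structurally you have reproduced its argument.

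However, the step you yourself flag as the obstacle — that the aggregate effect of all $s$ transpositions on one run is at most $\pm 1$ per edge — is exactly where your argument (and, for that matter, the paper's) is not actually established, and your proposed justification for it is incorrect. It is not true that ``a second transposition at that same edge would be operating on a different $1$ or reversing the first'': a chain of $s$ left-shifts $10\rightarrow 01$ can march the \emph{same} $1$ step by step into a $0$-run. For example, with $\bx=1\,1\,0\,0\,0\,0\,1$ we have $\bphi(\bx)=(0,0,4,0)$, and three successive left-shifts give $1\,0\,1\,0\,0\,0\,1\rightarrow 1\,0\,0\,1\,0\,0\,1\rightarrow 1\,0\,0\,0\,1\,0\,1$, i.e.\ $\bphi$ evolves to $(0,3,1,0)$: the third coordinate drops by $3$ and the second rises by $3$ using $s=3$ transpositions and no deletions. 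So the per-edge net effect is not bounded by $1$; each edge can contribute up to the number of transpositions spent marching a $1$ through it, in either direction. Your fallback observation that ``each transposition changes exactly two adjacent runs by $\pm1$'' is correct but only yields a bound of $\pm s$ on a single run, not $\pm 2$. To salvage the claim one would need an additional hypothesis (e.g.\ $s\le 2$, or that distinct transpositions act on distinct $1$s / disjoint adjacent pairs), and any honest write-up should either impose such a restriction or weaken the bounds to $+\min(s,2+\cdots)$-type statements; as written, the crude ``one unit per edge'' accounting does not close the gap you correctly identified.
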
  
\begin{proof}
     Based on Claim~\ref{cla:change_trans}, it follows that the length of a $0$-run increases by $1$ when an adjacent transposition occurs. If $\bx$ contains a subsequence $01\mathbf{0}^{k}10$ and $2$ adjacent transpositions occur within this subsequence, the transformation would be $01\mathbf{0}^{k}10\rightarrow 10\mathbf{0}^{k}01$. Therefore, the length of each $0$-run increases by at most $2$. Considering that all $t_{\mathrm{b}}$ blocks of $0$-deletions with $\ell$-limited magnitude may occur within a single $0$-run, a maximum of $t_{\mathrm{b}}\ell$ 0s are deleted within this $0$-run. Additionally, if there are $2$ adjacent transpositions occurring at the start and end of this $0$-run, i.e., $10\mathbf{0}^{k}01\rightarrow01\mathbf{0}^{k}10$, the length of this $0$-run will decrease by $2$. Thus, the length of each $0$-run decreases by at most $t_{\mathrm{b}}\ell+2$. 
\end{proof}

For convenience, we append a bit $1$ at the end of $\bx$ and denote it as $\bx1$. Since the sequence $\bx1$ ends with $1$, we redefine $\bphi$ as $\bphi(\bx1)\defeq(u_1,u_2,u_3,\dotsc,u_{w})\in\mathbb{N}^{w}$, where $\bx1=0^{u_1}10^{u_2}10^{u_3}\dotsc 0^{u_{w}}1$ throughout this section.

%Then, the definition of $t_{\mathrm{b}}$ blocks of $0$-deletions with $\ell$-limited-magnitude and $s$ adjacent transpositions is provided as follows.

\begin{definition}
Define the error ball
\begin{equation*}
    \cB(n,t,k_{+},k_{-})=\{{\bu\in\mathbb{Z}^n: -k_{-}\leq u_i\leq k_{+}}, \wt(\bu)\leq t, \sum_{i=1}^n u_i\leq 0\}.
\end{equation*}
%where at most $t$ entries increase by at most $k_{+}$ and decrease by at most $k_{-}$ for a sequence with length $n$ and $\sum_{i=1}^n u_i\leq 0$.
\end{definition}

For a given sequence $\bx\in\Sigma_2^n$, if \emph{$t_{\mathrm{b}}$ blocks of $0$-deletions with $\ell$-limited-magnitude and $s$ adjacent transpositions} occur, then we obtain the sequence  $\bx'$ such that $\bphi(\bx'1)=\bphi(\bx1)+\bv$, where $\bv\in \cB(w,t_{\mathrm{b}}+2s,2,t_{\mathrm{b}}\ell+2)$ and $w=\wt(\bx'1)=\wt(\bx1)$. Since each adjacent transposition could affect the length of two $0$-runs, and considering that $t_{\mathrm{b}}$ blocks of $0$-deletions could occur in at most $t_{\mathrm{b}}$ $0$-runs, we can deduce that $\wt(\bv) \leq t_{\mathrm{b}} + 2s$. Besides, the range of the value of $v_i$ has been shown in Claim~\ref{clm:range_change}.
%the retrieved sequence $\bx'$ obtained through \emph{$t_{\mathrm{b}}$ blocks of asymmetric deletions with $\ell$-limited-magnitude and $s$ adjacent transpositions} satisfies $\bphi(\bx'1)=\bphi(\bx1)+\bv$, where $\bv\in \cB(w,t_{\mathrm{b}}+2s,1,t_{\mathrm{b}}\ell+1)$ and $w=\wt(\bx'1)=\wt(\bx1)$.

\begin{example}
    Assume that $\bx=0100101001\in\Sigma_2^{10}$ with $\ell=2$, $t_{\mathrm{b}}=3$ and $s=1$, then $\bphi(\bx1)=12120$. If the received sequence $\bx'=0110110\in\Sigma_2^{6}$ and the corresponding $\bphi(\bx'1)=10101$, by comparing $\bphi(\bx1)$ and $\bphi(\bx'1)$, we have $\bv=(0,-2,0,-2,1)\in\cB(5,5,2,8)$.
\end{example}

Denote the set $\Phi=\{\bphi(\bx1):\bx\in\Sigma_2^n\}$ and $\Sigma_2^n$ is the set containing all binary sequences with length $n$. 
\begin{lemma}\label{lem:size_phix}
     The cardinality of $\Phi$ is:
    \begin{equation}
    |\Phi|=\sum_{w=1}^{n+1}{n \choose w-1}=2^n.
    \end{equation}
\end{lemma}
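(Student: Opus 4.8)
The plan is to exhibit a bijection between $\Phi$ and $\Sigma_2^n$, which immediately gives $|\Phi| = |\Sigma_2^n| = 2^n$, and separately to verify the combinatorial identity $\sum_{w=1}^{n+1}\binom{n}{w-1} = 2^n$ by the binomial theorem. First I would recall that any $\bx \in \Sigma_2^n$ appended with a terminal $1$ can be written uniquely as $\bx 1 = 0^{u_1}10^{u_2}1\cdots 0^{u_w}1$ where $w = \wt(\bx 1) = \wt(\bx) + 1$ and each $u_i \ge 0$ with $\sum_{i=1}^w u_i = n + 1 - w$ (the number of zeros among the $n+1$ positions of $\bx 1$). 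Thus $\bphi(\bx 1) = (u_1,\dots,u_w)$ and the map $\bx \mapsto \bphi(\bx 1)$ is well-defined; its inverse sends a tuple $(u_1,\dots,u_w) \in \mathbb{N}^w$ with $\sum u_i = n+1-w$ back to the string $0^{u_1}1\cdots 0^{u_w}1$ and then strips the final $1$. This is a clean bijection, so $|\Phi| = 2^n$.

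For the stated equality with the sum, I would count $\Phi$ a second way by partitioning according to $w$. For a fixed $w \in \{1,\dots,n+1\}$, the sequences $\bx \in \Sigma_2^n$ with $\wt(\bx 1) = w$ are exactly those $\bx$ with $\wt(\bx) = w-1$, and there are $\binom{n}{w-1}$ of them (choose the positions of the $w-1$ ones among $n$ coordinates). Summing over all admissible $w$ gives $|\Phi| = \sum_{w=1}^{n+1}\binom{n}{w-1}$. Re-indexing with $j = w-1$ turns this into $\sum_{j=0}^{n}\binom{n}{j} = 2^n$ by the binomial theorem, which matches the count from the bijection and completes the proof.

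There is essentially no hard step here; the only thing to be careful about is the bookkeeping on $w$: since $\wt(\bx)$ ranges over $\{0,1,\dots,n\}$, the weight $w = \wt(\bx 1)$ ranges over $\{1,2,\dots,n+1\}$, which explains the summation limits, and one should note that the constraint $\sum_i u_i = n+1-w$ is automatically consistent (it is nonnegative precisely because $w \le n+1$). I would present the argument as: (i) establish the bijection $\bx \leftrightarrow \bphi(\bx 1)$ to get $|\Phi| = 2^n$; (ii) stratify by $w$ to get $|\Phi| = \sum_{w=1}^{n+1}\binom{n}{w-1}$; (iii) conclude both expressions equal $2^n$.
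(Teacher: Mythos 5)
Your proof is correct and follows essentially the same route as the paper: both establish that $\bx \mapsto \bphi(\bx 1)$ is a bijection onto $\Phi$ (giving $|\Phi| = 2^n$) and stratify by $w = \wt(\bx 1)$ to obtain the term $\binom{n}{w-1}$. The only cosmetic difference is that the paper counts each stratum by stars-and-bars on the tuples $(u_1,\dots,u_w)$ summing to $n+1-w$, while you count the equivalent set of strings with $\wt(\bx) = w-1$ directly; these are the same count.
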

\begin{proof}
    For a binary sequence $\bx\in\Sigma_2^n$, the corresponding sequence $\bphi(\bx1)$ has length $w=\wt(\bx1)$ and $\wt(\bphi(\bx1))=n+1-w$. The
    cardinality of $\Phi$ is equivalent to the number of ways of arranging $n+1-w$ indistinguishable objects in $w$ distinguishable boxes. Also, the function $\bphi$ is a one-to-one mapping function, the cardinality of $\Phi$ should be the same as $|\Sigma_2^n|=2^n$. Thus, we get the cardinality of $\Phi$ as shown in Lemma~\ref{lem:size_phix}.  
\end{proof}

\subsection{Non-systematic Code Construction}
In this section, we will provide a non-systematic construction for a code capable of correcting $t_{\mathrm{b}}$ blocks of $0$-deletions with $\ell$-limited-magnitude and $s$ adjacent transpositions. Also, we present a decoding algorithm of this code and a lower bound on the code size.

\iffalse
\begin{construction}\label{con:cons1}
	The code $\cC(n,t_{\mathrm{b}},\ell,s)$ is defined as
	\begin{equation*}
    \cC(n,t_{\mathrm{b}},\ell,s)=\bigcup\limits_{w=1}^{n+1}\{\bx\in\Sigma_2^n:\bphi(\bx1)\bmod p\in\cC_{p}, \wt(\bx1)=w\},
	\end{equation*}
 where $\cC_{p}$  is a code over $\Sigma_{p}^{w}$ where $p$ is the smallest prime larger than $t_{\mathrm{b}}\ell+4$.
\end{construction}
\fi

\begin{lemma}
	Let $p$ be the smallest prime larger than $t_{\mathrm{b}}\ell+4$. The family of codes 
	\begin{equation*}
    \cC(n,t_{\mathrm{b}},\ell,s)=\bigcup\limits_{w=1}^{n+1}\{\bx\in\Sigma_2^n:\bphi(\bx1)\bmod p\in\cC_{p}, \wt(\bx1)=w\},
	\end{equation*}
    is capable of correcting $t_{\mathrm{b}}$ blocks of $0$-deletions with $\ell$-limited-magnitude and $s$ adjacent transpositions for $\bx\in \cC(n,t_{\mathrm{b}},\ell,s)$ if the code $\cC_{p}$ defined over $\Sigma_{p}^{w}$ is capable of correcting $t_{\mathrm{b}}+2s$ substitutions for $\bphi(\bx1)$.
\end{lemma}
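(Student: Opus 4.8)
The plan is to reduce the problem of correcting $t_{\mathrm{b}}$ blocks of $0$-deletions with $\ell$-limited-magnitude and $s$ adjacent transpositions on a binary sequence $\bx$ to the problem of correcting $t_{\mathrm{b}}+2s$ substitutions on the integer vector $\bphi(\bx1)$ read modulo $p$. The key observations are already assembled in the excerpt: (i) if such errors occur in $\bx$ to produce $\bx'$, then by the discussion following Claim~\ref{clm:range_change} we have $\bphi(\bx'1)=\bphi(\bx1)+\bv$ with $\bv\in\cB(w,t_{\mathrm{b}}+2s,2,t_{\mathrm{b}}\ell+2)$ and $w=\wt(\bx1)=\wt(\bx'1)$; (ii) in particular $\wt(\bv)\le t_{\mathrm{b}}+2s$ and each entry satisfies $-(t_{\mathrm{b}}\ell+2)\le v_i\le 2$, so $v_i$ lies in an interval of length at most $t_{\mathrm{b}}\ell+4<p$; and (iii) the weight is preserved, so the error does not change which codebook $\cC_p\subseteq\Sigma_p^w$ we are working in.

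The first step is to note that the decoder, upon receiving $\bx'$, knows $w=\wt(\bx'1)$ exactly (the weight is invariant under all the allowed errors), hence knows which fibre $\{\bx\in\Sigma_2^n:\wt(\bx1)=w\}$ the transmitted word lay in and which code $\cC_p$ over $\Sigma_p^w$ governs it. The second step is to form $\bphi(\bx'1)\bmod p$ and observe that it equals $(\bphi(\bx1)\bmod p) + (\bv\bmod p)$, where $\bv\bmod p$ is a vector over $\Sigma_p^w$ of Hamming weight at most $t_{\mathrm{b}}+2s$; this is exactly a pattern of at most $t_{\mathrm{b}}+2s$ substitutions applied to the codeword $\bphi(\bx1)\bmod p\in\cC_p$. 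Since $\cC_p$ corrects $t_{\mathrm{b}}+2s$ substitutions by hypothesis, the decoder recovers $\bphi(\bx1)\bmod p$ uniquely.

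The third step is to lift this residue back to the true vector $\bphi(\bx1)$ over $\mathbb{N}$. Here I would argue as follows: the decoder knows $\bphi(\bx'1)$ over $\mathbb{N}$, and it now knows $\bphi(\bx1)\bmod p$. For each coordinate $i$, the true value $\phi(\bx1)_i$ differs from $\phi(\bx'1)_i$ by $-v_i\in[-2,\,t_{\mathrm{b}}\ell+2]$, an interval of $t_{\mathrm{b}}\ell+5\le p$ consecutive integers — wait, here one must be slightly careful: the interval has $t_{\mathrm{b}}\ell+5$ integers whereas $p$ is only guaranteed $>t_{\mathrm{b}}\ell+4$, i.e. $p\ge t_{\mathrm{b}}\ell+5$ when $t_{\mathrm{b}}\ell+4$ is not prime and could equal $t_{\mathrm{b}}\ell+5$; so within each congruence class mod $p$ there is a unique representative in $[\phi(\bx'1)_i-(t_{\mathrm{b}}\ell+2),\,\phi(\bx'1)_i+2]$, and that representative is $\phi(\bx1)_i$. (If the boundary case $p=t_{\mathrm{b}}\ell+5$ is genuinely tight one should note that $v_i=2$ and $v_i=-(t_{\mathrm{b}}\ell+2)$ cannot occur simultaneously in a way that collides, since $\sum_i v_i\le 0$ and a $+2$ entry requires two transpositions which cap the negative budget — but I expect the cleaner route is simply to use that $p$ strictly exceeds the interval length, which holds with room to spare generically, and to remark the extremal case separately.) This uniqueness recovers $\bphi(\bx1)$ over $\mathbb{N}$ coordinatewise, and since $\bphi$ is a bijection onto $\Phi$ (Lemma~\ref{lem:size_phix}), we recover $\bx$.

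The main obstacle I anticipate is precisely this last lifting step and the exact size requirement on $p$: one must verify that the magnitude window for each coordinate (width $t_{\mathrm{b}}\ell+4$ on the count of how $v_i$ can move, hence at most $t_{\mathrm{b}}\ell+5$ candidate integers) fits strictly inside a residue system mod $p$, which is why the hypothesis asks for $p>t_{\mathrm{b}}\ell+4$ rather than $p\ge t_{\mathrm{b}}\ell+4$; getting the off-by-one bookkeeping right (the asymmetry $+2$ versus $-(t_{\mathrm{b}}\ell+2)$ from Claim~\ref{clm:range_change}, and the appended trailing $1$ shifting indices) is the fiddly part, but it is routine. Everything else is a direct invocation of the substitution-correcting property of $\cC_p$ together with weight-invariance.
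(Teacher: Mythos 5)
Your proposal is correct, and it proves the lemma by exhibiting an explicit decoder, whereas the paper proves it indirectly by showing the error balls are disjoint: the paper assumes $\bphi(\bx1)+\bv=\bphi(\by1)+\bv'$ for two codewords, notes that $(\bv'-\bv)\bmod p$ is a codeword of $\cC_p$ of Hamming weight at most $2(t_{\mathrm{b}}+2s)$, hence zero, and then uses that every entry of $\bv-\bv'$ lies in $[-(t_{\mathrm{b}}\ell+4):t_{\mathrm{b}}\ell+4]$ together with $p>t_{\mathrm{b}}\ell+4$ to force $\bv=\bv'$. The two arguments rest on the same ingredients (weight of the $\bphi$-domain error bounded by $t_{\mathrm{b}}+2s$, magnitudes small relative to $p$), but your route has two small advantages: it does not implicitly rely on $\cC_p$ being linear (the paper's difference-of-codewords step does), and it directly yields the decoding procedure that the paper only supplies later in the proof of Theorem~\ref{thm:bchblocks} and Algorithm~\ref{alg:decalg1}. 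Your hand-wringing over the boundary case is unnecessary: the candidate window for $\phi(\bx1)_i$ consists of $t_{\mathrm{b}}\ell+5$ consecutive integers, and since $p$ is an integer with $p>t_{\mathrm{b}}\ell+4$ we have $p\ge t_{\mathrm{b}}\ell+5$, so the window meets each residue class mod $p$ at most once and the lift is unique --- no appeal to $\sum_i v_i\le 0$ is needed. One bookkeeping slip to fix: since $\phi(\bx1)_i=\phi(\bx'1)_i-v_i$ with $v_i\in[-(t_{\mathrm{b}}\ell+2):2]$, the window is $[\phi(\bx'1)_i-2,\;\phi(\bx'1)_i+t_{\mathrm{b}}\ell+2]$, not $[\phi(\bx'1)_i-(t_{\mathrm{b}}\ell+2),\;\phi(\bx'1)_i+2]$ as written; this has the same length and does not affect the argument.
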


\begin{proof}

For $\bx, \by \in \cC(n, t_{\mathrm{b}}, \ell, s)$ and $\bv, \bv' \in \cB(w, t_{\mathrm{b}} + 2s, 2, t_{\mathrm{b}}\ell + 2)$ with $w = \wt(\bx1)$, we assume that $\bphi(\bx1) + \bv = \bphi(\by1) + \bv'$. Then, $\bv' - \bv = \bphi(\by1) - \bphi(\bx1)$, and we have $\bv'' = (\bv' - \bv) \bmod p \in \cC_p$ since both $\bphi(\bx1), \bphi(\by1) \bmod p \in \cC_p$. Based on the definition of $\cB(w, t_{\mathrm{b}} + 2s, 2, t_{\mathrm{b}}\ell + 2)$, we have $\wt(\bv), \wt(\bv') \leq t_{\mathrm{b}} + 2s$ and $\wt(\bv'') \leq 2(t_{\mathrm{b}} + 2s)$. Since $\cC_p$ is a code capable of correcting $t_{\mathrm{b}} + 2s$ substitutions, we conclude that $\bv'' = \mathbf{0}$. Furthermore, each entry in $(\bv - \bv')$ takes a value within the range of $[-(t_{\mathrm{b}}\ell + 4) : (t_{\mathrm{b}}\ell + 4)]$. It is sufficient to show that $\bv = \bv'$, which holds since $p$ is the smallest prime larger than $t_{\mathrm{b}}\ell + 4$. Finally, this implies $\bphi(\bx1) = \bphi(\by1)$ and $\bx = \by$. 
\end{proof}

\begin{lemma}(\cite{aly2007quantum}, Theorem~10 )\label{lem:prime_bch}
    Let $p$ be a prime such that the distance $2\leq d\leq p^{\lceil m/2\rceil-1}$ and $n=p^m-1$. Then, there exists a narrow-sense $[n,k,d]$-BCH code $\cC_{p}$ over $\Sigma_{p}^n$ with
    \begin{equation*}
        n-k=\lceil (d-1)(1-1/p)\rceil m.
    \end{equation*}
\end{lemma}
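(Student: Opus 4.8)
The plan is to realize the code $\cC_p$ as the narrow-sense BCH code over $\mathrm{GF}(p)$ of length $n=p^m-1$ with designed distance $d$, and to read off its redundancy from the degree of its generator polynomial. Since $\gcd(n,p)=1$, the polynomial $x^n-1$ splits over $\mathrm{GF}(p^m)$; fix a primitive element $\alpha$ of $\mathrm{GF}(p^m)$, so that $\alpha$ has multiplicative order $n$, and let $M^{(i)}(x)$ denote the minimal polynomial of $\alpha^{i}$ over $\mathrm{GF}(p)$. The narrow-sense BCH code of designed distance $d$ is the cyclic code of length $n$ with generator polynomial $g(x)=\mathrm{lcm}\big(M^{(1)}(x),\dots,M^{(d-1)}(x)\big)$; by the BCH bound its minimum distance is at least $d$, and its redundancy is $n-k=\deg g$. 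Writing $C_i=\{\, ip^{a}\bmod n : a\ge 0\,\}$ for the $p$-cyclotomic coset of $i$ modulo $n$, the roots of $M^{(i)}$ are exactly $\{\alpha^{j}:j\in C_i\}$, so
\[
  n-k=\deg g=\Big|\bigcup_{i=1}^{d-1}C_i\Big|,
\]
and the whole problem reduces to counting this union.

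First I would discard the indices divisible by $p$: since $p^{m}\equiv 1 \bmod n$ one has $C_{pi}=C_i$, and iterating this gives $\bigcup_{i=1}^{d-1}C_i=\bigcup_{1\le i\le d-1,\ p\nmid i}C_i$. The number of integers in $[1:d-1]$ not divisible by $p$ equals $(d-1)-\lfloor (d-1)/p\rfloor=\lceil (d-1)(1-1/p)\rceil$, which is precisely the coefficient in the claimed formula. It then remains to show that, under the hypothesis $d\le p^{\lceil m/2\rceil-1}$, these $\lceil (d-1)(1-1/p)\rceil$ cosets are pairwise disjoint and each has full cardinality $m$; the equality $n-k=\lceil (d-1)(1-1/p)\rceil\, m$ follows immediately.

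The heart of the argument --- and the step I expect to be the main obstacle --- is this coset-separation claim, which is the classical statement that BCH codes meet the BCH bound in dimension as long as the designed distance is below roughly $\sqrt n$. For full size, if $|C_i|=\ell$ were a proper divisor of $m$ for some $i\in[1:d-1]$ with $p\nmid i$, then $i(p^{\ell}-1)\equiv 0 \bmod (p^{m}-1)$, so $(p^{m}-1)/\gcd(p^{m}-1,i)$ divides $p^{\ell}-1\le p^{\lfloor m/2\rfloor}-1$, which forces $\gcd(p^{m}-1,i)> p^{\lceil m/2\rceil-1}$, contradicting $i\le d-1<p^{\lceil m/2\rceil-1}$. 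For disjointness, if $C_i=C_j$ with $1\le i<j\le d-1$ and $p\nmid i,\ p\nmid j$, then $j\equiv ip^{a}\bmod (p^{m}-1)$ for some $1\le a\le m-1$; writing $i$ (or $j$) in base $p$ --- it has at most $\lceil m/2\rceil-1$ digits by hypothesis --- and shifting by $a$ digits when $a\le\lfloor m/2\rfloor$, or instead by $m-a$ digits (after multiplying through by $p^{m-a}$) when $a>\lfloor m/2\rfloor$, the shifted value still lies below $p^{m}-1$, so no reduction occurs and we get $j=ip^{a}$ or $i=jp^{m-a}$ exactly --- in either case $p$ divides $j$ or $i$, a contradiction. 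Hence $\bigcup_{i=1}^{d-1}C_i$ is a disjoint union of $\lceil (d-1)(1-1/p)\rceil$ cosets of size $m$, so $n-k=\deg g=\lceil (d-1)(1-1/p)\rceil\, m$, and the narrow-sense BCH code just described is the required $\cC_p$. (Alternatively one may simply invoke Theorem~10 of \cite{aly2007quantum}, whose proof follows precisely these lines.)
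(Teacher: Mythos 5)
The paper does not prove this lemma at all---it is imported verbatim as Theorem~10 of the cited reference \cite{aly2007quantum}---so there is nothing in the paper to compare against except that citation. Your proof is correct and is essentially the standard cyclotomic-coset argument used in that reference: the reduction to indices prime to $p$, the identity $(d-1)-\lfloor(d-1)/p\rfloor=\lceil(d-1)(1-1/p)\rceil$, and the two separation claims (full coset size and pairwise disjointness under $d\le p^{\lceil m/2\rceil-1}$) are all sound as written. The only pedantic remark is that your argument, like the BCH bound itself, yields minimum distance \emph{at least} $d$ rather than exactly $d$; this is all that is needed everywhere the lemma is used in the paper.
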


\begin{theorem}\label{thm:bchblocks}
    Let $p$ be the smallest prime such that $p\ge t_{\mathrm{b}}\ell+4$, $w=p^m-1$, $w=\wt(\bx1)$ and $\cC_{p}$ is a primitive narrow-sense $[w,k,2(t_{\mathrm{b}}+2s)+1]$-BCH code with $w-k=\lceil 2(t_{\mathrm{b}}+2s)(1-1/p)\rceil m$. The family of codes %$\cC(n,t_{\mathrm{b}},\ell,s)$ such that
    \begin{equation}\label{eq:nonsystem}
        \cC(n,t_{\mathrm{b}},\ell,s)=\bigcup\limits_{w=1}^{n+1}\{\bx\in\Sigma_2^n:\bphi(\bx1)\bmod p\in\cC_{p}, \wt(\bx1)=w\}.
    \end{equation}
    is capable of correcting $t_{\mathrm{b}}$ blocks of $0$-deletions with $\ell$-limited-magnitude and $s$ adjacent transpositions.
\end{theorem}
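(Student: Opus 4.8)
The plan is to assemble Theorem~\ref{thm:bchblocks} from the two lemmas immediately preceding it, so the only work is checking that the parameters of the BCH code from Lemma~\ref{lem:prime_bch} meet the hypotheses of the preceding (unnamed) lemma on $\cC(n,t_{\mathrm{b}},\ell,s)$. First I would invoke the error-model fact established before Claim~\ref{clm:range_change}: if $\bx\in\cC(n,t_{\mathrm{b}},\ell,s)$ suffers $t_{\mathrm{b}}$ blocks of $0$-deletions with $\ell$-limited-magnitude and $s$ adjacent transpositions, the received word $\bx'$ satisfies $\bphi(\bx'1)=\bphi(\bx1)+\bv$ with $\bv\in\cB(w,t_{\mathrm{b}}+2s,2,t_{\mathrm{b}}\ell+2)$ and $w=\wt(\bx1)$ unchanged (the appended $1$ guarantees the weight is preserved). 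Thus it suffices, by the preceding lemma, to exhibit a code $\cC_p$ over $\Sigma_p^{w}$ correcting $t_{\mathrm{b}}+2s$ substitutions, where $p$ is the smallest prime larger than $t_{\mathrm{b}}\ell+4$.

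Next I would verify the numerical conditions of Lemma~\ref{lem:prime_bch}. Take $d=2(t_{\mathrm{b}}+2s)+1$, so a narrow-sense $[w,k,d]$-BCH code over $\Sigma_p$ corrects $\lfloor(d-1)/2\rfloor=t_{\mathrm{b}}+2s$ substitutions, exactly what the preceding lemma needs. The constraint $2\le d\le p^{\lceil m/2\rceil-1}$ is satisfied for $m$ large enough since $p\ge t_{\mathrm{b}}\ell+4\ge 5$ is a fixed constant while $d$ is a fixed constant; one simply chooses $m$ with $p^{\lceil m/2\rceil-1}\ge 2(t_{\mathrm{b}}+2s)+1$, and then $w=p^m-1$ is the (only) admissible block length for that $m$. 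The redundancy formula $w-k=\lceil(d-1)(1-1/p)\rceil m = \lceil 2(t_{\mathrm{b}}+2s)(1-1/p)\rceil m$ is then read directly off Lemma~\ref{lem:prime_bch}. Since $w=p^m-1$, we have $m=\log_p(w+1)\le \log_p(n+2)$, so the redundancy in bits over each weight class is at most $\lceil 2(t_{\mathrm{b}}+2s)(1-1/p)\rceil\log(n+1)+O(1)$, matching the claimed count; one also notes that $\cC_p$ only constrains the symbols $\bphi(\bx1)\bmod p$, and the union over admissible $w$ in \eqref{eq:nonsystem} does not increase the redundancy beyond an additive $O(1)$ because $\bphi$ is a bijection (Lemma~\ref{lem:size_phix}) and the number of weight classes is polynomial in $n$.

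I would then close the argument by plugging this $\cC_p$ into the preceding lemma: since $\cC_p$ corrects $t_{\mathrm{b}}+2s$ substitutions over $\Sigma_p^w$, the family $\cC(n,t_{\mathrm{b}},\ell,s)$ in \eqref{eq:nonsystem} corrects $t_{\mathrm{b}}$ blocks of $0$-deletions with $\ell$-limited-magnitude and $s$ adjacent transpositions, which is exactly the assertion. For the decoding side (if a constructive statement is wanted), I would describe: compute $\bphi(\bx'1)$, reduce mod $p$, run the BCH decoder to recover the error vector $\bv\bmod p$, lift it uniquely to $\bv\in[-(t_{\mathrm{b}}\ell+2):2]^w$ using $p>t_{\mathrm{b}}\ell+4$, subtract to get $\bphi(\bx1)$, and invert $\bphi$.

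The main obstacle — really the only subtlety — is bookkeeping around the block length: Lemma~\ref{lem:prime_bch} forces $w=p^m-1$, but the actual weight $w=\wt(\bx1)$ of a generic codeword is arbitrary, so the family in \eqref{eq:nonsystem} is genuinely a union over all $w\in[1:n+1]$, and for each $w$ one must either (i) shorten a longer BCH code of length $p^{m}-1\ge w$ down to length $w$, noting that shortening does not decrease minimum distance and does not increase redundancy, or (ii) restrict attention to the dominant weight classes near $w\approx n/2$ where the asymptotic count is tight. I would go with shortening, and I would be careful to state that for weight classes $w$ too small to support distance $d$ there is nothing to correct (or the class can be handled trivially), so these do not affect the redundancy bound. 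Everything else is a direct substitution into results already proved in the excerpt.
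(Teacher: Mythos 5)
Your proposal is correct and follows essentially the same route as the paper: reduce the channel to at most $t_{\mathrm{b}}+2s$ substitutions on $\bphi(\bx1)\bmod p$ via the preceding lemma, instantiate $\cC_p$ from Lemma~\ref{lem:prime_bch} with $d=2(t_{\mathrm{b}}+2s)+1$, and recover $\bphi(\bx1)$ by uniquely lifting the decoded error vector from $\Sigma_p$ back to $[-(t_{\mathrm{b}}\ell+2):2]$ using $p>t_{\mathrm{b}}\ell+4$. Your explicit handling of the mismatch between the BCH block length $p^m-1$ and the actual weight $w=\wt(\bx1)$ by shortening is a point the paper's proof silently skips, and it is the right fix.
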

\begin{proof}
Let $\bx\in\cC(n,t_{\mathrm{b}},\ell,s)$ be a codeword, and $\bx'$ be the output through the channel that induces $t_{\mathrm{b}}$ blocks of $0$-deletions with $\ell$-limited-magnitude and $s$ adjacent transpositions. Let $\bz'=\bphi(\bx'1) \bmod p$, where $p$ is the smallest prime larger than $t_{\mathrm{b}}\ell+4$. Run the decoding algorithm of $\cC_p$~\cite{roth1988encoding} on $\bz'$ and output $\bz^{*}$. Thus, $\bz^{*}$ is also a codeword of the linear code $\cC_{p}$ and it can be shown that $\bz^{*}=\bphi(\bx1) \bmod p$. Denote $\bepsilon'=(\bz'-\bz^{*}) \bmod p$, we have that
\begin{equation}
    (\bphi(\bx'1)-\bphi(\bx1)) \bmod p=(\bz'-\bz^{*}) \bmod p=\bepsilon'.
\end{equation}
and the error vector $\bepsilon$ satisfies
\begin{equation}
   \epsilon_i=\begin{cases}
        \epsilon'_i, &{\text{if}}\; 0\leq \epsilon'_i\leq 1 \\
        \epsilon'_i-p, &{\text{otherwise}}
    \end{cases}.
\end{equation}
Hence, we can calculate $\bphi(\bx1)=\bphi(\bx'1)-\bepsilon$ and then recover $\bx$ from $\bphi(\bx1)$.
\end{proof}

The detailed decoding steps are shown in Algorithm~\ref{alg:decalg1}.

\begin{algorithm}[h]\label{alg:decalg1}
\SetAlgoLined
\KwInput{Retrieved sequence $\bx'$}
\KwOutput{Decoded sequence $\bx\in\cC(n,t_{\mathrm{b}},\ell,s)$.}

\textbf{Initialization:}  Let $p$ be the smallest prime larger than $t_{\mathrm{b}}\ell+4$. Also, append $1$ at the end of $\bx'$ and get $\bphi(\bx'1)$. 

\textbf{Step 1:} $\bz'=\bphi(\bx'1) \bmod p$. Run the decoding algorithm of $\cC_{p}$ on $\bz'$ to get the output $\bz^{*}$.

\textbf{Step 2:} $\bepsilon'=(\bz'-\bz^{*}) \bmod p$ and then $\bepsilon$.  $\bphi(\bx1)=\bphi(\bx'1)-\bepsilon$.

\textbf{Step 3:} Calculate $\bx1=\bphi^{-1}(\bphi(\bx1))$ and then recover $\bx$.

 \caption{Decoding Algorithm of $\cC(n,t_{\mathrm{b}},\ell,s)$}
 \end{algorithm}

\begin{example}
Assume $\bx=0100101001$ and $\bx'=0110110\in\Sigma_2^{6}$ with $\ell=2$, $t_{\mathrm{b}}=2$ and $s=1$. Since the retrieved sequence $\bx'=0110110$, then $\bphi(\bx'1)=10101$ and $\bz'=\bphi(\bx')\bmod 11=10101$, where $p=11$ is smallest prime such that $p\ge t_{\mathrm{b}}\ell+4$.

Run the decoding algorithm of $\cC_{p}$ on $\bz'\in\cC_{p}$, we have the output sequence $\bz^{*}=12120$. Hence $\bepsilon'=(\bz'-\bz^{*})\bmod 11=(0,9,0,9,1)$ and $\bepsilon=(0,-2,0,-2,1)$. Thus, the output of the decoding algorithm $\bphi(\bx1)=\bphi(\bx'1)-\bepsilon=(1,0,1,0,1)-(0,-2,0,-2,1)=(1,2,1,2,0)$. Finally, $\bx1=01001010011$ and $\bx=0100101001$.
\end{example}

%\bcomment{todo: Formal algorithm table.}

%{\color{blue} Please check the construction. Why is it able to correct errors?}

%\subsection{Code Size}

Next, we will present a lower bound on the size of $\cC(n,t_{\mathrm{b}},\ell,s)$.
\begin{theorem}
	Let $p$ be the smallest prime larger than $t_{\mathrm{b}}\ell+4$. The size of the code $\cC(n,t_{\mathrm{b}},\ell,s)$ defined in \eqref{eq:nonsystem} in Theorem~\ref{thm:bchblocks} is lower bounded as
    \begin{equation*}
        |\cC(n,t_{\mathrm{b}},\ell,s)|\ge \frac{2^n}{p(n+1)^{\lceil2(t_{\mathrm{b}}+2s)(1-1/p)\rceil}}.
    \end{equation*}
\end{theorem}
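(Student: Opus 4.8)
The plan is to count codewords via a pigeonhole argument over the choice of BCH codebook $\cC_p$, exactly mirroring the structure of Construction~\eqref{eq:nonsystem}. First I would partition $\Sigma_2^n$ according to the weight $w=\wt(\bx1)$, so that $|\Sigma_2^n|=\sum_{w=1}^{n+1}\binom{n}{w-1}=2^n$ by Lemma~\ref{lem:size_phix}. Within the block of weight $w$, the map $\bphi(\cdot\,1)$ is a bijection onto the subset of $\Sigma_{\ge 0}^{w}$ of vectors summing to $n+1-w$; composing with reduction modulo $p$ sends this set into $\Sigma_p^{w}$. The code $\cC(n,t_{\mathrm{b}},\ell,s)$ keeps exactly those $\bx$ whose reduced image lands in the chosen BCH code $\cC_p\subseteq\Sigma_p^w$.

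The key step is then to choose $\cC_p$ to be a good coset rather than a fixed code. For each $w$ that we use (recall the construction needs $w=p^{m}-1$ for the BCH parameters of Lemma~\ref{lem:prime_bch}; for other $w$ one either embeds into the next admissible length or simply discards that block at a negligible cost), the BCH code $\cC_p$ from Theorem~\ref{thm:bchblocks} has redundancy $\lceil 2(t_{\mathrm{b}}+2s)(1-1/p)\rceil m$, hence index $p^{\lceil 2(t_{\mathrm{b}}+2s)(1-1/p)\rceil m}$ in $\Sigma_p^{w}$. Its cosets partition $\Sigma_p^{w}$ into that many classes, and since every coset of a linear code corrects the same number of substitutions, any coset works in place of $\cC_p$ in the lemma preceding Lemma~\ref{lem:prime_bch}. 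By pigeonhole over all the elements of $\Phi$ (there are $2^n$ of them by Lemma~\ref{lem:size_phix}), some choice of cosets — one per weight class — captures at least $2^n$ divided by the number of cosets of the elements of $\Phi$. Bounding each index by $p^{\lceil 2(t_{\mathrm{b}}+2s)(1-1/p)\rceil m}\le p\cdot (n+1)^{\lceil 2(t_{\mathrm{b}}+2s)(1-1/p)\rceil}$, using $p^{m}=w+1\le n+2$ and absorbing a single factor of $p$ for the rounding of $m$, gives the claimed bound
\begin{equation*}
    |\cC(n,t_{\mathrm{b}},\ell,s)|\ge \frac{2^n}{p(n+1)^{\lceil2(t_{\mathrm{b}}+2s)(1-1/p)\rceil}}.
\end{equation*}

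I would carry this out in the order: (i) recall the weight partition and $|\Phi|=2^n$; (ii) observe that for the averaging argument we may take $\ba$/the codebook $\cC_p$ to range over cosets of the fixed BCH code, and that the error-correction property is coset-invariant; (iii) apply pigeonhole to get one coset per weight block with total size at least $2^n$ over the worst-case coset count; (iv) bound the coset count by $p\,(n+1)^{\lceil 2(t_{\mathrm{b}}+2s)(1-1/p)\rceil}$ via $p^{m}\le n+2$. The main obstacle I anticipate is bookkeeping around the length restriction $w=p^{m}-1$: not every weight $w\in[1:n+1]$ is of this form, so one must argue that restricting to admissible $w$ (or padding to the next admissible length $p^{\lceil\log_p(w+1)\rceil}-1\le p(w+1)$) loses only the factor $p$ already present in the denominator, rather than degrading the exponent. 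Everything else is the routine pigeonhole/coset-counting bound.
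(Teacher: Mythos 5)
Your proposal follows essentially the same route as the paper: partition $\Sigma_2^n$ by the weight $w=\wt(\bx 1)$, use $\sum_{w}\binom{n}{w-1}=2^n$, and divide by the number $p^{w-k}$ of cosets of the BCH code, which is then bounded by $p(n+1)^{\lceil 2(t_{\mathrm{b}}+2s)(1-1/p)\rceil}$. The only difference is that you make the averaging/pigeonhole over cosets (and the coset-invariance of the error-correcting property) explicit, whereas the paper simply asserts the exact equidistribution $|\Phi_{\bz}|=\binom{n}{w-1}/p^{w-k}$; your version is the more careful one, and both treatments share the same mild bookkeeping looseness about weights $w$ not of the form $p^m-1$.
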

\begin{proof}
    Denote $\bz=\bphi(\bx1)\bmod p$. $\bphi(\bx1)$ can be written as $\bphi(\bx1)=\bz+p\cdot \ba$, where $\ba$ is a vector with the same length as $\bphi(\bx1)$ and $\bz$. %Thus, for any given sequence $\bx$, the cardinality of $\bphi(\bx)$ can be written as:
    %\begin{equation*}
    %    |\bphi(\bx)|=|\ba|\cdot |\bz|
    %\end{equation*}
    Furthermore, since $\bz \in \cC_{p}$ and $\cC_{p}$ is a linear code, the code $\cC_{p}$ with length $w$ is a set obtained by partitioning $\Sigma_{p}^{w}$ into $p^{w-k}$ classes.
    %\begin{equation*}
        %\bz\in\Sigma_{p}^{n%_r}: \bz\equiv b %\bmod p^{n_r-k}.
    %\end{equation*}
    
    Denote $\Phi_{w}$ as the set of all vectors $\bphi(\bx1)$ with length $w$. For any fixed weight $w$, we define the set $\Phi_{\bz}=\{\bz\in \Phi_{w}: \bz \bmod p\in\cC_p\}$
    \begin{equation*}
    \left|\Phi_{\bz}\right|=\frac{{n \choose w-1}}{p^{w-k}}.
    \end{equation*}
    Then, the size of the code $\cC(n,t_{\mathrm{b}},\ell,s)$ in Theorem~\ref{thm:bchblocks} can be lower bounded as follows:
    
    \begin{align}\label{eq:size_c}
|\cC(n,t_{\mathrm{b}},\ell,s)|=\sum_{w=1}^{n+1}\left|\Phi_{\bz}\right|&=\sum_{w=1}^{n+1}\left[\frac{{n \choose w-1}}{p^{w-k}}\right]\nonumber\\
&\ge \frac{\sum_{w=1}^{n+1}{n \choose w-1}}{p^{n+1-k}}=\frac{2^n}{p^{n+1-k}}.
\end{align} 
From Lemma~\ref{lem:prime_bch} and Theorem~\ref{thm:bchblocks}, let $d=2(t_{\mathrm{b}}+2s)+1$ and $m=\log_p(n+1)$.
\begin{equation}
\label{eq:p_nk}
    p^{n-k+1}=p^{\lceil2(t_{\mathrm{b}}+2s)(1-1/p)\rceil\cdot \log_{p}(n+1)+1}=p(n+1)^{\lceil2(t_{\mathrm{b}}+2s)(1-1/p)\rceil}.
\end{equation}

Therefore, from \eqref{eq:size_c} and \eqref{eq:p_nk}, the size of the code $\cC(n,t_{\mathrm{b}},\ell,s)$ in Theorem~\ref{thm:bchblocks} is lower bounded as follows:
    \begin{equation*}
        |\cC(n,t_{\mathrm{b}},\ell,s)|\ge \frac{2^n}{p(n+1)^{\lceil2(t_{\mathrm{b}}+2s)(1-1/p)\rceil}},
    \end{equation*}
    where $p$ is the smallest prime larger than $t_{\mathrm{b}}\ell+4$. \qedhere
\end{proof}

\iffalse
%\subsection{Constructions for $t=1$}
%\bcomment{todo: for $t=1$, it is not optimal. introduce construction based on vt code.}
For the general number of deletions $t_{\mathrm{b}}$ and the limited deletion magnitude $\ell$, the redundancy of the code proposed in Corollary~\ref{cor:bch} is at most $\frac{t_{\mathrm{b}}(2\ell+1)}{\ell+1}\cdot\log (n+1)$. Hence, we can easily notice the redundancy of this code is $\frac{3}{2}\log (n+1)$ when $t_{\mathrm{b}}=1,\ell=1$. Next, we will present the construction for the case of  $t_{\mathrm{b}}=1,\ell=1$ with lower redundancy.
%\bcomment{Preparata code is nonlinear double error correcting codes. Not sure to apply here. Maybe only present the corollary for Hamming code.}

Denote $\cC_{H}$ be the $(2^{m}-1,2^{m}-m-1,3)$-Hamming code capable of correcting single error. We take $\cC_{H}$ as $\cC_{q}$ in Construction~\ref{con:cons1} and let $n=2^m-1$. Therefore, the size of code $\cC_{1,1}$ is bounded by:
\begin{equation*}
    |\cC_{1,1}|\ge \frac{2^n}{\log (n+1)}.
\end{equation*}
\fi

\subsection{Systematic Code Construction}
In the previous subsection, we propose a non-systematic code $\cC(n,t_{\mathrm{b}},
\ell,s)$ for correcting $t_{\mathrm{b}}$ blocks of $0$-deletions with $\ell$-limited-magnitude and $s$ adjacent transpositions. In this subsection, we will specify the efficient encoding and decoding function based on the code $\cC(n,t_{\mathrm{b}},
\ell,s)$ presented in Theorem~\ref{thm:bchblocks}.

\subsubsection{Efficient Encoding}

Before providing the efficient systematic encoding algorithm, we now introduce a mapping function proposed in \cite{knuth1986efficient} by Knuth for encoding balanced sequences efficiently. A balanced sequence is a binary sequence with an equal number of $0$s and $1$s, which will be used for identifying the information bits and redundant bits in our proposed code.
\iffalse
\begin{lemma}[Knuth\cite{knuth1986efficient}]
    Given the input $\bx\in\Sigma_2^k$, let the function $\bs':\Sigma_2^k\rightarrow \Sigma_2^n$ such that $\bs'(\bx)\in\Sigma_2^n$ is a balanced sequence, where $n=k+\log k$.
\end{lemma}
\fi

\begin{definition}
   For any $\bx\in\Sigma_2^k$, define the function $\bs:\Sigma_2^k\rightarrow \Sigma_2^{n}$ such that $\bs(\bx)\in\Sigma_2^{n}$ whose first bit is $1$ and $\bs(\bx)_{[2:n]}$ is a balanced sequence with $\lceil(n-1)/2\rceil$ $0$s and $\lfloor(n-1)/2\rfloor$ $1$s, where $n=k+\log k+1$.
\end{definition}

Next, we define a function that converts a non-binary string into a binary one.
\begin{definition}
For any $\bu\in\Sigma_p^k$, define the function  $\bb:\Sigma_p^k\rightarrow \Sigma_2^{n}$ such that $\bb(\bu)_{[i\cdot\lceil\log p\rceil+1:(i+1)\cdot\lceil\log p\rceil]}$ is the binary form of $u_i$, where $n=k\cdot\lceil\log p\rceil $.    
\end{definition}

An adjacent transposition can be regarded as two substitutions, hence the maximum number of deletions and substitutions in the $t_{\mathrm{b}}$ blocks of $0$-deletions with $\ell$-limited-magnitude and $s$ adjacent transpositions is $r=t_{\mathrm{b}}\ell+2s$.  The following lemma is used for correcting deletions, insertions and substitutions up to $r=t_{\mathrm{b}}\ell+2s$ in a binary sequence. 
\begin{lemma}[Sima,~Gabrys,~and Bruck\cite{sima2020optimalbinary}]\label{lem:binary_r}
 Let $t_{\mathrm{b}},\ell,s$ be constants with respect to $k$. There exist an integer $a\leq 2^{2r\log k+o(\log k)}$ and a function $f_{r}:\Sigma_2^k\rightarrow\Sigma_{2^{\cR_{r}(k)}}$, where $\cR_{r}(k)=O(r^4\log k)$ such that $\{(\bx,\bb(a),\bb(f_{r}(\bx)\bmod a)):\bx\in\Sigma_2^k\}$ can correct deletions, insertions and substitutions up to $r=t_{\mathrm{b}}\ell+2s$. 
\end{lemma}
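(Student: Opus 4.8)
The lemma is the systematic hashing result of Sima, Gabrys, and Bruck, so the proof is essentially theirs \cite{sima2020optimalbinary}; here I sketch the \emph{syndrome-compression} argument that underlies it and that I would reproduce. The plan has three parts. First, invoke a \emph{non-optimal} labelling: the function $f_{r}\colon\Sigma_2^{k}\to\Sigma_{2^{\cR_{r}(k)}}$ of the lemma, with $\cR_{r}(k)=O(r^{4}\log k)$, viewed as assigning to each $\bx$ a nonnegative integer below $2^{\cR_{r}(k)}$, chosen so that $f_{r}(\bx)\neq f_{r}(\by)$ whenever $\bx\neq\by$ have intersecting radius-$r$ edit balls --- equivalently, whenever $\bx$ is obtainable from $\by$ by at most $2r$ deletions, insertions, and substitutions. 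Such an $f_{r}$ is built by concatenating a constant number of gadgets: an indexing/run-structure layer robust to the index shifts that deletions cause, together with a bounded family of power-sum (higher-order Varshamov--Tenengol'ts) syndromes over a field of size polynomial in $k$; one then shows that a number of such syndromes polynomial in $r$ already separates any two strings of length at most $k+r$ within edit distance $2r$. This counting is the technical core of \cite{sima2020optimalbinary}, which I would cite rather than re-derive.

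Second, compress. Fix a source word $\bx\in\Sigma_2^{k}$. The number of $\by\neq\bx$ whose radius-$r$ edit ball meets that of $\bx$ is at most $c_{r}\,k^{2r}$ for a constant $c_{r}$, since every such $\by$ is reachable from $\bx$ by at most $2r$ edits and the number of edit scripts of that total weight on a string of length at most $k+r$ is $2^{O(r)}k^{2r}$. For each such $\by$ one has $0<|f_{r}(\bx)-f_{r}(\by)|<2^{\cR_{r}(k)}$, so at most $\cR_{r}(k)$ primes divide $f_{r}(\bx)-f_{r}(\by)$; a union bound shows at most $c_{r}\cR_{r}(k)k^{2r}$ primes are bad for $\bx$. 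By the prime number theorem there is a good prime $a=a(\bx)$ among the first $c_{r}\cR_{r}(k)k^{2r}+1$ primes, and since $c_{r}$ and $\cR_{r}(k)$ contribute only $o(\log k)$ it satisfies $a\le 2^{2r\log k+o(\log k)}$; here ``good'' means $f_{r}(\bx)\not\equiv f_{r}(\by)\bmod a$ for every $\by\neq\bx$ whose radius-$r$ ball meets that of $\bx$. Take the codeword to be $(\bx,\bb(a),\bb(f_{r}(\bx)\bmod a))$, whose last two blocks occupy $O(r\log k)$ bits.

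Third, decode. Given a received word, first recover the two short tails $\bb(a)$ and $\bb(f_{r}(\bx)\bmod a)$ exactly: being appended, shielding them against the at most $r$ edits costs only $O(r\log(r\log k))=o(\log k)$ further bits, so one may treat them as clean. Knowing $a$, list-decode the corrupted copy of $\bx$ to the polynomial-size set of all $\bz\in\Sigma_2^{k}$ from which the received prefix is reachable by at most $r$ edits; each such $\bz$ has its radius-$r$ ball meeting that of $\bx$. Discard every candidate $\bz$ with $f_{r}(\bz)\bmod a\neq f_{r}(\bx)\bmod a$. By the choice of $a$ the only survivor is $\bz=\bx$, so decoding is unique; the value $r=t_{\mathrm{b}}\ell+2s$ in the statement merely records that each of the $t_{\mathrm{b}}$ deletion blocks contributes at most $\ell$ deletions and each adjacent transposition at most two substitutions, as noted just before the lemma.

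The main obstacle is entirely in the first part: exhibiting the base labelling $f_{r}$ with redundancy only $O(r^{4}\log k)$ that separates all radius-$r$ confusable pairs, in particular controlling how deletions shift positions relative to the substitution and transposition sites and proving that polynomially many power-sum syndromes suffice for every pair within edit distance $2r$. Parts two and three are routine --- pigeonhole plus the prime number theorem for the compression, and a list-decode-then-filter step for decoding --- and since the lemma is quoted verbatim from \cite{sima2020optimalbinary}, in the write-up I would defer the first part to that reference and present only the compression and decoding outline if a self-contained account is wanted.
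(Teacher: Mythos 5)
This lemma is imported verbatim from the cited reference and the paper supplies no proof of it, so there is nothing internal to compare against; your sketch is an accurate account of the syndrome-compression argument of Sima, Gabrys, and Bruck (base labelling $f_r$ separating radius-$r$ confusable pairs, choice of a small modulus $a$ via a union bound over the $2^{O(r)}k^{2r}$ confusable strings and the prime number theorem, and decode-by-filtering after recovering the protected tails), and your reading of $r=t_{\mathrm{b}}\ell+2s$ matches the paper's remark preceding the lemma. Deferring the construction of $f_r$ to the reference, as you propose, is exactly what the paper does.
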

For simplicity, we define the redundant bits of the code in Lemma~\ref{lem:binary_r} as $g_r(\bx)=(\bb(a),\bb(f_{r}(\bx)\bmod a))\in\Sigma_2^{4r\log k+o(\log k)}$ for given $\bx\in\Sigma_2^k$, where $r=t_{\mathrm{b}}\ell+2s$. Besides, given the parameters $t_{\mathrm{b}}$, $\ell$ and $s$, we provide the definition of the redundant bits of BCH code in Lemma~\ref{lem:prime_bch}.
\begin{definition}
Let $p$ be the smallest prime larger than $t_{\mathrm{b}}\ell+4$. Given the input $\bx\in\Sigma_2^k$, denote $g(\bx)$ as the redundant bits such that $(\bx,g(\bx))$ is a codeword of the $p$-ary primitive narrow-sense $[n,k,2(t_{\mathrm{b}}+2s)+1]$-BCH codes in Lemma~\ref{lem:prime_bch}, where $n=k+\lceil2(t_{\mathrm{b}}+2s)(1-1/p)\rceil m$ and $n=p^m-1$.
\end{definition}

Assuming the input sequence is $\bc \in \Sigma_2^k$, let us consider $\bphi(\bc1)$ with length $r_{c} = \wt(\bc1)$. We can compute $\bc' = \bphi(\bc1) \bmod p \in \Sigma_{p}^{r_{c}}$, where $p$ is the smallest prime larger than $t_{\mathrm{b}}\ell+4$. Then, we append $\mathbf{0}^{k+1-r_c}$ at the end of $\bc'$ and define $\bar{\bc} = (\bc',\mathbf{0}^{k+1-r_c}) \in \Sigma_p^{k+1}$.

Next, we encode $\bar{\bc}$ using the $p$-ary primitive narrow-sense BCH code $\cC_p$ in Lemma~\ref{lem:prime_bch}. The code takes an input of length $k+1$ and has a minimum distance of $d = 2(t_{\mathrm{b}}+2s)+1$. The output consists of redundant bits denoted as $g(\bar{\bc})$. Notably, since the input length is $k+1$, the length of $g(\bar{\bc})$ should be $\lceil 2(t_{\mathrm{b}}+2s)(1-1/p) \rceil m + 1$ in order to satisfy $n-(k+1) < n-k \leq \lceil 2(t_{\mathrm{b}}+2s)(1-1/p) \rceil m$. Subsequently, we convert the redundant bits $g(\bar{\bc})$ into a binary sequence denoted as $\bb(g(\bar{\bc}))$ and further transform $\bb(g(\bar{\bc}))$ into a balanced sequence $\bs(\bb(g(\bar{\bc})))$. Additionally, we add two leading $1$s as protecting bits at the start of $\bs(\bb(g(\bar{\bc})))$, denoting the resulting sequence as $h_1(\bar{\bc}) = (1,1,\bs(\bb(g(\bar{\bc}))))$.

In addition, we need to protect the redundant bits $h_1(\bar{\bc})$. The approach is to apply the code from Lemma~\ref{lem:binary_r} to $h_1(\bar{\bc})$, as it can correct up to $t_{\mathrm{b}}\ell+2s$ deletions and substitutions. We denote the output as $g_{r}(h_1(\bar{\bc}))$. Then, we convert $g_{r}(h_1(\bar{\bc}))$ into a balanced sequence denoted as $\bs(g_{r}(h_1(\bar{\bc})))$ and repeat each bit of its bit $2t_{\mathrm{b}}\ell+3$ times. We define $h_2(\bar{\bc}) = \mathrm{Rep}_{2t_{\mathrm{b}}\ell+3}(\bs(g_{r}(h_1(\bar{\bc}))))$, where $\mathrm{Rep}_{k}\bx$ represents the $k$-fold repetition of $\bx$.

Finally, we obtain the output $\mathrm{Enc}(\bc)=(\bc, h(\bc))$, where $h(\bc)=(h_1(\bar{\bc}),h_2(\bar{\bc}))$.  The detailed encoding steps are summarized in the following Algorithm~\ref{alg:enc} and the encoding process is also illustrated in Fig.~\ref{fig:ill_encoding}. Given a sequence $\bc\in\Sigma_2^{k}$, Algorithm~\ref{alg:enc} outputs an encoded sequence $\mathrm{Enc}(\bc)\in\Sigma_2^N$ that is capable of correcting $t_{\mathrm{b}}$ blocks of $0$-deletions with $\ell$-limited-magnitude and $s$ adjacent transpositions.

\begin{figure}
    \centering
    \includegraphics[width=15cm]{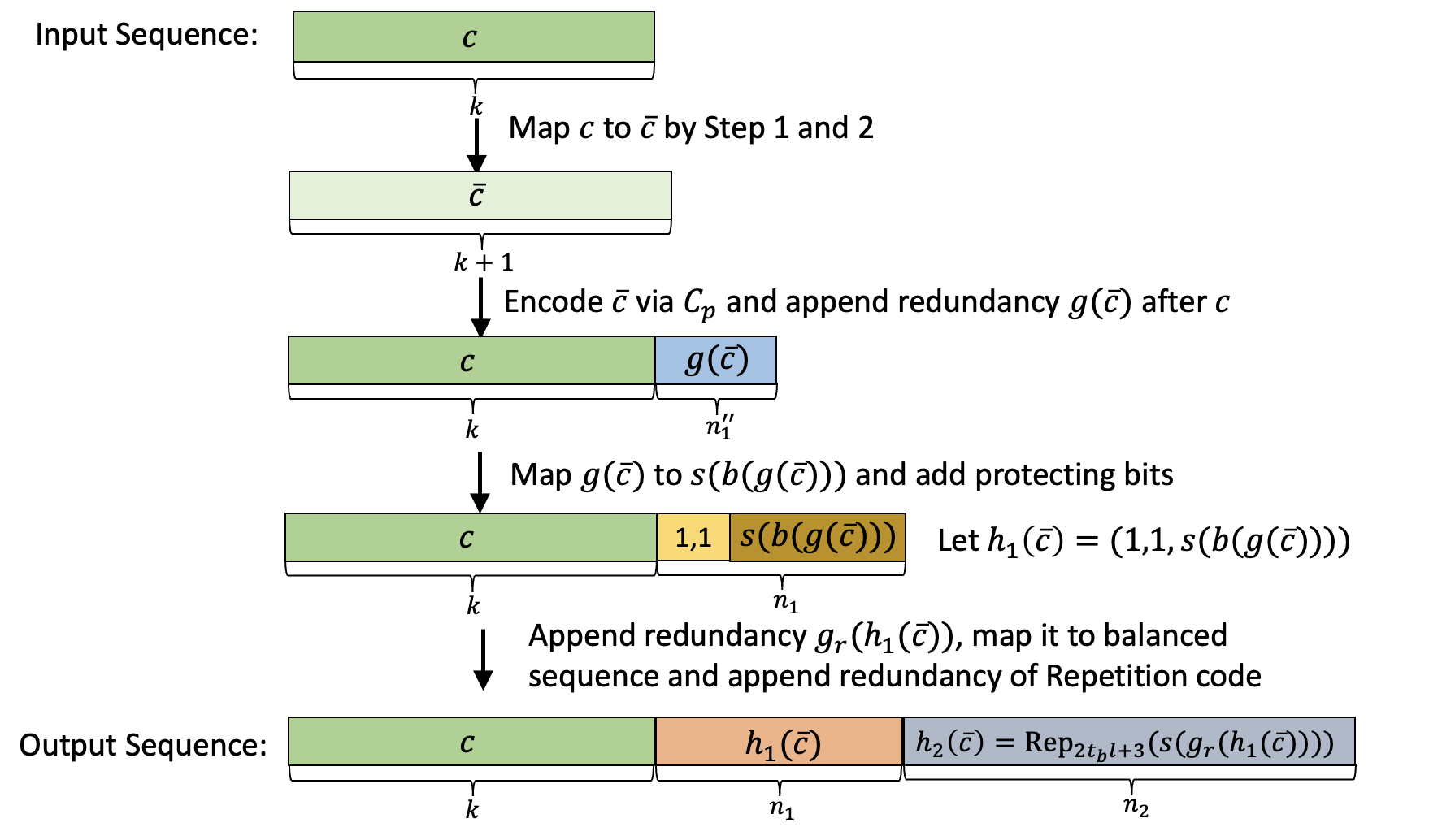}
    \caption{Illustration of the Encoding Algorithm} \label{fig:ill_encoding}
\end{figure}

\begin{algorithm}[h]\label{alg:enc}
\SetAlgoLined
\KwInput{$\bc\in \Sigma_2^k$}
\KwOutput{Encoded sequence $\mathrm{Enc}(\bc)\in\Sigma_2^N$}

\textbf{Initialization:}  Let $p$ be the smallest prime larger than $t_{\mathrm{b}}\ell+4$. 

\textbf{Step 1:} Append $1$ at the end of $\bc$ and get $\bphi(\bc1)$ with length $r_c=\wt(\bc1)$.

\textbf{Step 2:} Calculate $\bc'=\bphi(\bc1)\bmod p\in\Sigma_p^{r_c}$ and append $\mathbf{0}^{k+1-r_c}$ at the end of $\bc'$. Then, denote $\bar{\bc}=(\bc',\mathbf{0}^{k+1-r_c})$.

\textbf{Step 3:} Encode $\bar{\bc}$ via $\cC_{p}$ and output $g(\bar{\bc})$. Mapping 
$g(\bar{\bc})$ to balanced binary sequence $\bs(\bb(g(\bar{\bc})))$ and introduce protecting bits. Denote $h_1(\bar{\bc})=(1,1,\bs(\bb(g(\bar{\bc}))))$. 

\textbf{Step 4:} Protect $h_1(\bar{\bc})$ via $g_{r}$ and repeat each bit $2t_{\mathrm{b}}\ell+3$ times to obtain $h_2(\bar{\bc})=\mathrm{Rep}_{2t_{\mathrm{b}}\ell+3}\bs(g_{r}(h_1(\bar{\bc})))$. Denote the total redundant bits $h(\bc)=(h_1(\bar{\bc}),h_2(\bar{\bc}))$.
    
\textbf{Step 5:} Output $\mathrm{Enc}(\bc)=(\bc,h(\bc))\in\Sigma_2^N$.
 \caption{Encoding Algorithm}
 \end{algorithm}

%\rcomment{todo: How to design the boundary?}

%, input information symbols as $p$-ary sequence $\bz$ with length $k$ and encode $\bz$ to obtain $f(\bz)\in\Sigma_{p}^{n_r}$. Also, the encoded sequence $f(\bz)$ should satisfy $\wt(f(\bz))<n+1-n_r$.

%The next step is to map $f(z)\in\cC_{p}$ to $\bphi(\bx)\in\cC$, and then $\bx\in\cC$. Since $\bphi(\bx1)$ can be written as $\bphi(\bx1)=f(\bz)+p\cdot \ba$ and $\wt(\bphi(\bx1))=n+1-n_r$. Thus, $\ba$ is the sequence with the weight $\wt(\ba)=(n+1-n_r-\wt(f(\bz)))/p$. Then, we can output the sequence $\bphi(\bx1)$ via $\bphi(\bx1)=f(\bz)+p \cdot \ba$ and $\bx1=\bphi^{-1}(\bphi(\bx1))$.

\iffalse
\begin{lemma}\label{lem:encodingdense}
Given a sequence $\bc\in\Sigma_2^{k}$, Algorithm~\ref{alg:enc} outputs an encoded sequence $\mathrm{Enc}(\bc)\in\Sigma_2^N$ capable of correcting $t_{\mathrm{b}}$ blocks of $0$-deletions with $\ell$-limited-magnitude and $s$ adjacent transpositions.
\end{lemma}

\begin{proof}
Based on Theorem~\ref{thm:bchblocks}, we already have the non-systematic code for correcting $t_{\mathrm{b}}$ blocks of $0$-deletions with $\ell$-limited-magnitude and $s$ adjacent transpositions, where the code satisfies the constraint $\bphi(\bc1)\in\cC_{p}$. In the encoding algorithm, we output the redundant bits $g(\bar{\bc})$ via the encoder of $\cC_{p}$ in Step 3. The subsequent steps in Algorithm~\ref{alg:enc} make sure that the redundant bits $g(\bar{\mathbf{c}})$ can always be recovered during the decoding process.
\end{proof}
\fi

Therefore, the total redundant bits $h(\bc)=(h_1(\bar{\bc}),h_2(\bar{\bc}))$  of the codeword through this encoding process is shown as follows.
\begin{theorem}\label{thm:totalred}
Let $p$ be the smallest prime such that $p\ge t_{\mathrm{b}}\ell+4$. The total redundant bits $h(\bc)$ of the codeword $\mathrm{Enc}(\bc)\in\Sigma_2^N$ by given input $\bc\in\Sigma_2^k$ is
   \begin{equation*}
       N-k= \frac{\lceil2(t_{\mathrm{b}}+2s)(1-1/p)\rceil\cdot\lceil\log p\rceil}{\log p} \log(N+1)\\+O(\log\log N).
    \end{equation*}
%There always exists such $p\in[\ell+1,2(\ell+1))$.
\end{theorem}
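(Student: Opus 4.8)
The plan is to simply track the length contributed by each component of $h(\bc)=(h_1(\bar\bc),h_2(\bar\bc))$ in the order they are produced by Algorithm~\ref{alg:enc}, and then sum them. The dominant term comes from $h_1(\bar\bc)$; everything coming out of $h_2(\bar\bc)$ will be $O(\log\log N)$, and I would argue that separately at the end.

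First I would count the length of $h_1(\bar\bc)$. The BCH redundancy $g(\bar\bc)$ is a $p$-ary string of length $\lceil 2(t_{\mathrm b}+2s)(1-1/p)\rceil m+1$ where $m=\log_p(n+1)$ and $n$ is the BCH block length, which is $\Theta(k)=\Theta(N)$ up to constants; the ``$+1$'' comes from the input being $k+1$ rather than $k$ symbols, as explained just before the theorem. Converting to binary via $\bb(\cdot)$ multiplies the length by $\lceil\log p\rceil$, and the Knuth balancing map $\bs(\cdot)$ adds only $\log(\text{length})+1$ more bits, i.e.\ an $O(\log\log N)$ overhead, and the two leading protecting $1$'s are $O(1)$. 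So
\[
|h_1(\bar\bc)| = \lceil 2(t_{\mathrm b}+2s)(1-1/p)\rceil\cdot\lceil\log p\rceil\cdot\log_p(n+1) + O(\log\log N).
\]
Now I would rewrite $\log_p(n+1)=\log(n+1)/\log p$ and replace $\log(n+1)$ by $\log(N+1)$ at the cost of another $O(1)$ (since $n$ and $N$ differ only by lower-order additive terms, their logs differ by $o(1)$); this yields exactly the stated leading coefficient $\dfrac{\lceil2(t_{\mathrm b}+2s)(1-1/p)\rceil\cdot\lceil\log p\rceil}{\log p}\,\log(N+1)$.

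Next I would bound $|h_2(\bar\bc)|$ and show it is absorbed into $O(\log\log N)$. By construction $h_2(\bar\bc)=\mathrm{Rep}_{2t_{\mathrm b}\ell+3}\bigl(\bs(g_r(h_1(\bar\bc)))\bigr)$. Since $t_{\mathrm b},\ell,s$ are constants and $|h_1(\bar\bc)|=O(\log N)$, Lemma~\ref{lem:binary_r} gives $|g_r(h_1(\bar\bc))| = O\!\bigl(r\log|h_1(\bar\bc)|\bigr)+o(\log|h_1(\bar\bc)|)=O(\log\log N)$ (here the relevant ``$k$'' in that lemma is $|h_1(\bar\bc)|=\Theta(\log N)$, so its ``$\log k$'' becomes $\log\log N$); the balancing map $\bs$ adds only a further logarithmic term, and multiplying by the constant $2t_{\mathrm b}\ell+3$ keeps it $O(\log\log N)$. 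Adding $|h_1|+|h_2|$ and noting $N-k=|h(\bc)|$ gives the claimed identity.

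The main obstacle — really the only non-bookkeeping point — is making the ``$\log(n+1)$ vs.\ $\log(N+1)$'' substitution rigorous: one must verify that the BCH block length $n$ (chosen so that $\cC_p$ can encode the $p$-ary image $\bar\bc$ of length $k+1$, hence $n=p^m-1\ge k+1$ with $m$ minimal) satisfies $n=\Theta(N)$ and more precisely $\log(n+1)=\log(N+1)+O(1)$, which needs that $N=k+|h(\bc)|=k+O(\log N)$ and that rounding $k+1$ up to the next power of $p$ inflates it by at most a constant factor. All of this is standard, so the theorem follows; I would present the length count as a short displayed chain of equalities collecting the $O(\log\log N)$ slack at each map application.
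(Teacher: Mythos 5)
Your proposal is correct and follows essentially the same route as the paper: enumerate the lengths of $g(\bar\bc)$, $\bb(g(\bar\bc))$, $h_1(\bar\bc)$, $g_r(h_1(\bar\bc))$, and $h_2(\bar\bc)$ in order, identify the dominant term $\lceil 2(t_{\mathrm b}+2s)(1-1/p)\rceil\cdot\lceil\log p\rceil\cdot\log_p(\cdot)$ from the binarized BCH redundancy, and absorb the Knuth-balancing overhead and the entire repetition-protected block into $O(\log\log N)$. The only cosmetic difference is that the paper sidesteps your ``$\log(n+1)$ versus $\log(N+1)$'' worry by defining $m=\log_p(N+1)$ outright, which is justified exactly by the $N=k+O(\log N)$ observation you make.
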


\begin{proof}

    Let $m=\log_p(N+1)$, hence $N=p^m-1$.  The lengths of the redundant bits are as follows:
\begin{itemize}
    \item $n''_1$ is the length of $g(\bar{\bc})$: $n''_1=\lceil2(t_{\mathrm{b}}+2s)(1-1/p)\rceil m + 1$;
    \item $n'_1$ is the length of $\bb(g(\bar{\bc}))$: $n'_1=n''_1\cdot\lceil\log p\rceil$;
    \item $n_1$ is the length of $h_1(\bar{\bc})$: $n_1=n'_1+\log n'_1+3$;
    \item $n''_2$ is the length of $g_{r}(h_1(\bar{\bc}))$: $n''_2=4(t_{\mathrm{b}}\ell+2s)\log n_1+\log n_1$;
    \item $n'_2$ is the length of $\bs(g_r(h_1(\bar{\bc})))$: $n'_2=n''_2+\log n''_2+1$;
    \item $n_2$ is the length of $h_2(\bar{\bc})$: $n_2=(2t_{\mathrm{b}}\ell+3)n'_2$;
    
\end{itemize}
    Based on the above statement, we can see that $N-k=n_1+n_2$, where
    \begin{equation*}
        n'_1=(\lceil2(t_{\mathrm{b}}+2s)(1-1/p)\rceil m)\cdot\lceil\log p\rceil
    \end{equation*}
with $m=\log_{p}(N+1)$. Hence, we have
\begin{equation*}
        n'_1=\frac{\lceil2(t_{\mathrm{b}}+2s)(1-1/p)\rceil\cdot\lceil\log p\rceil}{\log p} \log(N+1)
    \end{equation*}
Since both $t_{\mathrm{b}}$, $p$ and $s$ are constants, then $\log n'_1=O(\log\log N)$ and $n_2=O(\log\log N)$.
Therefore, the total redundant bits of the codeword $\mathrm{Enc}(\bc)\in\Sigma_2^N$ given the input $\bc\in\Sigma_2^k$ is shown as the Theorem~\ref{thm:totalred}.
\end{proof}

\subsubsection{Decoding Algorithm}

Without loss of generality, let us assume that the encoded sequence $\mathrm{Enc}(\bc) \in \Sigma_2^N$ is transmitted through \textit{exactly} $t_{\mathrm{b}}$ blocks of $0$-deletions with $\ell$-limited magnitude and \textit{exactly} $s$ adjacent transpositions in the channel. As a result, we obtain the retrieved sequence $\bd \in \Sigma_2^{N-t_{\mathrm{b}}\ell}$. In this subsection, we will introduce the decoding algorithm for obtaining $\mathrm{Dec}(\bd) \in \Sigma_2^{k}$ given $\bd \in \Sigma_2^{N-t_{\mathrm{b}}\ell}$.

The error type is at most $t_{\mathrm{b}}$ blocks of $0$-deletions with $\ell$-limited-magnitude and $s$ adjacent transpositions, the number of $1$s in $\bd$ is the same as that of in $\mathrm{Enc}(\bc)$. To identify the start of redundant bits in $\bd$, we can count the number of $1$s starting from the end of the sequence since the redundant bits is a balanced string and the number of $1$s does not change through $0$-deletions and adjacent transpositions. Hence, we find the $\lfloor(n_2+2t_{\mathrm{b}}\ell+3)/2\rfloor$-th $1$ and $(\lfloor n_1/2+n_2/2\rfloor+t_{\mathrm{b}}\ell+3)$-th $1$s from the end of $\bd$ and denote their entries as $i_{r2}$ and $i_{r1}$, respectively. For the subsequence $\bd_{[i_{r2}:N-t_{\mathrm{b}}\ell]}$, considering that there are at most $t_{\mathrm{b}}\ell$ $0$ deletions and $s$ adjacent transpositions occurring in $\mathrm{Enc}(\bc)_{[N-n_2+1:N]}$, we can utilize the $(2t_{\mathrm{b}}\ell+3)$-fold repetition code to recover $\bs(g_r(h_1(\bar{\bc})))$. Then, we can obtain the redundant bits $g_r(h_1(\bar{\bc}))$.

For the subsequence $\bd_{[i_{r1}:i_{r2}-1]}$, there are also at most $t_{\mathrm{b}}\ell$ $0$-deletions and $2s$ substitutions occurring in $\mathrm{Enc}(\bc)_{[N-n_1-n_2+1:N-n_2]}$. The recovered redundant bits $g_r(h_1(\bar{\bc}))$ can be used to recover $h_1(\bar{\bc})$. Moreover, we remove the two protecting bits of $1$s from $h_1(\bar{\bc})$ and obtain $g(\bar{\bc})$ from $h_1(\bar{\bc})=\bs(\bb(g(\bar{\bc})))$.

Finally, we define $\bz=(\bphi(\bd_{[1:i_{r1}-1]},1),\mathbf{0}^{k+1-r_c})$ and $\bz'=\bz\bmod p$, where $r_c$ represents the length of $\bphi(\bd_{[1:i_{r1}-1]},1)$ and $k=N-n_1-n_2$. The decoding steps that follow are the same as those outlined in Algorithm~\ref{alg:decalg1}, with the input for Step 1 being $\bz'$. The decoding process of $\bz'$ is assisted by the recovered redundant bits $g(\bar{\bc})$ from the BCH code $\cC_{p}$. Prior to the final step of $\bphi^{-1}$, we need to remove $\mathbf{0}^{k+1-r_c}$ at the end. The main steps for decoding the retrieved sequence $\bd\in\Sigma_2^{N-t_{\mathrm{b}}\ell}$ are summarized in Algorithm~\ref{alg:dec2}.

\begin{algorithm}[h]\label{alg:dec2}
\SetAlgoLined
\KwInput{Retrieved sequence $\bd$}
\KwOutput{Decoded sequence $\mathrm{Dec}(\bd)\in\Sigma_2^k$}

\textbf{Initialization:}  Let $p$ be the smallest prime larger than $t_{\mathrm{b}}\ell+4$. 

\textbf{Step 1:} Find the $\lfloor(n_2+2t_{\mathrm{b}}\ell+3)/2\rfloor$-th $1$ and $(\lfloor n_1/2+n_2/2\rfloor+t_{\mathrm{b}}\ell+3)$-th $1$ from the end of $\bd$ and denote their entries as $i_{r2}$ and $i_{r1}$, respectively.

\textbf{Step 2:} Recover $\bs(g_r(h_1(\bar{\bc})))$ from $\bd_{[i_{r2}:N-t_{\mathrm{b}}\ell]}$ via the Repetition code and then get $g_r(h_1(\bar{\bc}))$.

\textbf{Step 3:} Recover $h_1(\bar{\bc})$ via $g_r(h_1(\bar{\bc}))$ and then obtain $h_1(\bar{\bc})$. Furthermore, obtain $g(\bar{\bc})$ from $h_1(\bar{\bc})=\bs(\bb(g(\bar{\bc})))$.

\textbf{Step 4:} 
Let $\bz=(\bphi(\bd_{[1:i_{r1}-1]},1),\mathbf{0}^{k+1-r_c})$ and $\bz'=\bz\bmod p$. Input $\bz'$ into Step 1 of Algorithm~\ref{alg:decalg1} and run the remaining steps of Algorithm~\ref{alg:decalg1}. $\bz'$ is decoded via the recovered redundant bits $g(\bar{\bc})$ of the BCH code $\cC_{p}$.
    
\textbf{Step 5:} Output $\mathrm{Dec}(\bd)$.

\caption{Decoding Algorithm}
\end{algorithm}

\subsubsection{Time Complexity}
\iffalse
For the encoding algorithm, it can be easily shown that the time complexity is dominated by the $p$-ary narrow-sense BCH code and the code in Lemma~\ref{lem:binary_r}, which is $O(t_{b}n\log n+(\log n)^{2(t_{\mathrm{b}}\ell+2s)+1})$.

For the decoding algorithm, the time complexity is also dominated by decoding the $p$-ary narrow-sense BCH code and decoding the code in Lemma~\ref{lem:binary_r}. Therefore, the total time complexity of decoding is $O(t_{b}n+(\log n)^{t_{\mathrm{b}}\ell+2s+1})$.
\fi

For the encoding algorithm, given  constants $t_{\mathrm{b}},\ell$ and $s$, each codeword is generated by following steps:
\begin{itemize}
\item First, given an input binary message string $\bc$, $\bar{\bc}$ is generated with a time complexity of $O(n)$.
\item Second, encode $\bar{\bc}$ using the $p$-ary narrow-sense BCH code $\cC_{p}$, which has a time complexity of $O((t_{\mathrm{b}}+2s)n\log n)$~\cite{roth1988encoding}.
\item Third, map the labeling function of $\bar{\bc}$ to a balanced binary sequence with a time complexity of $O(\log n)$~\cite{knuth1986efficient}.
\item 
Fourth, protect the redundant bits $h_1(\bar{\bc})$ using the code in Lemma~\ref{lem:binary_r} and convert it into a balanced sequence with time complexity of $O((\log n)^{2(t_{\mathrm{b}}\ell+2s)+1})$~\cite{sima2020optimalbinary}.
\end{itemize}
Therefore, the time complexity of the encoder time complexity is dominated by the encoding of the $p$-ary narrow-sense BCH code and the code in Lemma~\ref{lem:binary_r}, which is $O((t_{\mathrm{b}}+2s)n\log n+(\log n)^{2(t_{\mathrm{b}}\ell+2s)+1})$.

The decoding algorithm is comprised of reverse processes of the encoding steps. Therefore, it can be easily shown that the time complexity is primarily determined by decoding the $p$-ary narrow-sense BCH code and the code in Lemma~\ref{lem:binary_r}.
The decoding complexity of the $p$-ary narrow-sense BCH code is $O((t_{\mathrm{b}}+2s)n)$~\cite{roth1988encoding} and the decoding complexity of the code in Lemma~\ref{lem:binary_r} is $O((\log n)^{t_{\mathrm{b}}\ell+2s+1})$~\cite{sima2020optimalbinary}. Therefore, the total time complexity of decoding is $O((t_{b}+2s)n+(\log n)^{t_{\mathrm{b}}\ell+2s+1})$.

\section{Conclusion}\label{sec:conclusion}
In this paper, motivated by the errors in the DNA data storage and flash memories, we presented codes for correcting asymmetric deletions and adjacent transpositions. We first present three uniquely-decodable codes for different types of asymmetric deletions and adjacent transpositions. We then construct a list-decodable code for correcting asymmetric deletions and adjacent transpositions with low redundancy. Finally, we present the code for correcting $t_{\mathrm{b}}$ blocks of $0$-deletions with $\ell$-limited-magnitude and $s$ adjacent transpositions.

However, there still remain some interesting open problems.
 \begin{itemize}
     \item Construct codes that are capable of correcting \emph{symmetric} $t$ deletions and $s$ adjacent transpositions with low redundancy.
     \item Construct codes that are capable of correcting $t$ deletions/insertions + $k$ substitutions + $s$ adjacent transpositions.
     \item Construct codes for the Damerau--Levenshtein distance where the number of errors grows with the length of the code $n$ and is not restricted to constants $t$ and $s$.
 \end{itemize}

\bibliographystyle{IEEEtran}
\bibliography{references}

\begin{appendices}

\section{Proof of Lemma \ref{pro:upperbound}}\label{app:proof_uppersize}
In order to complete this proof, we require a few results concerning the approximate number of $\mathbf{0}^{i}1$ blocks in binary strings of length $n$ as $n\rightarrow \infty$, which appears as Claim~\ref{clm:typicalstring}. 

\begin{claim}[Lemma 3,~\cite{kovavcevic2019runlength}]\label{clm:typicalstring}
     In the asymptotic regime $n\rightarrow \infty$, binary strings of length $n$ contain $\sim$ $\lambda_{i}n$ blocks of $\mathbf{0}^{i}1$, where $\lambda_{i}=2^{-(i+2)}$.
\end{claim}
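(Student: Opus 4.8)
The plan is to read ``a block $\mathbf{0}^i 1$'' through the decomposition $\bx = 0^{u_1}10^{u_2}1\cdots 10^{u_{w+1}}$ underlying $\bphi$: such a block is a maximal run of zeros of length \emph{exactly} $i$ that is terminated by a $1$, i.e.\ an index $j\in\{1,\dots,w\}$ with $u_j=i$. Equivalently, counting these blocks amounts to counting those $1$'s that are immediately preceded by exactly $i$ zeros. I would draw $\bx$ uniformly from $\Sigma_2^n$ and let $N_i(\bx)$ denote the number of such blocks, writing $N_i=\sum_{k=i+1}^n \mathbbm{1}_{A_k}$, where $A_k$ is the event that a block $\mathbf{0}^i1$ terminates at position $k$, namely $x_k=1$, $x_{k-1}=\cdots=x_{k-i}=0$, and additionally $x_{k-i-1}=1$ whenever $k\ge i+2$ (the maximality constraint), with no left constraint when $k=i+1$ (the run abuts the start of the string).

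Then I would compute $\bbE[N_i]$ by linearity. For an interior position $k\ge i+2$ the event $A_k$ pins down $i+2$ coordinates (one terminating $1$, the $i$ zeros, and one preceding $1$), so $\pr[A_k]=2^{-(i+2)}$; the single boundary term $k=i+1$ pins down only $i+1$ coordinates and contributes $2^{-(i+1)}$. Summing gives
\begin{equation*}
\bbE[N_i]=2^{-(i+1)}+(n-i-1)\,2^{-(i+2)}=2^{-(i+2)}n+O(1),
\end{equation*}
so $\bbE[N_i]\sim 2^{-(i+2)}n=\lambda_i n$, which already establishes the claim at the level of the average string. The point worth emphasizing is that the factor $2^{-(i+2)}$ --- rather than the naive $2^{-(i+1)}$ one would get for an arbitrary occurrence of $i$ zeros followed by a $1$ --- comes precisely from the maximality requirement, which forces the extra preceding $1$.

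To upgrade ``on average'' to ``a typical string contains $\sim\lambda_i n$ blocks,'' I would add a second-moment estimate. The indicators $\mathbbm{1}_{A_k}$ and $\mathbbm{1}_{A_{k'}}$ constrain the coordinate windows $\{k-i-1,\dots,k\}$ and $\{k'-i-1,\dots,k'\}$, which are disjoint --- hence independent --- as soon as $|k-k'|\ge i+2$; only the $O((i+1)n)$ pairs with $|k-k'|\le i+1$ can be correlated, and each covariance is bounded by $1$ in absolute value, so $\mathrm{Var}(N_i)=O(n)$. Chebyshev's inequality then gives $N_i=\lambda_i n+O_P(\sqrt n)$, i.e.\ $N_i/n\to\lambda_i$ in probability, which is the asserted typical behavior.

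I expect the only real obstacle to be bookkeeping at the two boundaries --- the left edge of the string (handled above by dropping the preceding-$1$ constraint when $k=i+1$) and the right edge (a trailing zero-run $0^{u_{w+1}}$ is never counted because it has no terminating $1$, which is automatically respected by indexing blocks through their terminating $1$). These affect only $O(1)$ terms and the lower-order fluctuation, so they do not disturb the leading asymptotic $\lambda_i n$; everything else is a routine linearity-of-expectation and variance computation.
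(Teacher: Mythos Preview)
Your argument is correct: the linearity-of-expectation computation with the maximality constraint pinning down $i+2$ coordinates gives the right leading constant $2^{-(i+2)}$, the boundary terms are handled properly, and the bounded-range dependence yields the $O(n)$ variance needed for the concentration step via Chebyshev. The interpretation of a ``block $\mathbf{0}^i1$'' as an index $j\le w$ with $u_j=i$ in the $\bphi$-decomposition is exactly the one the paper needs in the application to Lemma~\ref{pro:upperbound}.

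As for comparison: the paper does not actually prove this claim. It is stated as Claim~\ref{clm:typicalstring} with a citation to \cite[Lemma~3]{kovavcevic2019runlength} and invoked as a black box in the sphere-packing argument. So there is no ``paper's own proof'' to compare against; your proposal simply supplies a self-contained probabilistic proof where the paper defers to the literature. The second-moment upgrade you include is a nice touch, since the downstream application (estimating $\Gamma_{\neq 0}$ and ${}_{\neq 0}\Gamma$ for a typical codeword) really wants the statement to hold for almost all strings, not merely on average.
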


We now proceed to the proof of Lemma~\ref{pro:upperbound}.
\vspace{0.5em}

\noindent \textit{Proof of Lemma~\ref{pro:upperbound}:}
We proceed with the sphere-packing method to obtain an asymptotic upper bound on the code size. Let $\cC\subseteq\Sigma_2^n$ be a code that is capable of correcting $t$ $0$-deletions and $s$ adjacent transpositions.

For a codeword $\bx\in\cC$ with weight $w=\wt(\bx)$, let $\Gamma_{\neq 0}(_{\neq 0}\Gamma)$ denote the number of $1$s followed (preceded) by a 0-run with a length not equal to 0. In the presence of $s$ adjacent transpositions, we observe that a left-shift of 0 ($10\rightarrow 01$) occurs at the $1$ followed by a 0-run with a length not equal to 0, and a right-shift of 0 ($01\rightarrow 10$) occurs at the $1$ preceded by a 0-run with a length not equal to 0. Consequently, the number of strings obtained from $\bx$ through $s^{-}$ left-shifts of 0 ($10\rightarrow 01$) and $s-s^{-}$ right-shifts of 0 ($01\rightarrow 10$) can be expressed as follows:
  \begin{equation}\label{eq:size_trans}
      \sum_{s^{-}=0}^{s}{\Gamma_{\neq 0}\choose s^{-}}{_{\neq 0}\Gamma\choose s-s^{-}}.
  \end{equation}

  Similarly, we can determine that the number of $0$-runs is given by $r_0=\Gamma_{\neq 0}$. Therefore, the minimum number of strings obtained from $\bx$ through $t$ $0$-deletions is at least~\cite{levenshtein1966binary,levenshtein1965binary}:
  \begin{equation}\label{eq:size_del}
      {r_0-t+1 \choose t}={\Gamma_{\neq 0}-t+1\choose t}.
  \end{equation}

  Combining \eqref{eq:size_trans} and \eqref{eq:size_del}, we can conclude that the number of strings obtained from $\bx$ through $t$ $0$-deletions and $s$ adjacent transpositions is at least
  \begin{equation}\label{eq:size_del&trans}
      \sum_{s^{-}=0}^{s}{\Gamma_{\neq 0}\choose s^{-}}{_{\neq 0}\Gamma\choose s-s^{-}}{\Gamma_{\neq 0}-t+1\choose t}
  \end{equation}

From Claim~\ref{clm:typicalstring}, we can deduce that $\Gamma_{\neq i}\sim_{\neq i}\Gamma\sim \lambda_{i}n$, where $\lambda_{i}=2^{-(i+2)}$. Consequently, we obtain $w\sim n/2$ and $\Gamma_{\neq 0}\sim_{\neq 0}\Gamma\sim n/4$. Therefore, \eqref{eq:size_del&trans} can be asymptotically approximated as follows, utilizing the fact that ${n \choose m}\sim \frac{n^m}{m!}$ as $n\rightarrow \infty$, for fixed $m$:
%\begin{equation}
\begin{align}
    &\sim \sum_{s^{-}=0}^{s}\frac{(n/4)^{s-}}{(s^{-})!}\frac{(n/4)^{s-s^{-}}}{(s-s^{-})!}\frac{(n/4)^{t}}{t!},\nonumber\\
    &\sim 2^{s}\frac{(n/4)^{s}}{s!}\frac{(n/4)^{t}}{t!}.
\end{align}
%\end{equation}

  Since $\cC$ is capable of correcting $t$ $0$-deletions and $s$ adjacent transpositions, we have the following: 
  \begin{equation*}
      2^n\gtrsim |\cC|\cdot \frac{(n/2)^{s}}{s!}\cdot\frac{(n/4)^{t}}{t!}.
  \end{equation*}

  Thus, we can conclude that an asymptotic upper bound on the maximal size of binary codes capable of correcting $t$ $0$-deletions and $s$ adjacent transpositions is given by:
  \begin{equation}
      \cM_{t,s}(n)  \lesssim \frac{2^n}{n^{s+t}}\cdot s!\cdot t!\cdot 2^{s+2t}.
  \end{equation}

\end{appendices}

\end{document}